\documentclass[letterpaper,12pt]{amsart}
\textwidth=16.00cm 
\textheight=22.00cm 
\topmargin=0.00cm
\oddsidemargin=0.00cm 
\evensidemargin=0.00cm 
\headheight=0cm 
\headsep=0.5cm

\textheight=630pt

\usepackage{setspace}
\usepackage{latexsym,array,delarray,amsthm,amssymb,epsfig,color}
\usepackage{graphics,graphicx, caption, comment, subcaption}
\usepackage{longtable,tikz}
\usepackage{todonotes}
\usepackage{enumerate}
\usepackage{hyperref}
\usepackage{hhline}

\usetikzlibrary{calc,intersections,through,backgrounds}

\pgfdeclarelayer{background}
\pgfsetlayers{background,main}



\theoremstyle{plain}
\newtheorem{thm}{Theorem}[section]
\newtheorem{lemma}[thm]{Lemma}
\newtheorem{prop}[thm]{Proposition}
\newtheorem{cor}[thm]{Corollary}
\newtheorem{conj}[thm]{Conjecture}
\newtheorem*{thm*}{Theorem \ref{thm:main}}
\newtheorem*{lemma*}{Lemma}
\newtheorem*{prop*}{Proposition}
\newtheorem*{cor*}{Corollary}
\newtheorem*{conj*}{Conjecture}

\theoremstyle{definition}
\newtheorem{defn}[thm]{Definition}
\newtheorem{ex}[thm]{Example}

\newtheorem{alg}[thm]{Algorithm}

\theoremstyle{remark}
\newtheorem*{rmk}{Remark}


\newcommand{\rr}{\mathbb{R}}
\newcommand{\R}{\mathbb{R}}

\newcommand{\ff}{\mathbb F}



\newcommand{\calc}{\mathcal{C}}

\newcommand{\LL}{\mathcal{L}}

\newcommand{\C}{\mathcal C}
\newcommand{\U}{\mathcal U}
\newcommand{\D}{\mathcal{D}}
\newcommand{\Z}{\mathcal{Z}}
\newcommand{\X}{\mathcal{X}}
\newcommand{\Y}{\mathcal{Y}}


\newcommand{\ind}{\mbox{$\perp \kern-5.5pt \perp$}}

\DeclareMathOperator*{\supp}{supp}

\tikzstyle{vertex} = [fill,shape=circle,node distance=80pt]
\tikzstyle{edge} = [fill,opacity=.5,fill opacity=.5,line cap=round, line join=round, line width=50pt]
\tikzstyle{elabel} =  [fill,shape=circle,node distance=30pt]


\newcommand{\E}{\mathcal{E}}
\newcommand{\Gonered}{\Gamma_{1_{red}}}
\newcommand{\Goneblue}{\Gamma_{1_{blue}}}
\newcommand{\Gtwored}{\Gamma_{2_{red}}}
\newcommand{\Gtwoblue}{\Gamma_{2_{blue}}}

\begin{document}

\title[Neural ideals and stimulus space visualization]{Neural ideals and stimulus space visualization}
\author{Elizabeth Gross}
\author{Nida Kazi Obatake}
\author{Nora Youngs}

             \email{elizabeth.gross@sjsu.edu}
              \email{nida.kazi@sjsu.edu} 
              \email{nyoungs@hmc.edu}
               
              \address{Department of Mathematics and Statistics, One Washington Square,  San Jos\'{e} State University, San Jos\'{e}, CA, 95192-0103, USA}
              \address{Department of Mathematics, Harvey Mudd College, 301 Platt Boulevard, Claremont, CA, 91711-5901}
              
           


\maketitle

\begin{abstract} 

A neural code $\C$ is a collection of binary vectors of a given length n that record the co-firing patterns of a set of neurons. Our focus is on neural codes arising from place cells, neurons that respond to geographic stimulus. In this setting, the stimulus space can be visualized as subset of $\R^2$ covered by a collection $\mathcal U$ of convex sets such that the arrangement $\mathcal U$ forms an Euler diagram for $\C$.   There are some methods to determine whether such a convex realization $\mathcal U$ exists; however, these methods do not describe how to draw a realization. In this work, we look at the problem of algorithmically drawing Euler diagrams for neural codes using two polynomial ideals: the neural ideal, a pseudo-monomial ideal; and the neural toric ideal, a binomial ideal.  In particular, we study how these objects are related to the theory of piercings in information visualization, and we show how minimal generating sets of the ideals reveal whether or not a code is $0$, $1$, or $2$-inductively pierced.
\end{abstract}


\section{Introduction}

In 2014, the Nobel Prize in Medicine or Physiology was awarded to John O'Keefe and  his team for their 1971 discovery of place cells \cite{OD71}.  A {\it place cell} is a neuron that codes a distinct region in an animal's environment called a {\it place field}. That is, if the animal is in a place field, the associated place cell fires; otherwise it is silent. Such neurons are believed to be an essential part of the navigation system and spatial memory.

The firing activity of a population of neurons over time results in a set of co-firing patterns, which can be stored using binary vectors, or \emph{codewords}.  Each codeword indicates the set of neurons that were firing together during some time window.  A set $\C\subset\{0,1\}^n$ of codewords on $n$ neurons is called a {\it combinatorial neural code}; the descriptor ``combinatorial" is commonly used since the precise details of neural spiking and timing are discarded, leaving only discrete co-firing patterns.  For a description of how neuronal firing data may be discretized, see \cite{CI08}. 
  
Each codeword in a combinatorial neural code $\C$ is associated with the set of neurons it represents; that is, given $c\in \C\subset\{0,1\}^n$, we associate $c$ with $Z_c:=\supp(c) = \{i\in [n] \, | \, c_i =1\}$.  If the neurons in question are known to be place cells, then a codeword $c$ of co-firing place cells indicates that the neurons in $Z_c$ have overlapping place fields.

Place fields can be approximated by convex sets in $\R^2$, for example, see \cite[Figure 1]{MRC15}.  Given an arrangement of convex subsets of $\R^2$ representing place fields, we can easily extract the associated neural code by considering the various zones in the arrangement.  That is, given a collection of sets $\U = \{U_1,...,U_n\}$ with each $U_i\subset\R^2$ a convex set, the code associated to $\U$ is  $$\C(\U) = \{c\in \{0,1\}^n \, | \, \big(\bigcap_{i\in Z_c} U_i\big)\backslash \big(\bigcup_{j\notin Z_c} U_j\big)\neq \emptyset\},$$ as illustrated in Figure \ref{fig:codecorr}.  We define $\bigcap_{i\in \emptyset} U_i = \R^2$ and refer to $U_i$ as the place field of neuron $i$.

\begin{figure}[h] 
   \centering
   \includegraphics[width=2in]{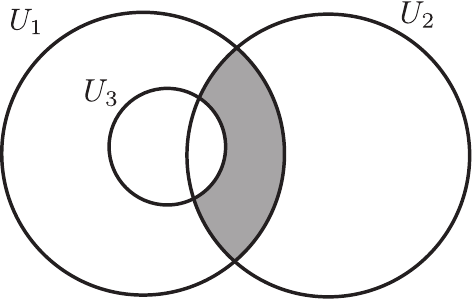} 
   \caption{An arrangement of three sets $\U = \{U_1,U_2,U_3\}$; the codeword associated to the shaded region is $c=110$ and $Z_c=\supp(c) = \{1,2\}$.  Here, the associated code is $\C(\U) = \left\{000,100, 010, 110, 101,111\right\}$.}
   \label{fig:codecorr}
\end{figure}

 The inverse problem is more difficult: given a particular neural code $\C$ presumed to come from place cells, can we find a set of convex subsets in $\R^2$ which would, as place fields,  exhibit $\C$ as its associated code?  If such a collection of convex sets exists, the code is called \emph{convexly realizable in $\R^2$}.  Previous work \cite{neural_ring, MRC15} has considered the question of determining whether or not a neural code is convexly realizable in $\R^2$ from the viewpoint of convex geometry and algebraic topology.  Once it is determined that a code is convexly realizable, however, it is not yet known how to algorithmically construct a realization.  

Since a neural code can be viewed as a set of relationships between $n$ sets, realizations of neural codes are \emph{Euler diagrams}, which have been studied since the 1700's \cite{Ham1860}; Venn diagrams are examples of Euler diagrams.  Thus the motivating question at the center of this work is: \emph{Given a neural code, how does one draw its corresponding Euler diagram?} While Euler diagrams using convex sets can occasionally be drawn by hand with some creativity, drawing Euler diagrams using convex sets automatically has been challenging. However, techniques have been proposed in the field of information visualization \cite{FH02}, \cite{Cho07}, \cite{RZF08}, \cite{SAA09}, including one particularly efficient method using the theory of piercings \cite{SZHR11}.  Specifically, in \cite{SZHR11} Stapleton \emph{et al.}  give a polynomial time algorithm to draw a realization using circles if the code is \emph{inductively pierced}.  We give a precise definition of inductively pierced in Section \ref{sect:defnsandnotation}, however we note here that $0$, $1$, and $2$-inductively pierced codes are simple to intuit. For example, a code is  $1$-inductively pierced if there exists a realization using circles that can be drawn iteratively, where at each step a closed curve is added with the condition that the new curve can intersect at most one previously drawn curve.  

The existence of such a drawing algorithm as the one in \cite{SZHR11} means that experimentalists can produce stimulus space visualizations of inductively pierced codes.  Thus, in this paper, we focus on connecting the theory of piercings to computational algebraic geometry and developing techniques for checking whether a code is $k$-inductively pierced.  This follows the footsteps of algebraic statistics \cite{PS05, algStatBook}, which has used tools from commutative algebra and algebraic geometry to look at problems in computational biology for the past decade and a half. We take an ideal theoretic focus, examining two ideals, the \emph{neural ideal}, a pseudo-monomial ideal, and the \emph{toric ideal of a neural code}, a binomial ideal.

The neural ideal and its canonical form were first introduced in \cite{neural_ring}, and have been used to answer questions regarding the place field structure of a neural code, including set relationships \cite{neural_ring} and convexity \cite{MRC15}.  For a code $\C \subset \{0,1\}^n$,  the {\it neural ideal} $J_\C$ is defined as follows: 
$$J_\C = \langle \prod_{v_i = 1} x_i \prod_{v_j=0}(1-x_j) \ | \ v\in \{0,1\}^n\backslash \C\rangle \subseteq \ff_2[x_1,...,x_n]$$  

In \cite{neural_ring}, the authors show that the pseudo-monomials in the neural ideal correspond to set containments in realizations of the code:

\begin{prop}\label{prop:RFstructure} \cite[Lemma 4.2]{neural_ring} If $ \{U_1,...,U_n\}$ is a collection of sets with corresponding code $\C$, then the pseudo-monomial $\prod_{i\in \sigma} x_i \prod_{j\in \tau} (1-x_j)$ is in $J_\C$ if and only if $\left(\bigcap_{i\in \sigma} U_i \right)\subseteq \bigcup_{j\in \tau} U_j$.
\end{prop}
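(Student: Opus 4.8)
The plan is to unwind the definitions on both sides and reduce everything to a statement about where the polynomial $\prod_{i\in\sigma}x_i\prod_{j\in\tau}(1-x_j)$ vanishes. Recall that evaluating a function on $\ff_2^n$, the product $\prod_{i\in\sigma}x_i\prod_{j\in\tau}(1-x_j)$ takes the value $1$ at a point $v\in\{0,1\}^n$ precisely when $v_i=1$ for all $i\in\sigma$ and $v_j=0$ for all $j\in\tau$, and takes the value $0$ otherwise. So as a function on $\{0,1\}^n$ this pseudo-monomial is the indicator of the set of codewords $v$ with $\sigma\subseteq Z_v$ and $\tau\cap Z_v=\emptyset$.

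Next I would invoke the key algebraic fact underlying the neural ideal, namely that $J_\C$ is exactly the ideal of all polynomials in $\ff_2[x_1,\dots,x_n]$ that vanish on every codeword of $\C$ (equivalently, $J_\C + \langle x_i^2 - x_i\rangle$ is the vanishing ideal of $\C$ as a subset of $\ff_2^n$). This is standard from \cite{neural_ring}; granting it, the pseudo-monomial $\prod_{i\in\sigma}x_i\prod_{j\in\tau}(1-x_j)$ lies in $J_\C$ if and only if it vanishes at every $v\in\C$, which by the previous paragraph happens if and only if there is \emph{no} codeword $v\in\C$ with $\sigma\subseteq Z_v$ and $\tau\cap Z_v=\emptyset$.

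It then remains to translate this combinatorial condition on codewords into the geometric containment $\bigl(\bigcap_{i\in\sigma}U_i\bigr)\subseteq\bigcup_{j\in\tau}U_j$. Here I would use the defining relationship between $\C$ and the arrangement $\U$: a point in the plane lying in exactly the place fields indexed by a set $S$ contributes the codeword with support $S$, and conversely every codeword of $\C(\U)$ arises this way. A point $p\in\bigcap_{i\in\sigma}U_i$ that is \emph{not} in $\bigcup_{j\in\tau}U_j$ lies in some region with codeword $v$ satisfying $\sigma\subseteq Z_v$ and $\tau\cap Z_v=\emptyset$; conversely such a codeword yields such a point. Hence $\bigl(\bigcap_{i\in\sigma}U_i\bigr)\setminus\bigl(\bigcup_{j\in\tau}U_j\bigr)=\emptyset$ if and only if no such codeword exists in $\C$, which is exactly the condition obtained in the previous paragraph. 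Chaining the two equivalences gives the proposition.

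The only real content — and the step I would expect to be the main obstacle to present cleanly — is the algebraic fact that $J_\C$ captures \emph{precisely} the polynomials vanishing on $\C$; the forward direction (generators of $J_\C$ vanish on $\C$) is immediate, but the reverse requires the Nullstellensatz-type argument over $\ff_2$ together with the observation that the given pseudo-monomial generators already suffice. Since this is quoted from \cite{neural_ring}, in the write-up I would cite it rather than reprove it, and the remaining work is the bookkeeping of the two translations above, both of which are routine once the indicator-function description of the pseudo-monomial is in hand.
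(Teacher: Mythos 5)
The paper offers no proof of this proposition --- it is quoted directly from Lemma~4.2 of \cite{neural_ring} --- so the comparison is against the standard argument in that reference. Your three-step reduction, (a) the pseudo-monomial is the indicator of $\{v\in\{0,1\}^n : \sigma\subseteq Z_v,\ \tau\cap Z_v=\emptyset\}$, (b) membership of the pseudo-monomial in $J_\C$ is equivalent to its vanishing on $\C$, (c) vanishing on $\C$ is equivalent to the stated set containment, is exactly that route, and steps (a) and (c) are correct and routine as you say.

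The one point that needs repair is the algebraic fact you invoke for step (b). The claim that ``$J_\C$ is exactly the ideal of all polynomials in $\ff_2[x_1,\dots,x_n]$ that vanish on every codeword of $\C$'' is false, and it is not equivalent to the parenthetical version you offer: $x_i^2-x_i$ vanishes on every code, yet for $\C=\{0,1\}^n$ one has $J_\C=\langle 0\rangle$, so $x_i^2-x_i\notin J_\C$. In general $J_\C$ is the vanishing ideal only after adjoining the Boolean relations. What you actually need --- and what is true --- is the weaker claim that a \emph{pseudo-monomial} vanishing on all of $\C$ already lies in $J_\C$, and this requires no Nullstellensatz-type argument: insert the factor $\prod_{k\notin\sigma\cup\tau}\bigl(x_k+(1-x_k)\bigr)=1$ and expand to obtain the polynomial identity
\[
\prod_{i\in\sigma}x_i\prod_{j\in\tau}(1-x_j)\;=\;\sum_{v\,:\,\sigma\subseteq Z_v,\ \tau\cap Z_v=\emptyset}\rho_v .
\]
If the pseudo-monomial vanishes on $\C$, then by your step (a) every $v$ indexing this sum is a non-codeword, so every summand $\rho_v$ is a generator of $J_\C$ and the left-hand side lies in $J_\C$. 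The forward direction of (b) is, as you note, immediate since each generator $\rho_v$ with $v\notin\C$ vanishes on $\C$. With this substitution for the misstated fact, your proof is complete and coincides with the cited one.
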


This relationship between the neural ideal and place field structures lends itself naturally to answering questions regarding automatically drawing realizations.  In Section \ref{sect:canonicalform}, we make this connection concrete by giving necessary conditions on the canonical form for $k$-inductively pierced codes, and necessary and sufficient conditions on the canonical form for $0$- and $1$-inductively pierced codes. Once it is  determined that a code is inductively pierced, then the algorithm for automatically drawing Euler diagrams developed by Stapleton \emph{et al.} \cite{SZHR11} may be applied to draw a place field diagram for the neural code.

In addition to the neural ideal and its canonical form, there are other algebraic objects that can aid in place field visualization; in this manuscript, we introduce \emph{toric ideals of neural codes}.  In general, toric ideals are binomial ideals which have been well studied due to their underlying combinatorial structure \cite{MS05} and their expansive list of applications, including categorical data analysis \cite{DS98}, network modeling \cite{PRF}, evolutionary biology \cite{SturmSulli}, systems biology \cite{CDSS09}, integer programming \cite{GB}, geometric modeling \cite{CGS08}, and mathematical physics \cite{ABW14}.  
 
A toric ideal is most commonly defined in terms of an integer matrix, however, for this application, we will define it in terms of the neural code.  The connection to the standard definition of a toric ideal is straightforward, since we can treat each codeword as a column in an $n \times m$ matrix. Let $\mathcal C=\{c_1, \ldots, c_m\}$ be a neural code on $n$ neurons and define $\phi_\C$ as follows:
\begin{align*}
\phi_{\mathcal C}:\mathbb{K}[p_c\ |\ c\in\calc \setminus \{ 00\ldots00 \}] & \longrightarrow\mathbb{K}[x_i\ |\ i\in\{1,\ldots,n\}] \\  
p_c & \longmapsto\prod\limits_{i\in\supp(c)}^{}x_i.
\end{align*}  
The \emph{toric ideal of $\calc$}, denoted $I_\calc$, is the kernel of the map $\phi_{\mathcal C}$.  
 
The advantage of working with toric ideals is that there is software available (e.g. {\tt 4ti2} \cite{4ti2}) for working with these particular type of ideals that is interfaced with both {\tt Sage} \cite{Ste16} and {\tt Macaulay2} \cite{M2}.  Thus performing computations is straightforward and fast when $n$ is reasonable.  For example, finding the generators for a toric ideal of a code on 50 neurons consisting of 75 codewords took 0.059 seconds on a 2015 MacBook Air with a 2.2 GHz Intel Core i7 processor. 

In Section \ref{sect:toricideal}, we give degree bounds on the generators of $I_{\mathcal C}$ when $\mathcal C$ is $0$ or $1$-inductively pierced.  In fact, the toric ideals associated to $1$-inductively pierced codes form a class of toric ideals generated by quadratics, and thus these ideals would be interesting to study from a purely combinatorial commutative algebra viewpoint. 

In order to prove the main theorems from Section \ref{sect:toricideal}, we rely on the fact that toric ideals arising from 0-1 matrices are \emph{toric ideals of hypergraphs}, which have been studied in \cite{Vill}, \cite{PS14}, and \cite{GP13}.  In fact, we use the machinery for establishing degree bounds in \cite{GP13} to give the degree bound in Theorem \ref{thm:1piercings}.

The paper is organized as follows.  Section 2 focuses on the basic definitions and notation that provide the foundation for this paper.  In Section 2, we introduce Euler diagrams, $k$-piercings, and formally define what it means for a code to be $k$-inductively pierced.  We also introduce the neural ideal and its canonical form and toric ideals of neural codes.  We close Section 2 by reviewing the needed theory on toric ideals of hypergraphs. In Section 3,  we show how to detect $k$-piercings from the canonical form of the neural ideal, describe the canonical form structure of $k$-inductively pierced codes, and give necessary and sufficient conditions for $0$- and $1$-inductively pierced codes using the neural ideal.  We end Section 3 by giving an algorithm for finding piercing orderings for $1$-inductively pierced codes using the canonical form.  In Section 4, we show that a neural code is $0$-inductively pierced if and only if its toric ideal is the zero ideal.  We then show that the toric ideal of a $1$-inductively pierced code is generated by quadratics and give preliminary evidence for a stronger conjecture regarding $1$-inductively pierced codes.  In Section 5, we conclude by working through an example of a neural code with 17 neurons and 28 codewords using the methods described in Sections 3 and 4.





\section{Definitions and Notation}\label{sect:defnsandnotation}

\subsection{Euler diagrams and $k$-inductively pierced codes}
Let $\C\subset\{0,1\}^n$ be a code on $n$ neurons.  We will assume $\C$ contains the all-zeros word and all neurons fire at least once, that is, for each $i\in [n]$, there is at least one codeword $c\in \C$ with $c_i = 1$. In other words, we are assuming all place fields are non-empty.  

An \emph{Euler diagram} $d$ for $n$ sets is a collection of $n$ labeled simple, closed  curves ($\lambda_1$, $\lambda_2$, \ldots, $\lambda_n$) in $\rr^2$.  The interior of the curve $\lambda_i$ is a subset $U_i$ of $\rr^2$, i.e. $U_i = \text{int } \lambda_i$.  Denoting the boundary of $U_i$ as $\partial U_i$, we have that $\lambda_i = \partial U_i$. Non-empty intersections of the sets $U_1, \ldots, U_n$ and their complements $\bar U_1, \ldots, \bar U_n$ form regions called \emph{zones}; specifically, a set $Z$ is a \emph{zone} in diagram $d$ if $(\bigcap_{i\in Z} U_i )\cap (\bigcap_{j\notin Z} \bar U_j )$ is nonempty. An Euler diagram is said to be \emph{well-formed} \cite{SZHR11} if it satisfies the following conditions:
\begin{enumerate}
\item Each curve label is used only once.
\item All curves intersect generally (so curves intersect in finitely many points.)
\item A point in the plane is passed through at most 2 times by the curves in the diagram.
\item Each zone is connected. 
\end{enumerate}

Because we will focus on well-formed Euler diagrams in this manuscript and well-formedness requires each curve label to be used only once, we will use $\lambda_i$ to denote both the $i$th curve and the label of the $i$th curve.  We call a diagram \emph{convex} if each $U_i=\text{int} \lambda_i$ is convex. 

An \emph{abstract description} $\D=(\LL, \Z)$ of an Euler diagram $d$ is an ordered pair specifying the curve labels $\LL$ and the zones  $\Z \subseteq \mathcal{P}(\LL)$, where $\mathcal{P}(\LL)$ denotes the power set of $\LL$. We will assume $\emptyset \in \Z$ and if $\lambda \in \LL$, then there exists a $Z \in \Z$ such that $\lambda \in Z$. We will call an Euler diagram $d$ with abstract description $\D$ a \emph{realization} or \emph{drawing} of $\D$.

 Let $c \in \{0, 1\}^n$ and $Z_c = \text{supp c} \subseteq [n]$.  A neural code $\mathcal C$ on $n$ neurons corresponds naturally to the abstract description $\D_{\mathcal C} = ( [n], \Z_{\mathcal C})$ where $\Z_{\mathcal C} = \{ Z_c \ : c \in \mathcal C\}$. We call an abstract description $\D$ \emph{well-formed} if there exists a well-formed realization of $\D$, and we call a code $\C$ \emph{well-formed} if $\D_{\C}$ is well-formed.

We now describe two subsets of the power set $\mathcal{P}(\LL)$ that will be used in the definition of a $k$-piercing of an abstract description. Let $\D=(\LL, \Z)$ be an abstract description. Given $\lambda \in \LL$,  let $\X_{\lambda} \subset \Z$ be the set of all zones that contain $\lambda$:
$$ \X_{\lambda} = \{ Z \in \Z \ : \ \lambda \in Z \}.$$
Given $Z \in \Z$ and $\Lambda \subset \LL$ such that $Z \cap \Lambda = \emptyset$, let the \emph{$\Lambda$-cluster of $Z$}, denoted $\Y_{Z, \Lambda}$ be the set:
$$\Y_{Z, \Lambda} \ = \ \{ Z \cup \Lambda_i \ : \ \Lambda_i \subseteq \Lambda \}. $$

\begin{defn}\label{def:k-piercing} \cite{SZHR11}  Let $\D = (\LL,\Z)$ be an abstract description.  Let $\Lambda=\{ \lambda_1,...,\lambda_{k} \} \subseteq \LL$ be distinct curve labels.  Then $\lambda_{k+1} \in L$ is a \emph{k-piercing} of $\Lambda$ in $\D$ if there exists a zone $Z$ such that 
\begin{enumerate}
\item $\lambda_i\notin Z$ for each $i\leq k+1$
\item $\X_{\lambda_{k+1}}  = \Y_{ Z \cup \{\lambda_{k+1} \},  \Lambda}$, and 
\item $\Y_{Z, \Lambda} \subseteq \Z$.
\end{enumerate}
When the above conditions hold, we say that $\lambda_{k+1}$ is a $k$-piercing {\it identified} by the background zone $Z$.
\end{defn}

As we will focus primarily on 0- and 1-piercings in this paper, we now give a more detailed description and example of each type.
 
\begin{ex}[0-piercings]\label{def:0-piercing}
A $0$-piercing is a curve that intersects zero other curves in the diagram.   Let $\D=(\LL,\Z)$ be an abstract description.  Then $\lambda$ is a $0$-piercing in $\D$ if there exists a zone $z$ such that $\lambda \notin Z$, $\X_{\lambda}  = \{ Z \cup \{\lambda\} \}$, and $Z \cup \{\lambda \} \in \Z$.



As an example of a $0$-piercing, let $\C =\{000,100,010,101\}$.  Then \newline $\D_{\mathcal C} = \Big\{\big\{1,2,3\big\},\big\{\emptyset,\{1\},\{2\},\{1,3\}\big\}\Big\}$, and the label ``3" is a 0-piercing in $\D_{\mathcal C}$ identified by the zone $Z_{100}=\{1\}$.  Figure \ref{fig:0-piercingexample} illustrates the 0-piercing in terms of a place field diagram of $\C$. 

\begin{figure}[!ht] 
  \begin{subfigure}[b]{0.5\linewidth}
    \centering
    \includegraphics[width=0.7\linewidth]{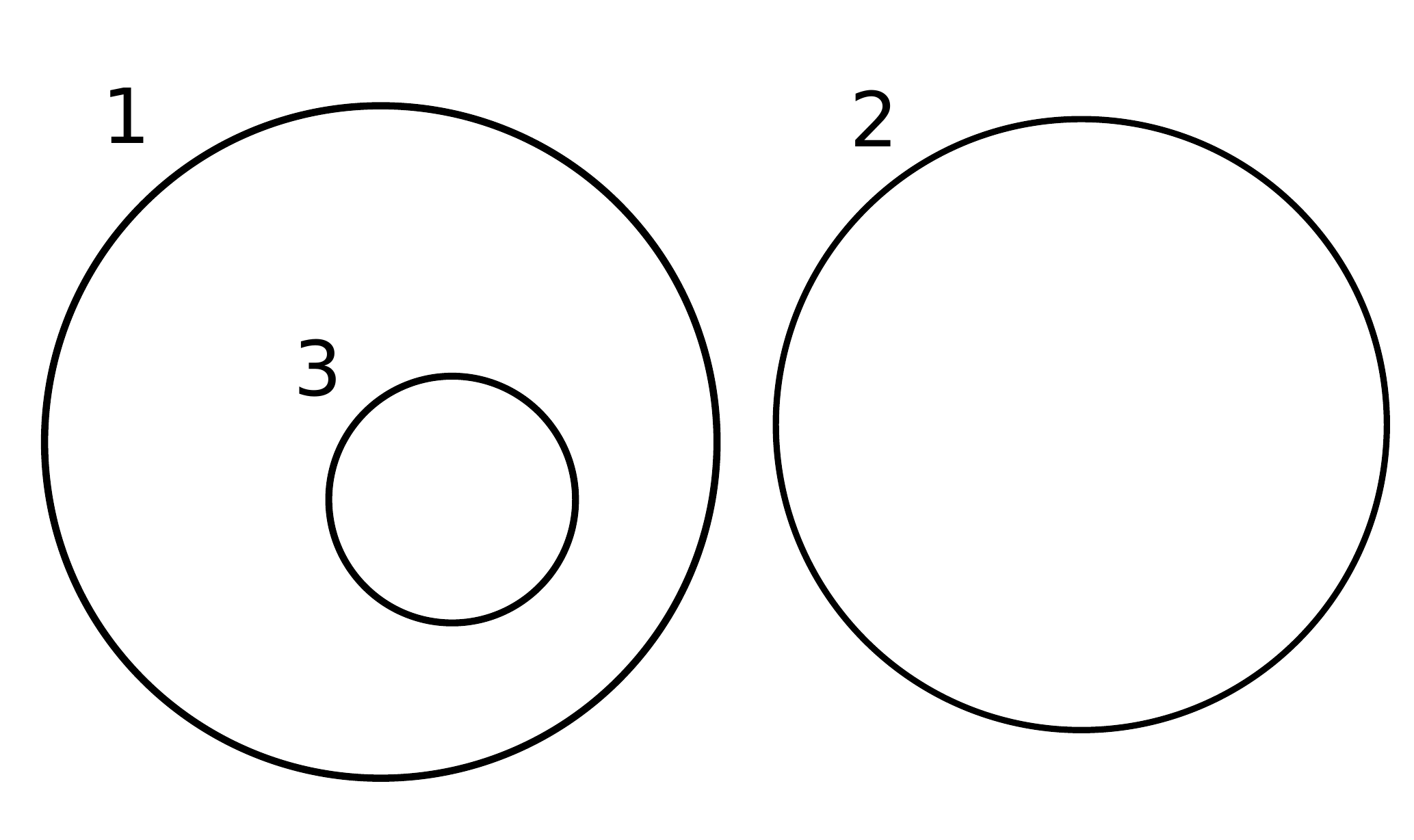} 
    \caption{A diagram with a 0-piercing.}
 \label{fig:0-piercingexample}
  \end{subfigure}
  \begin{subfigure}[b]{0.5\linewidth}
    \centering
    \includegraphics[width=0.7\linewidth]{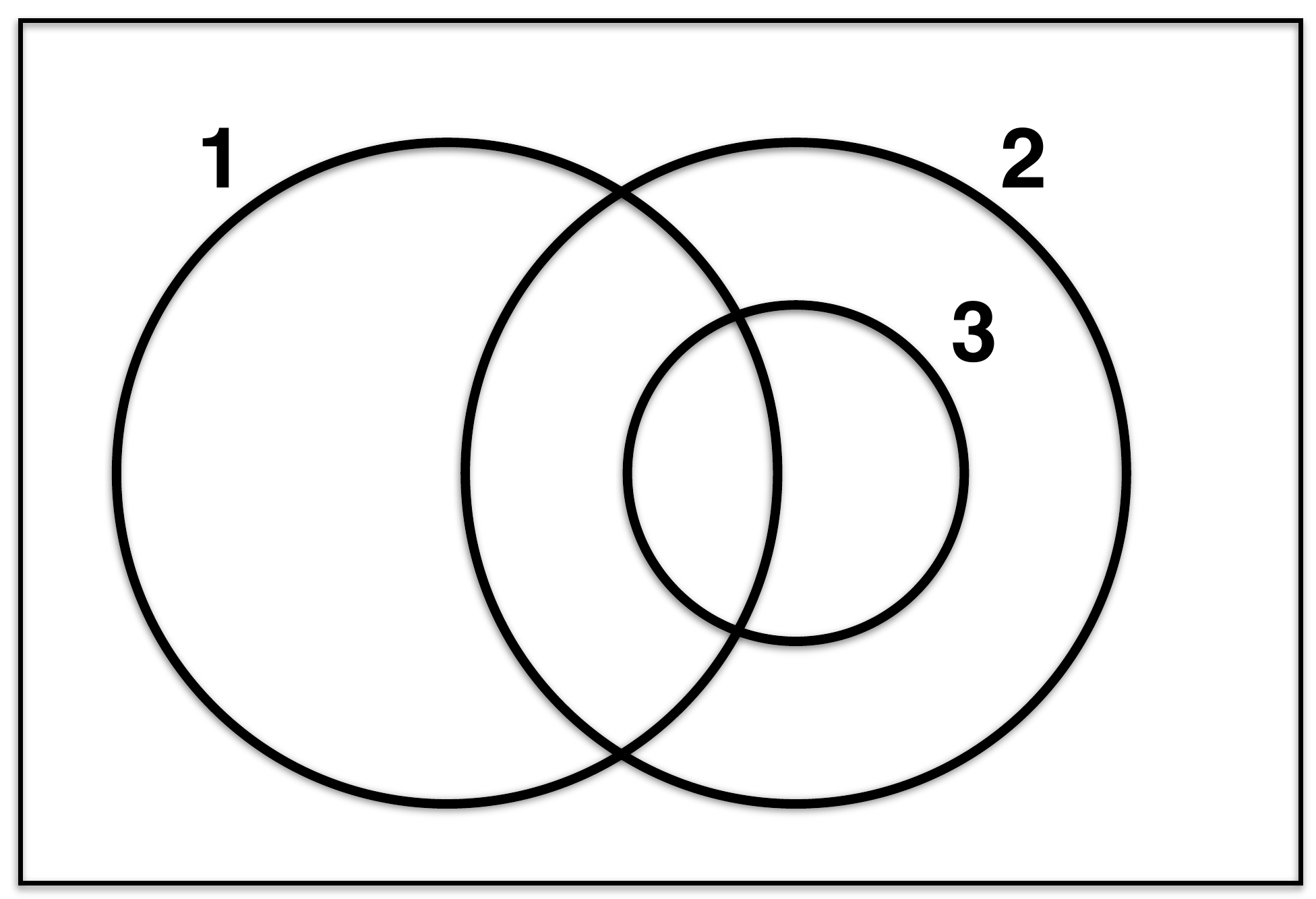} 
\caption{A diagram with a 1-piercing.}
\label{fig:1-piercingexample}
  \end{subfigure} 
  \caption{Examples of piercings.}
  \label{fig:examplesofpiercings} 
\end{figure}

\end{ex}


\begin{ex}[1-piercing]\label{ex:1-piercing} A 1-piercing intersects exactly one other curve in the diagram. Consider the diagram $d$ pictured in Figure \ref{fig:1-piercingexample}.  The abstract description $\D$ for this diagram has curve labels $\LL=\{1,2,3\}$ and zones $\Z=\big\{\emptyset,\{1\},\{2\},\{1,2\},\{2,3\},\{1,2,3\}\big\}$.  
Curve 3 is a 1-piercing of curve 1 identified by the zone $\{2\}\in \Z$ since
\begin{enumerate}
\item 3, 1 $\notin \{2\}$,
\item $\X_3=
\big\{\{2,3\},\{1,2,3\}\big\}
=\big\{(\{2\}\cup\{3\})\cup \Lambda_i\ |\ \Lambda_i\subseteq \{1\}\big\}=\Y_{\{2\}\cup\{3\},\{1\}}$, and
\item $\Y_{\{2\},\{1\}}=\big\{\{2\}\cup \Lambda_i\ |\ \Lambda_i\subseteq\{1\}\big\}=\big\{\{2\},\{1,2\}\big\}\subseteq \Z$.
\end{enumerate}
It should be noted that intersecting exactly one other curve is not sufficient to indicate a 1-piercing; in this example, curve 2 only intersects curve 1, but curve 2 is not a 1-piercing.
\end{ex}

In terms of drawings, we can think of a $k$-piercing as a curve that pierces $k$ other curves and splits $2^k$ zones.  These $2^k$ zones appear in the abstract description in the following way.

\begin{lemma}\label{lem:zonecount} Let $\D = (\LL,\Z)$ be an abstract description.  Let $\Lambda=\{ \lambda_1,...,\lambda_{k} \} \subseteq \LL$ be distinct curve labels.  If $\lambda_{k+1} \in \LL$ is a $k$-piercing of $\Lambda$ in $\D$ then there exist exactly $2^k$ elements of $\Z$ that contain $\lambda_{k+1}$, i.e. $|\X_{\lambda_{k+1}}| = 2^k$.
\end{lemma}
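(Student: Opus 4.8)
The plan is to unwind the definitions and reduce the statement to counting the elements of a single cluster $\Y_{W,\Lambda}$. By hypothesis $\lambda_{k+1}$ is a $k$-piercing of $\Lambda=\{\lambda_1,\dots,\lambda_k\}$ in $\D$, so it is identified by some background zone $Z$. Condition (2) of Definition~\ref{def:k-piercing} then gives the exact identity
\[
\X_{\lambda_{k+1}} \;=\; \Y_{Z\cup\{\lambda_{k+1}\},\,\Lambda}\;=\;\{\,(Z\cup\{\lambda_{k+1}\})\cup\Lambda_i \ : \ \Lambda_i\subseteq\Lambda\,\},
\]
so it suffices to show this set has cardinality $2^k$.

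The key observation is that the curve label being added, together with the background zone, is disjoint from $\Lambda$. Indeed, condition (1) of Definition~\ref{def:k-piercing} states $\lambda_i\notin Z$ for every $i\le k+1$; in particular $Z\cap\Lambda=\emptyset$, and also $\lambda_{k+1}\notin\Lambda$ since the labels $\lambda_1,\dots,\lambda_{k+1}$ are distinct. Hence $W:=Z\cup\{\lambda_{k+1}\}$ satisfies $W\cap\Lambda=\emptyset$.

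Now consider the map $f\colon \mathcal{P}(\Lambda)\to \Y_{W,\Lambda}$ defined by $f(\Lambda_i)=W\cup\Lambda_i$. It is surjective directly from the definition of the $\Lambda$-cluster. It is injective because $W\cap\Lambda=\emptyset$: if $W\cup\Lambda_i=W\cup\Lambda_j$, then intersecting both sides with $\Lambda$ recovers $\Lambda_i=\Lambda_j$. Therefore $f$ is a bijection, and $|\X_{\lambda_{k+1}}|=|\Y_{W,\Lambda}|=|\mathcal{P}(\Lambda)|=2^{|\Lambda|}=2^{k}$.

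There is essentially no hard step here: the argument is pure bookkeeping with set operations, and the only point requiring any care is the injectivity of $f$, which is exactly where the disjointness condition (1) of the definition of a $k$-piercing is used. I would present it in roughly the three short paragraphs above.
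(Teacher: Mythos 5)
Your proof is correct and takes the same route as the paper, which simply states that the lemma follows from condition (2) of Definition \ref{def:k-piercing}; you have filled in the counting of $\Y_{Z\cup\{\lambda_{k+1}\},\Lambda}$ via the bijection with $\mathcal{P}(\Lambda)$, correctly noting where condition (1) supplies the disjointness needed for injectivity.
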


\begin{proof}  The statement follows from the second condition in the definition of a $k$-piercing.
\end{proof}

In order to define what it means for an abstract description to be $k$-inductively pierced, we need to discuss the removal of piercing curves in terms of the abstract description.

\begin{defn}[Removal of a curve] Given an abstract description $\D=(\LL, \Z)$ with $\lambda \in \LL$, then we define
$$\D - \lambda = ( \LL \setminus \{ \lambda \}, \Z - \lambda ), $$
where $\Z - \lambda = \{ Z \setminus \{ \lambda \} \ : \ Z \in \Z \}.$
\end{defn}
When $\C$ is a neural code, we can similarly discuss the removal of a neuron.  We define 
$$\mathcal C - \lambda = \{ (c_1, \ldots, c_{\lambda -1}, \hat c_{\lambda}, c_{\lambda +1}, \ldots, c_n) \ : (c_1, \ldots, c_n) \in \mathcal C \}.$$
If we  consider $\mathcal  C$ as a $n \times m$ matrix, where $m = | \mathcal C |$, then we can obtain $\mathcal C - \lambda$ by deleting the $\lambda$th row.

\begin{defn}\label{def:inductivelypiercedabstractdescrip}  An abstract description $\D=(\LL, \Z)$ is \emph{$k$-inductively pierced} if  $\D$ has a 0-, 1-, $\ldots,$ or $k$-piercing $\lambda$ such that $\D - \lambda$ is $k$-inductively pierced.  We will call a code $\mathcal C$ $k$-inductively pierced if its abstract description $\D_{\mathcal C}$ is $k$-inductively pierced.
\end{defn}

Since we will be focused on diagrams drawn entirely with circles, we will restrict our attention to $0$-, $1$-, and $2$-inductively pierced descriptions (a 3-piercing cannot occur in $\R^2$ if all curves must be circles).  In \cite{SZHR11}, the authors introduce a subclass of 2-inductively pierced descriptions, called \emph{inductively pierced}.  They show that if $\D$ is an inductively pierced abstract description, then there exists a drawing $d$ of $\D$, composed entirely of circles, which can be drawn in polynomial time. Note that 2-inductively pierced is a weaker requirement on an abstract description than inductively pierced as defined in \cite{SZHR11}. Inductively pierced, however, implies 2-inductively pierced. 




\medskip

We end this section by noting that well-formed diagrams with no intersecting curves have $0$-inductively pierced descriptions.

\begin{prop} \label{prop:crossing} Let $\D$ be well-formed. An abstract description $\D$ is inductively $0$-pierced if and only if all curves in any well-formed realization of $\D$ do not intersect.
\end{prop}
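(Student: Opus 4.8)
The plan is to prove the two implications separately, using the definitions of $0$-piercing and the well-formedness conditions --- in particular condition (3), that no point is passed through by more than two curves, and condition (2), that curves intersect generally.

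First I would prove the contrapositive of the forward direction: if some well-formed realization $d$ of $\D$ has two curves $\lambda_a, \lambda_b$ that intersect, then $\D$ is not $0$-inductively pierced. The key observation is that $0$-inductive piercedness requires iteratively removing curves, each of which is a $0$-piercing \emph{at the time of removal}, and by Lemma \ref{lem:zonecount} a $0$-piercing $\lambda$ satisfies $|\X_\lambda| = 2^0 = 1$. Translating back to the diagram $d$: a curve $\lambda$ with $|\X_\lambda|=1$ is contained in exactly one zone, which geometrically forces $\lambda$ to be disjoint from every other curve (if $\lambda_i$ met $\lambda_j$ in $d$, the region $\mathrm{int}\,\lambda_i$ would be split by $\lambda_j$ into at least two zones both containing $i$, giving $|\X_i|\ge 2$; here well-formedness, especially connectedness of zones and general intersection, is what makes ``split into at least two zones'' rigorous). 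So at the first removal step, the curve removed must be disjoint from all others in $d$; removing it yields $d$ minus that curve, which is still a well-formed realization of $\D-\lambda$, and we induct. The upshot is that if $\D$ is $0$-inductively pierced, \emph{every} curve in $d$ is disjoint from all others --- i.e.\ no two curves intersect. I would need to be a little careful that removing a curve from a well-formed diagram keeps it well-formed and that its abstract description is $\D-\lambda$; this is essentially routine but should be stated.

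For the reverse direction, suppose no two curves in any well-formed realization of $\D$ intersect. Fix such a realization $d$ with curves $\lambda_1,\dots,\lambda_n$. Since the interiors $U_1,\dots,U_n$ are pairwise disjoint or nested (disjoint simple closed curves in $\R^2$ are either exterior to each other or one inside the other --- this is the Jordan curve theorem), they form a forest under inclusion. Take any curve $\lambda$ that is a leaf of this forest, i.e.\ $U_\lambda$ contains no other $U_i$ and $U_\lambda$ is contained in exactly one minimal $U_j$ (or in no other $U_j$). Let $Z$ be the zone of $d-\lambda$ that $U_\lambda$ sits inside; then in $d$ we have $\lambda\notin Z$, $\X_\lambda = \{Z\cup\{\lambda\}\}$, and $Z\cup\{\lambda\}\in\Z$, so $\lambda$ is a $0$-piercing identified by $Z$ (Example \ref{def:0-piercing}). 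Removing $\lambda$ leaves a well-formed realization of $\D-\lambda$ with still no intersecting curves, so by induction on $n$ the description $\D-\lambda$ is $0$-inductively pierced, hence so is $\D$. The base case $n=0$ (or $n=1$) is immediate.

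The main obstacle I anticipate is the geometric ``splitting'' argument in the forward direction: making precise that a curve meeting another curve in a well-formed diagram must lie in at least two zones. The cleanest route is to use well-formedness conditions (2) and (4) --- general position and zone connectedness --- together with the fact that two simple closed curves in general position that intersect do so in an even number $\ge 2$ of points, so $\mathrm{int}\,\lambda_i \cap \partial\lambda_j$ is nonempty and $\mathrm{int}\,\lambda_i$ is cut into $\ge 2$ connected pieces by $\lambda_j$; each piece lies in a distinct zone (by connectedness of zones), and all such zones contain $i$. A secondary minor point is verifying that curve removal preserves well-formedness and sends the abstract description of $d$ to $\D-\lambda$; I would dispatch this with a one-line remark since conditions (1)--(4) are inherited by a sub-collection of the curves.
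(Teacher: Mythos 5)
Your proposal is correct and follows essentially the same route as the paper: the forward direction rests on Lemma \ref{lem:zonecount} plus the observation that a crossing forces $|\X_{\lambda}|\geq 2$ (the paper derives a contradiction at the step where one of the two intersecting curves becomes a $0$-piercing, while you show every removed curve is disjoint from the rest, but the core idea is identical), and the reverse direction is the same induction selecting a minimal place field under inclusion. The technical points you flag (that curve removal preserves well-formedness and realizes $\D-\lambda$) are likewise left implicit in the paper.
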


\begin{proof}
 Suppose $\D$ is inductively 0-pierced.  Assume for the sake of contradiction that there exists a well-formed realization $d$ of $\D$ such that there exist two curves, $\lambda_1$ and $\lambda_2$, that intersect.  Since $\D$ is $0$-inductively pierced, we can remove $0$-piercings until $\lambda_1$ or $\lambda_2$ is a $0$-piercing of the remaining curves, thus, without loss of generality, let us assume $\lambda_1$ is a 0-piercing of $D$.  Condition (2) of Definition \ref{def:k-piercing} implies that there exists a zone $Z$, with corresponding codeword $z$, such that the curve $\lambda_1$ is contained entirely in $\cap_{i \in Z} U_i$.  Thus, we can zoom in on this crossing as illustrated in Figure \ref{fig:crossingofcurves}. 


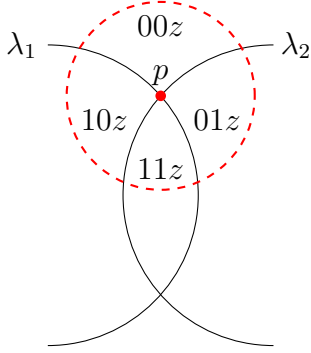
\begin{figure}[!ht]
\begin{center}
\begin{tikzpicture}
\draw[name path=1] (-1.5,0) arc (-90:90:2cm) node[anchor=east] {$\lambda_1$}; 
\draw[name path=2] (1.5,4) arc (90:270:2cm);
\node at (1.8,4) {$\lambda_2$};
\node at (-0.75,3) {$10{z}$};
\node at (0.75,3) {$01{z}$};
\node at (0,2.35) {$11{z}$};
\node at (0,4.25) {$00{z}$};
\path [name intersections={of = 1 and 2}];
	\coordinate[label=above:$p$] (p) at (intersection-2);
\fill[fill=red,inner sep=1pt,name intersections={of=1 and 2}]
    (intersection-2) circle (2pt) ;
\draw[red,thick,dashed] (p) circle (1.25cm);
\end{tikzpicture}
\end{center}
\caption{A closeup of a crossing of curves $\lambda_1$ and $\lambda_2$.}
\label{fig:crossingofcurves}
\end{figure}



From Figure \ref{fig:crossingofcurves}, we see that $|\X_{\lambda_1}|\geq 2$.  But since $\lambda_{1}$ is a 0-piercing, by Lemma \ref{lem:zonecount}, there exists exactly $2^0=1$ element of $\mathcal{Z}$ that contains $\lambda_{1}$, so $|\X_{\lambda_1}|=1$, a contradiction.  

For the converse, let $\D$ be a well-formed abstract description. Suppose for any well-formed realization $d$ of $\D$, no two curves intersect.   We will proceed by induction on $n$, the number of curves/curve labels.  The statement holds for $n=1$, since the only  code on one neuron is 0-inductively pierced.  Now suppose the statement holds for $n \leq r$, and let $n=r+1$. Since none of the curves $\lambda_1, \ldots, \lambda_n$ intersect, every pair of place fields $U_{i}$ and $U_{j}$ in $d$ are either disjoint or nested.  
Select a minimal field with respect to set inclusion, that is, select a place field $U_k$ such that for all $1 \leq i \leq n$ with $i \neq k$ either $U_i \cap U_k = \emptyset$ or $U_k\subset U_i$; note that since $n$ is finite and $\D$ is well-formed, such a place field must exist.  Then $\lambda_{k}$ is a 0-piercing of $\mathcal{D}$, and by the induction hypothesis, $\mathcal{D}- \lambda_k$ is $0$-inductively pierced.  Therefore, $\mathcal{D}$ is $0$-inductively pierced. 
\end{proof}

\subsection{The neural ideal and its canonical form}

As discussed in the Introduction, we will aim to identify $k$-inductively pierced codes using computational algebraic geometry. One way the neural code has been approached with an algebreo-geometric lens is through the neural ring and neural ideal \cite{neural_ring}.
To a given vector $v\in \{0,1\}^n$, we associate an indicator polynomial $\rho_v \in \ff_2[x_1,...,x_n]$ such that $\rho_v(v) =1$ and $\rho_v(v') = 0$ when $v'\neq v$.  The indicator polynomial is constructed as follows:
$$\rho_v = \prod_{v_i = 1} x_i \prod_{v_j=0}(1-x_j)$$  For example, if $n=3$ and $v= 101$, then $\rho_v = x_1x_3(1-x_2)$.  The indicator polynomial $\rho_v$ is a particular example of a {\it pseudo-monomial}, a polynomial of the form $\prod_{i\in \sigma} x_i \prod_{j\in \tau}(1-x_j)$ where $\sigma\cap\tau=\emptyset$.  Note that all monomials are necessarily pseudo-monomials by taking $\tau = \emptyset$.  For a given ideal $I$, we consider a pseudo-monomial $f$ to be {\it minimal} in $I$ if there are no pseudo-monomials $g\in I$ and $1\neq h\in \ff_2[x_1,...,x_n]$ such that $f=gh$; that is, $f$ is not a nontrivial multiple of another pseudo-monomial in $I$.

For any code $\C$ we define the {\it neural ideal} $J_\C$ as follows: $$J_\C = \langle \rho_v \ | \ v\in \{0,1\}^n\backslash \C\rangle$$  Note that for any $f\in J_\C$, we have $f(c) = 0$ for all $c\in \C$, i.e. all polynomials in $J_{\C}$ vanish on $\C$, but for any $v\notin \C$, there is at least one polynomial $g\in J_\C$ with $g(v)\neq 0$ and so $\C$ is precisely the variety of $J_\C$.  Since considering the full list of generators $\rho_v$ is often inefficient and opaque, we consider instead the {\it canonical form} of the neural ideal, $CF(J_\C)$, defined to be the set of minimal pseudo-monomials in $J_\C$.  For more about the neural ideal and the canonical form, see \cite{neural_ring}.

For our purposes, the most important property of the canonical form is its interpretation in terms of an arrangement of sets which realize the code in question.  We will make substantial use of the previously stated Proposition \ref{prop:RFstructure}(Lemma 4.2 from \cite{neural_ring}), which states that pseudo-monomials in $J_\C$ are in direct correspondence with set-theoretic information about the associated arrangement of sets $\mathcal U$ through the relation $$\prod_{i\in \sigma} x_i \prod_{j\in \tau}(1-x_j) \in J_\C\Leftrightarrow \bigcap_{i\in \sigma} U_i \subseteq \bigcup_{j\in \tau} U_j.$$ 
When $\tau=\emptyset$, this translates to  $\prod_{i\in \sigma} x_i \in J_\C$ if and only if $\bigcap_{i\in \sigma} U_i = \emptyset$. Since we always assume the all-zeros word is in the code, $J_\C$ will never contain a pseudo-monomial with $\sigma = \emptyset$. 

Importantly, these relationships hold regardless of the arrangement which is chosen.  That is, since the canonical form and the ideal $J_\C$ are properties of the code itself and not of the particular arrangement $\mathcal \U$ for which $\C = \C(\U)$, the presence of the pseudo-monomial $\prod_{i\in \sigma} x_i \prod_{j\in \tau}(1-x_j)$  in $J_\C$ indicates that  $\bigcap_{i\in \sigma} U_i \subseteq \bigcup_{j\in \tau} U_j$ in {\it any} arrangement $\U$ for which $\C = \C(\U)$.  

Beyond being minimal pseudo-monomials and often a condensed generating set for $J_\C$, the pseudo-monomials in $CF(J_\C)$ provide a minimal description of the interactions between the sets $U_i$.

\begin{ex} Let $\C =\{000,100,010,101\}$, the code from Example \ref{def:0-piercing}.  In this case, the canonical form is 
$$CF(J_\C) = \{ x_1x_2,\  x_2x_3, \ x_3(1-x_1) \}.$$ 
Using Proposition \ref{prop:RFstructure}, the elements give us the following place field relationships: $U_1\cap U_2 = \emptyset$, $U_2\cap U_3 = \emptyset$, and $U_3\subseteq U_1$.  While $U_1\cap U_2\cap U_3 = \emptyset$ is also true, corresponding to the fact that $x_1x_2x_3$ is also in $J_\C$,  the information $U_1\cap U_2=\emptyset$ implies the former fact and hence it is redundant information.
\end{ex}

 \subsection{The toric ideal of a neural code}\label{sec:background-toric-ideals}
The second algebraic object we will study is the toric ideal of $\mathcal C$. Let $\mathcal C=\{c_1, \ldots, c_m\}$ be a neural code on $n$ neurons and let $\C^*:= \calc \setminus \{ 00\ldots00 \}$, i.e. $\C^*$ is $\C$ with the all zeros word removed. Let $\mathbb{K}$ be a field and let $\phi_{\mathcal C}:\mathbb{K}[p_c\ |\ c\in\calc ^* \}]\longrightarrow\mathbb{K}[x_i\ |\ i\in\{1,\ldots,n\}]$ be the polynomial ring homomorphism defined by $$p_c\longmapsto\prod\limits_{i\in\supp(c)}^{}x_i.$$  
Recall that $I_{\calc}$, the toric ideal of $\calc$, is the kernel of the map $\phi_\calc$.  From standard results on toric ideals, the ideal $I_{\calc}$ is a prime ideal generated by binomials \cite{GB}.

\begin{ex}\label{ex:A1}
Let $\C =$ A1 = $\{100,010,001,110,101,011,111\}$, a neural code on 3 neurons.  We are using labeling of codes consistent with \cite{Curtoetal13}.  A place field diagram for this code is pictured in Figure \ref{fig:A1}. The toric ideal of this code $I_{\C} \subset \mathbb{K}[p_{100},p_{010},p_{001},p_{110},p_{101},p_{011},p_{111}]$ is generated by the following cubic and quadratics: $p_{111}-p_{100}p_{010}p_{001},\  p_{110}-p_{100}p_{010},\ p_{101}-p_{100}p_{001},$ and $p_{011}-p_{010}p_{001}$.

\end{ex}

\begin{ex}
 Let $\C =$ A2 = $\{100,010,110,101,111\}$, a neural code on 3 neurons.  A place field diagram for this code is pictured in Figure \ref{fig:A2}. The toric ideal of this code $I_{\C}$ is:
$$I_{\C}=\langle p_{111}-p_{010}p_{101}, p_{110}-p_{100}p_{010} \rangle.$$
\end{ex}

\begin{figure}[!ht] 
  \begin{subfigure}[b]{0.5\linewidth}
    \centering
    \includegraphics[width=0.7\linewidth]{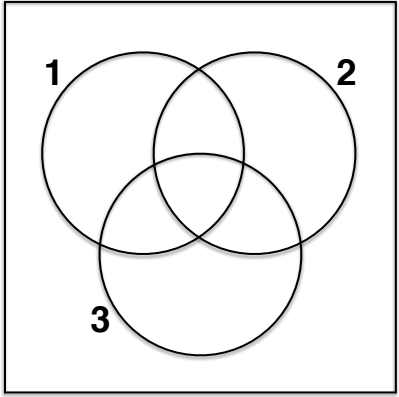} 
    \caption{A place field diagram of A1} 
    \label{fig:A1} 
  \end{subfigure}
  \begin{subfigure}[b]{0.5\linewidth}
    \centering
    \includegraphics[width=0.7\linewidth]{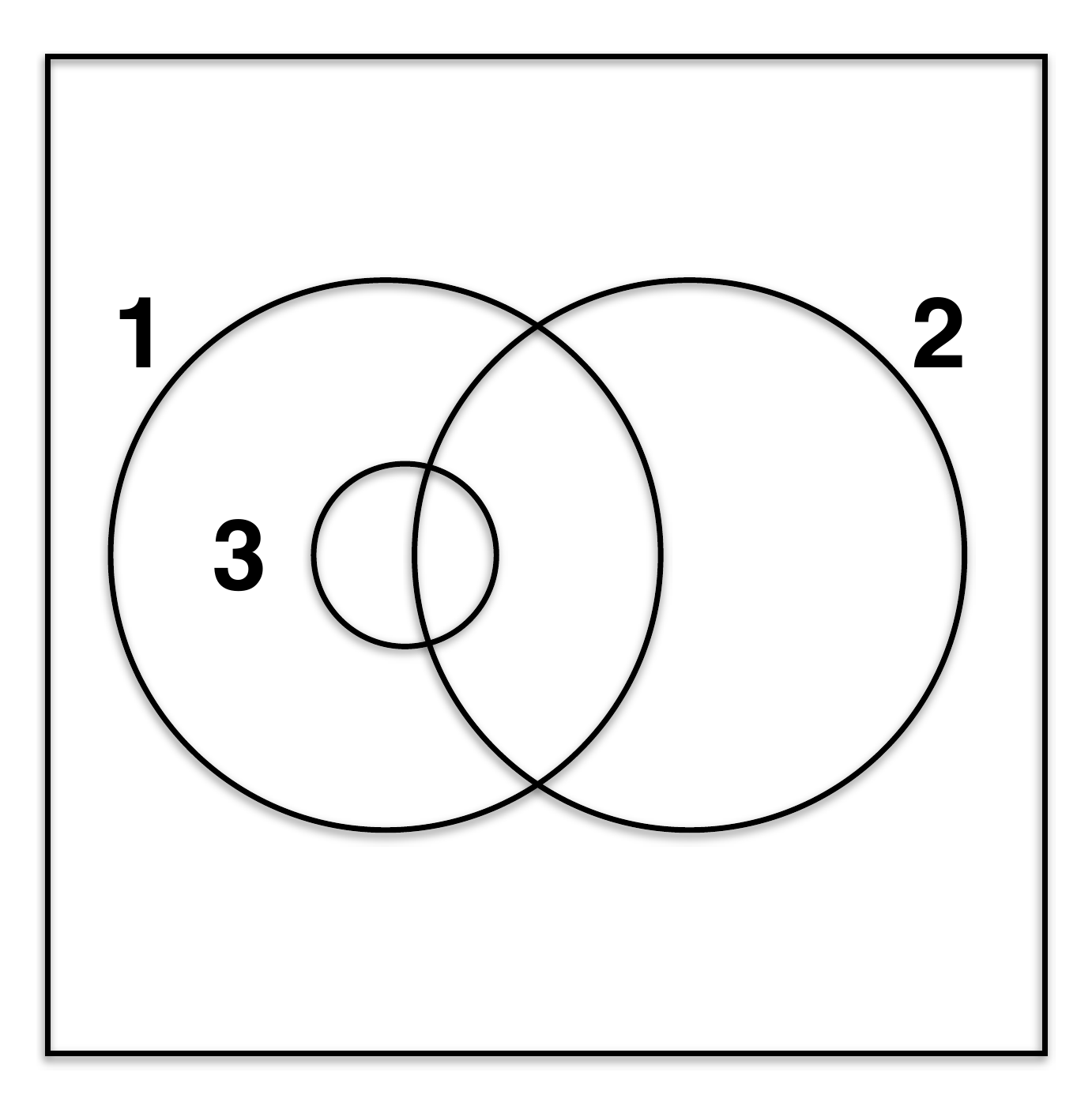} 
    \caption{A place field diagram of A2} 
    \label{fig:A2} 
  \end{subfigure} 
  \caption{Examples of place field diagrams of neural codes on three neurons}
  \label{fig:computingtoricideal} 
\end{figure}

Toric ideals have a nice combinatorial structure.  To visualize the information gathered from the toric ideal and to aid in proofs, we introduce the notion of a hypergraph, which is a generalization of a graph where edges can contain any number of vertices, not just two.  Toric ideals associated to hypergraphs have been studied in \cite{Vill,GP12,PS14}.

\begin{defn}\label{def:hypergraph} A \emph{hypergraph} $\mathcal{H}$ is a pair $\mathcal{H}=(V,E)$ where $V$ is a set of elements called nodes or vertices, and $E$ is a set of non-empty subsets of $V$ called hyperedges or edges.
\end{defn}

A code $\calc$ on $n$ neurons containing $m$ codewords can be visualized as a hypergraph with $n$ vertices and $m$ hyperedges.

\begin{defn}[Hypergraph of a neural code]
Given a code $\C \subset \{0,1\}^n$, the hypergraph associated to $\C$ is $\mathcal{H}_\C=(V, E)$ where $V=\{1,\ldots,n\}$ and $E=\{ Z_c\ |\ c \in \C^*\}$.  
\end{defn}

Every edge in $\mathcal H_{\C}$ corresponds to an indeterminate in $\mathbb{K}[p_c\ |\ c\in\calc ^* \}]$, and a collection, or multiset, of edges in $\mathcal H_{\C}$ corresponds to a monomial.  In order to encode a binomial of the form $p^u-p^v$, composed of a monomial $p^u$ with a positive sign and a monomial $p^v$ of a negative sign, we introduce edge colorings - specifically, bicolorings.




\begin{defn}[Bicoloring of a multiset of edges \cite{GP12}]
Let $\mathcal{E}$ be a multiset of edges from $\mathcal H$.  Partition $\mathcal{E}$ into two sub-multisets such that the multiset union of the two sub-multisets is equal to $\E$. Assign one color to each sub-multiset, say blue and red.  Then $\mathcal{E}=(\E_{\mathrm{blue}}, \E_{\mathrm{red}})=(B,R)$, where $B$ is the set of blue edges and $R$ is the set of red edges.  Such a coloring of $\mathcal{E}$ is called a \emph{bicoloring} of $\mathcal{E}$. 
\end{defn}

\begin{defn}[Balanced edge set {\cite{PS14}}]\label{def:balancededgesets} Let $\mathcal{E}=(B, R)$ be a multiset of bicolored edges from $\mathcal{H}=(V,E)$. For $v \in V$, let $\deg_B v$ be the number of edges in $B$ that contain $v$, counted with multiplicity.  Define $\deg_R v$ similarly. We say that $\mathcal{E}=(B,R)$ is \emph{balanced} with respect to the given bicoloring if for all $v \in V$ 
$$ \deg_B v = \deg_R v.$$
If $\mathcal{E}$ is balanced, we call $\mathcal{E}$ a \emph{balanced edge set} in $\mathcal{H}$.
\end{defn}


We say a binomial $f_\mathcal{E}$ arises from $\mathcal{E}$ if it can be written as
$$f_{\mathcal{E}} = \prod\limits_{e \in \mathcal{E}_{\mathrm{blue}}}^{}p_e -  \prod\limits_{e' \in \mathcal{E}_{\mathrm{red}}}^{}p_{e'}.$$ Every binomial in $I_\C$ arises from a balanced edge set in the hypergraph $\mathcal{H}_\C$ in this manner (see \cite{PS14} and \cite{GP13}).  


\begin{defn}[Primitive balanced set \cite{GP13}]
A balanced edge set $\mathcal{E}=(B,R)$ is \emph{primitive} with respect to $\mathcal H$ if there does not exists another balanced edge set $\mathcal{E'}=(B',R')$ in $H$ such that $B ' \subsetneq  B$ and $R' \subsetneq R$. \end{defn}

Primitive balanced edge sets in $\mathcal H _{\C}$ correspond to primitive binomials in $I_{\C}$.  A binomial $p^u-p^v \in I_{\C}$ is \emph{primitive} if there exists no other $p^{u'}-p^{v'} \in I_{\C}$ such that $p^{u'} | p^{u}$ and $p^{v'} | p^{v}$.  In particular, since $I_{\C}$ is prime, primitivity of a binomial $p^u-p^v \in I_{\C}$ implies the supports of $p^u$ and $p^v$ are disjoint. The set of all primitive binomials is a generating set of $I_{\C}$, thus, the set of all binomials arising from primitive balanced edge sets of $\mathcal H_{\C}$ generate $I_{\C}$.

\begin{ex}
As an example, consider the code $\C=B1$=$\{000,100,010,001,110,011\}$.  We can visualize the information from $\calc$ using the hypergraph illustrated in Figure \ref{ref:B1}.

 \begin{figure}[!ht]
 \begin{center}
 \includegraphics[width=2.5in]{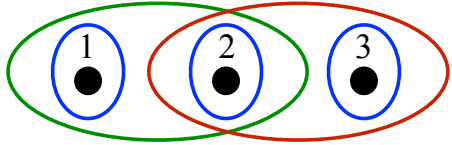}

 \caption{Hypergraph of neural code B1 
 }
 \label{ref:B1}
 \end{center}
 \end{figure}




By coloring the edges in the hypergraph, we see that there are at least two balanced edge sets: each of vertex 1 and vertex 2 are contained in a single blue edge and in a single green edge.  The blue edge around vertex 1 corresponds to the codeword 100, the blue edge around vertex 2 corresponds to the codeword 010, and the green edge around vertices 1 and 2 corresponds to the codeword 110.  Additionally, each of the vertices 2 and 3 are contained in a single blue edge and in a single blue edge.  In this case, the red edge corresponds to the codeword 011.  The generators of the toric ideal of B1 can be read off the diagram as $p_{110}-p_{100}p_{010}$ and $p_{011}-p_{010}p_{001}$.  

\end{ex}

Since hypergraphs combinatorially encode binomials in $I_{\C}$, we can obtain a degree bound on a minimal generating set of $I_{\C}$ by noting special properties of $\mathcal H_{\C}$.

\begin{defn}
Let $\mathcal E = (B, R)$ be a balanced edge set of $\mathcal H$. Let $\Gamma_1$ and $\Gamma_2$ be balanced edge sets of $\mathcal H$.  A multiset $S$ with support in $\mathcal H$ is a \emph{proper splitting set of $\E$ with decomposition $(\varGamma_1, S, \varGamma_2)$} if $S=\Gamma_{1_{red}} \cap \Gamma_{2_{blue}}$, $S \neq \Gonered, \Gtwoblue$, $\E \cup S = (\Gamma_1 \cup \Gamma_2) \setminus S$, and the following coloring conditions hold $\Goneblue, \Gtwoblue \subseteq B \cup S$ and $\Gonered, \Gtwored \subseteq R \cup S$.
\end{defn}

\begin{rmk}  When we are working with multisets, as in this section, we will interpret $\cup$, $\cap$, and $\subseteq$ in terms of their multiset definitions as described in Section 2 of \cite{GP12}.
\end{rmk}

We use the following degree bound theorem in Section 4.

\begin{thm}  \cite[Theorem 5.1]{GP13} \label{thm:combinatorialdegreebound}
\label{thm:nonuniform}  Let $\mathcal C$ be a code with corresponding hypergraph $\mathcal H_{\C}$.  The toric ideal $I_{\C}$ is generated by quadratics if for every primitive balanced edge set $\E$ of $H_{\C}$ with $|\E_{blue}| \geq |\E_{red}|$ and $|\E_{blue}|=n>2$, there exists a proper splitting set $S$ of $\E$ with decomposition $(\varGamma_1, S, \varGamma_2)$ where $|\varGamma_{i_{blue}}|, |\varGamma_{i_{red}}| < n$ for $i=1,2$.
\end{thm}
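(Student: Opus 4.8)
The plan is to show that the ideal $K$ generated by all binomials of $I_\C$ whose two terms each have degree at most $2$ equals $I_\C$. Since $I_\C$ is generated by the binomials $f_\E$ arising from primitive balanced edge sets $\E$ of $\mathcal H_\C$, it is enough to prove $f_\E\in K$ for every such $\E$, and I would argue by induction on $n:=|\E_{blue}|$, where, exchanging the two colors if necessary, we assume $|\E_{blue}|\ge|\E_{red}|$. The base case $n\le 2$ is immediate, since then $f_\E$ is a binomial of $I_\C$ whose two terms have degree at most $2$, hence lies in $K$ by definition. (There are no primitive balanced edge sets of degree $\le 1$ producing a nonzero binomial, because a degree-$\le 1$ binomial in $I_\C$ would force $\phi_\C$ to identify two variables or a variable with $1$, which cannot happen since distinct codewords have distinct nonempty supports.)

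For the inductive step, let $\E=(B,R)$ be a primitive balanced edge set with $|B|=n>2$, and apply the hypothesis to obtain a proper splitting set $S$ with decomposition $(\varGamma_1,S,\varGamma_2)$, where $|\Goneblue|,|\Gonered|,|\Gtwoblue|,|\Gtwored|<n$. Writing $p^{X}=\prod_{e\in X}p_e$ for a multiset $X$ of edges (all multiset operations being as in \cite[\S2]{GP12}, per the Remark above), the crux is the \emph{splitting identity}
\[
f_\E \;=\; \pm\, p^{\,\Gtwoblue\setminus S}\, f_{\varGamma_1}\;\pm\; p^{\,\Gonered\setminus S}\, f_{\varGamma_2}.
\]
To derive it, expand the right-hand side: the two ``middle'' monomials are $p^{\,\Gonered\cup(\Gtwoblue\setminus S)}$ and $p^{\,(\Gonered\setminus S)\cup\Gtwoblue}$, and the condition $S=\Gonered\cap\Gtwoblue$ makes these equal, so they cancel; the condition $\E\cup S=(\varGamma_1\cup\varGamma_2)\setminus S$ together with the coloring containments $\Goneblue,\Gtwoblue\subseteq B\cup S$ and $\Gonered,\Gtwored\subseteq R\cup S$ then force the surviving monomials to be exactly $p^{B}$ and $p^{R}$, i.e. they force $\Goneblue\cup\Gtwoblue=B\cup S$ and $\Gonered\cup\Gtwored=R\cup S$ as multisets.

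Granting the splitting identity, it remains to see $f_{\varGamma_1},f_{\varGamma_2}\in K$. Each is a binomial of $I_\C$ of degree (the larger of its two terms) strictly less than $n$, so it suffices to invoke the standard fact that every binomial $g\in I_\C$ lies in the ideal generated by $\{f_{\E'} : \E' \text{ a primitive balanced edge set of } \mathcal H_\C \text{ with }\deg f_{\E'}\le\deg g\}$, together with the outer inductive hypothesis that $f_{\E'}\in K$ whenever $\deg f_{\E'}<n$. The standard fact is itself an induction on $\deg g$: factoring out the greatest common monomial divisor of the two terms (which keeps us in $I_\C$ since $I_\C$ is prime) reduces us to $g=f_{(B',R')}$ for a balanced edge set $(B',R')$ with $B'\cap R'=\emptyset$; if this is primitive we are done, and otherwise a nonempty proper balanced sub-multiset $(B_1,R_1)$ gives the telescoping $f_{(B',R')}=\pm p^{\,B'\setminus B_1}f_{(B_1,R_1)}\pm p^{\,R_1}f_{(B'\setminus B_1,\,R'\setminus R_1)}$ into two binomials of strictly smaller degree, to which induction applies. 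This completes the induction, so $I_\C=K$ and $I_\C$ is generated by quadratics.

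The step I expect to be the main obstacle is the splitting identity: extracting from the three bookkeeping conditions defining a proper splitting set the precise multiset equalities $\Goneblue\cup\Gtwoblue=B\cup S$ and $\Gonered\cup\Gtwored=R\cup S$ needed for exactly $p^{B}-p^{R}$ to survive after the cross-terms cancel, which requires care with the multiset conventions. If a given accounting instead produces an identity with an extra factor, of the form $p^{S}f_\E=\pm p^{\,\Gtwoblue}f_{\varGamma_1}\pm p^{\,\Gonered}f_{\varGamma_2}$, one removes $p^{S}=\prod_{e\in S}p_e$ using that $I_\C$ is a prime toric ideal containing no monomial --- so $p^{S}$ is a nonzerodivisor modulo $I_\C$ --- or, most robustly, by recasting the whole argument as a statement about connectivity of the fibers of $\phi_\C$ under degree-$\le 2$ binomial moves, where the standard cancellation lemma for lattice-point fibers eliminates the common factor.
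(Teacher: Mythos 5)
This statement is imported: the paper cites it as \cite[Theorem 5.1]{GP13} and supplies no proof of its own, so there is no in-paper argument to compare yours against. Your reconstruction is correct and follows the same splitting-set induction that underlies the cited result: the identity $f_\E = \pm\, p^{\Gtwoblue\setminus S} f_{\Gamma_1} \pm\, p^{\Gonered\setminus S} f_{\Gamma_2}$, with the cross-terms cancelling because $S=\Gonered\cap\Gtwoblue$ forces both middle monomials to equal $p^{(\Gonered\cup\Gtwoblue)\setminus S}$, followed by descent on $n$. You correctly identify the two points that need care. First, the surviving monomials equal $p^{B}$ and $p^{R}$ only after one extracts the color-separated multiset equalities $\Goneblue\cup\Gtwoblue=B\cup S$ and $\Gonered\cup\Gtwored=R\cup S$ from the single condition $\E\cup S=(\Gamma_1\cup\Gamma_2)\setminus S$ together with the coloring containments and the disjointness of $B$ and $R$ (available because $\E$ is primitive and $I_\C$ is prime); this is exactly the multiset bookkeeping of \cite{GP12,GP13}, and your fallback of cancelling a stray $p^{S}$ using that the prime ideal $I_\C$ contains no monomials is a legitimate safety net. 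Second, since the definition of a proper splitting set does not require $\Gamma_1,\Gamma_2$ to be primitive, your telescoping reduction of $f_{\Gamma_i}$ to primitive binomials of degree strictly less than $n$ (so that the inductive hypothesis applies) is genuinely needed, and your argument for it is the standard conformal-decomposition fact, correctly stated. I see no gaps.
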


\medskip
\medskip

\section{The Neural Ideal and the Theory of Piercings}\label{sect:canonicalform}
Stapleton \emph{et al.} \cite{SZHR11} show that not all codes are inductively pierced and give some immediate ``red flags" that show a curve is not a piercing of any others. Under our algebraic interpretation, we find that the effect of a $k$-piercing on the canonical form of the neural ideal is very predictable, and thus it is possible  to detect algorithmically whether or not a particular code has any piercings.  Detecting whether a code is $k$-inductively pierced is more complicated, since it requires us to find an ordering on the elements.  However, if only $0$- and $1$-piercings are used, we are able to algebraically detect inductive piercings and determine a possible ordering using the canonical form of the neural ideal.

We first address the issue of simply detecting whether a $k$-piercing is present at all. We can directly translate the three conditions from Definition \ref{def:k-piercing} into set containment rules, and thus directly into canonical form language, obtaining the following result:

\begin{prop}\label{prop:translation} $\lambda_{k+1}$ is a piercing of $\lambda_1,...,\lambda_k$ identified by $Z$ if and only if the following hold:  

\begin{enumerate}

\item  $\lambda_i\notin Z$ for $1\leq i\leq k+1$
\item 
\begin{enumerate}
\item $x_{\lambda_{k+1}}x_i \in CF(J_\C)$ for all $i \in [n]\backslash (Z\cup \{\lambda_1,...,\lambda_{k+1}\})$
\item $x_{\lambda_{k+1}} \prod_{i\in \sigma} x_i \notin CF(J_\C)$ for all $\sigma\subseteq \{\lambda_1,...,\lambda_k\}$ 
\item $x_{\lambda_{k+1}} (1-x_j) \in CF(J_\C)$ for all $j\in Z$
\end{enumerate}
\item If we reduce the canonical form by setting $x_i = 0$ for $i\in [n]\backslash (Z \cup \{\lambda_1,...,\lambda_{k+1}\})$ and setting $x_j=1$ for all $j\in Z$, we obtain only zeros.
\end{enumerate}
\end{prop}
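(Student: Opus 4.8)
The plan is to translate each of the three conditions in Definition~\ref{def:k-piercing} into set-theoretic statements about an arbitrary realization $\U = \{U_1,\dots,U_n\}$ of $\C$, and then invoke Proposition~\ref{prop:RFstructure} to rewrite each set containment as membership of a pseudo-monomial in $J_\C$. Since all the relevant statements are properties of $\C$ alone (not of the particular realization), and since the canonical form $CF(J_\C)$ generates the same pseudo-monomial information as $J_\C$ up to divisibility, we can then pass from ``pseudo-monomial in $J_\C$'' to a statement phrased purely in terms of $CF(J_\C)$. Condition (1) is literally unchanged, so the work is in conditions (2) and (3).

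For condition (2) of the definition, $\X_{\lambda_{k+1}} = \Y_{Z\cup\{\lambda_{k+1}\},\Lambda}$: I would unpack this equality of subsets of $\Z_\C$ into three logically separate assertions about which zones contain $\lambda_{k+1}$. First, every zone containing $\lambda_{k+1}$ also contains $Z$ and is disjoint from $[n]\setminus(Z\cup\{\lambda_1,\dots,\lambda_{k+1}\})$; the first part says $\bigcap_{i\in\{\lambda_{k+1}\}\cup\{j\}} U_i = \emptyset$ for each $j\notin Z\cup\{\lambda_1,\dots,\lambda_{k+1}\}$ — wait, more precisely it says $U_{\lambda_{k+1}}\cap U_j = \emptyset$, i.e. $x_{\lambda_{k+1}}x_j\in J_\C$ — giving (2)(a); the ``contains $Z$'' part says $U_{\lambda_{k+1}}\subseteq U_j$ for $j\in Z$, i.e. $x_{\lambda_{k+1}}(1-x_j)\in J_\C$, giving (2)(c). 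Second, the containment $\Y_{Z\cup\{\lambda_{k+1}\},\Lambda}\subseteq\X_{\lambda_{k+1}}$ together with the requirement that no proper superset of $\Lambda$-shape collapses says precisely that no monomial $x_{\lambda_{k+1}}\prod_{i\in\sigma}x_i$ with $\sigma\subseteq\{\lambda_1,\dots,\lambda_k\}$ lies in $J_\C$ — equivalently none lies in $CF(J_\C)$ and none is divisible by an element of $CF(J_\C)$; here I need to be careful that ``not in $CF(J_\C)$'' is really equivalent to ``not in $J_\C$'' for these particular monomials, which follows because a divisor of such a monomial that lies in $J_\C$ would itself be one of these monomials (it has the form $x_{\lambda_{k+1}}\prod_{i\in\sigma'}x_i$ with $\sigma'\subseteq\sigma$, since any genuine pseudo-monomial divisor must still contain $x_{\lambda_{k+1}}$ given that $x_{\lambda_{k+1}}(1-x_j)$-type and bare-$x$ divisors are excluded) — this bookkeeping step is where I expect the main friction. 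The upshot is conditions (2)(a), (2)(b), (2)(c).

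For condition (3) of the definition, $\Y_{Z,\Lambda}\subseteq\Z$, I would use the standard correspondence between ``a zone of shape $Z\cup\Lambda_i$ is present'' and the indicator polynomial $\rho_{z\cup\Lambda_i}$ not vanishing identically modulo $J_\C$; equivalently, the reduction of $\rho_{z\cup\Lambda_i}$ to the quotient is nonzero. The substitution $x_i=0$ for $i\notin Z\cup\{\lambda_1,\dots,\lambda_{k+1}\}$ and $x_j=1$ for $j\in Z$ is exactly the ring map that ``localizes at the background zone $Z$ and kills the outside neurons,'' leaving a polynomial in $\ff_2[x_{\lambda_1},\dots,x_{\lambda_{k+1}}]$; the hypergraph/variety argument from \cite{neural_ring} gives that this reduced ideal is the zero ideal iff all $2^{k+1}$ words on $\{\lambda_1,\dots,\lambda_{k+1}\}$ extending the background pattern are in $\C$, and combined with (2) (which pins down the $\lambda_{k+1}=1$ half as exactly the cluster $\Y_{Z\cup\{\lambda_{k+1}\},\Lambda}$) this is equivalent to $\Y_{Z,\Lambda}\subseteq\Z$. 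So I would phrase (3) as: the reduced canonical form contains only zeros. Since $CF(J_\C)$ generates $J_\C$, reducing the canonical form generators and reducing $J_\C$ give the same reduced ideal, so testing the canonical form suffices — this is the other place requiring a short justification.

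Assembling the pieces: conditions (1)--(3) of Definition~\ref{def:k-piercing} hold for $(\lambda_{k+1};\lambda_1,\dots,\lambda_k;Z)$ if and only if the six bulleted conditions hold, by running the three translations above in both directions (each translation is an ``if and only if'' because Proposition~\ref{prop:RFstructure} is). The main obstacle, as flagged, is the careful handling of the distinction between membership in $J_\C$ and membership in $CF(J_\C)$ for the monomials appearing in (2)(a)--(2)(c), i.e. verifying that no divisibility relations sneak in or out when we replace the full ideal by its canonical form; everything else is routine translation via Proposition~\ref{prop:RFstructure} and the defining property of indicator polynomials.
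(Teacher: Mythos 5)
Your translation is correct and is essentially the route the paper itself takes: the paper gives no formal proof beyond the remark that each numbered condition of Definition \ref{def:k-piercing} converts, via Proposition \ref{prop:RFstructure}, into the corresponding condition on $CF(J_\C)$, which is precisely your plan. The two points of friction you flag resolve as you expect --- the degree-two pseudo-monomials in (2)(a) and (2)(c) are automatically minimal because no $x_i$ or $(1-x_j)$ lies in $J_\C$ under the standing assumptions, and condition (3), being equivalent to all $2^{k+1}$ words extending the background pattern lying in $\C$, is what certifies the presence of the required zones (and incidentally subsumes (2)(b)) --- so there is no gap.
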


The translation of the definition of $k$-piercing into the language of the canonical form follows from prior work \cite{neural_ring}; note that the numbered conditions here match the numbered conditions from Definition \ref{def:k-piercing}.  When $k$, $\{\lambda_i\}, \lambda_{k+1}$, and $Z$ are known, it is not a difficult matter to check the above conditions.  However, even if none of this information is known {\it a priori}, we can still determine if the code has piercings by checking each $\lambda$ in turn.

To check if a particular $\lambda$ is a piercing, we consider the polynomials in $CF(J_\C)$ which involve $\lambda$.  Let $A$ be the set of indices $a$ such that $x_{\lambda}x_a$ is in $CF(J_\C)$, and let $Z$ be the set of indices $j$ such that $x_\lambda(1-x_j)$ is in $CF(J_\C)$; as $CF(J_\C)$ is a canonical form, $A$ and $Z$ are disjoint.  Then set $B = [n] \backslash (A\cup Z)$.  Set $k= |B|$, arbitrarily order $B$ as $B = \{\lambda_1,...,\lambda_k\}$, and check the conditions above. If they hold, then $\lambda$ is a $k$-piercing of $B$ within the zone $Z$. If they do not, then $\lambda$ is not a piercing of any curves in any zone.  Note that by the conditions necessary in Proposition \ref{prop:translation}, particularly condition (2), the possible zone $Z$ and the curves $B$ which are pierced by $\lambda$  are uniquely determined. \\


We now consider the special case of $0$- and $1$- piercings.  As we will show, a 0-inductively pierced code has a canonical form of a very specific kind: the relationship between two place fields $U_i$ and $U_j$ is either disjointness $(x_ix_j \in CF(J_\C))$ or containment in a unique direction (either $x_i(1-x_j)$ or $x_j(1-x_i)$ are in $CF(J_\C)$, but not both, as this would imply equality of $U_i$ and $U_j$ which is impossible in a well-formed code). 

Following the language in \cite{SZHR11} and our previous discussion, recall that a code is \emph{0-inductively pierced} if $\C$ has a $0$-piercing $\lambda$ such that $\C-\lambda$ is 0-inductively pierced.  In particular, the only 0-inductively pierced code on $1$ neuron is $\{0,1\}$. If $\C$ has a $0$-piercing $\lambda$, then this means that there is a codeword $c\in \C$ and a background zone $Z$ such that if $c_\lambda = 1$ then $\supp(c) = Z\cup\{\lambda\}$ (intuitively, $U_\lambda$ is properly contained within zone $Z$ in any diagram).  


\begin{lemma}\label{lem:onecodeword} If $\lambda$ is a 0-piercing for a code $\C$, then $\C$ can be obtained from $\C-\lambda$ by adding a neuron which is always 0, and then adding a codeword $v$ such that $\supp(v) = \supp(c)\cup\{\lambda\}$ for some $c\in \C$.
\end{lemma}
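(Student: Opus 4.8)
The plan is to unpack the definition of a $0$-piercing (Example \ref{def:0-piercing}) and read off exactly what $\C$ must look like. By definition, if $\lambda$ is a $0$-piercing of $\C$ then there is a zone $Z$ with $\lambda \notin Z$ such that $\X_\lambda = \{Z \cup \{\lambda\}\}$ and $Z \cup \{\lambda\} \in \Z_\C$. The first observation to establish is that $\lambda$ appears in exactly one codeword of $\C$: this is immediate from $|\X_\lambda| = 2^0 = 1$ (Lemma \ref{lem:zonecount}), since $\X_\lambda$ is precisely the set of zones/codewords whose support contains $\lambda$. Call that unique codeword $v$; then $\supp(v) = Z \cup \{\lambda\}$ by condition (2).

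Next I would analyze the relationship between $\C$ and $\C - \lambda$. Removing row $\lambda$ from the codeword matrix collapses $v = \chi_{Z \cup \{\lambda\}}$ to $\chi_Z$, and fixes every other codeword (since no other codeword has a $1$ in position $\lambda$). So as a multiset of $0/1$-vectors on $[n]\setminus\{\lambda\}$, $\C - \lambda$ is just $\C$ with the $\lambda$-coordinate deleted. The key point is that $\chi_Z$ already occurs among the other codewords: indeed $Z \cup \{\lambda\} \in \Z_\C$ forces $Z = \supp(c)$ for some $c \in \C$ with $c_\lambda = 0$ — this is part of the standard translation, and it is exactly the condition ``$Z \cup \{\lambda\} \in \Z$ and $Z$ is a background zone,'' where $Z$ being a zone of the diagram means $Z \in \Z_\C$. (Here I am using the running assumption that $\C$ is a well-formed code, so $\D_\C$ is realizable; the background zone $Z$ of any piercing is itself a zone.) Therefore deleting the $\lambda$-coordinate sends $v$ onto a codeword $c$ that is already present, so $\C - \lambda$, viewed as a \emph{set}, equals $(\C \setminus \{v\})$ with the $\lambda$-coordinate deleted, and this set contains $c$ with $\supp(c) = Z$.

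Now reverse the process to recover $\C$ from $\C - \lambda$. First adjoin a new coordinate, indexed $\lambda$, set identically to $0$; this produces exactly the set $\{\, c' \in \{0,1\}^n : c'_\lambda = 0,\ (c'_i)_{i\neq\lambda} \in \C - \lambda \,\}$, which by the previous paragraph equals $\C \setminus \{v\}$ together with the word $\chi_Z = \chi_{\supp(c)}$ (which is already in $\C$). Then adjoin the single codeword $v$ with $\supp(v) = \supp(c) \cup \{\lambda\} = Z \cup \{\lambda\}$; the result is $\C$. This is the claimed description.

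The main obstacle I anticipate is a bookkeeping subtlety rather than a conceptual one: making sure the identification ``$\C - \lambda$ with a zero column re-adjoined $=$ $\C$ minus the piercing word'' is airtight, in particular that $\chi_Z$ being a codeword of $\C-\lambda$ (equivalently, $Z$ being the background zone, hence a genuine zone of $\D_\C$) really does hold. That follows from condition (3) in Definition \ref{def:k-piercing} together with $\emptyset \in \Z$ for $k = 0$ — $\Y_{Z,\emptyset} = \{Z\} \subseteq \Z$ — so $Z \in \Z_\C$ directly, and no well-formedness input is even needed for this particular step. The rest is a routine check that no other codeword interacts with coordinate $\lambda$, which is exactly $|\X_\lambda| = 1$.
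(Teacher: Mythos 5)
Your argument is correct and takes essentially the same route as the paper's proof, which simply unpacks the conditions of the $0$-piercing definition: $|\X_\lambda|=2^0=1$ forces $\lambda$ to lie in exactly one codeword, namely $\chi_{Z\cup\{\lambda\}}$, and the background zone $Z$ is itself a zone of $\C$, giving the claimed decomposition. You actually supply more detail than the paper does (in particular the explicit verification that $Z\in\Z_\C$, so $\chi_Z$ is already a codeword and deleting the $\lambda$-coordinate sends $v$ onto an existing word), so no changes are needed.
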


\begin{proof} 
Suppose $\lambda$ is a $0$-piercing of $\C$, and $Z_c$ is a zone such that the definition holds.  This implies that we have the three conditions listed in Example \ref{def:0-piercing}.
Thus, a code $\C$ with a $0$-piercing $\lambda$ is a code where all codewords except $c$ have $c_\lambda = 0$, and there is exactly one codeword whose support is identical to $c$ except at $\lambda$.
\end{proof}

Similarly, translating the definitions of 1- and 2-piercing, we have the following results:

\begin{lemma} If $\lambda$ is a 1-piercing of $\{\lambda_1\}$ for a code $\C$, then $\C$ can be obtained from $\C-\lambda$ by adding a neuron which is always 0, and then adding two codewords: one whose support is $\supp(c)\cup\{\lambda\}$ for some $c\in \C$, and one whose support is $\supp(c)\cup\{\lambda, \lambda_1\}$.
\end{lemma}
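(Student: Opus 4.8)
The plan is to mirror the proof of Lemma \ref{lem:onecodeword}, unwinding Definition \ref{def:k-piercing} in the case $k=1$ directly in terms of the codewords of $\C$. So suppose $\lambda$ is a $1$-piercing of $\Lambda = \{\lambda_1\}$ identified by the background zone $Z$, and let $c$ be the codeword with $Z_c = Z$ (such $c$ exists since condition (3), $\Y_{Z,\Lambda} \subseteq \Z$, forces in particular $Z \in \Z$). First I would use condition (1): $\lambda, \lambda_1 \notin Z$. Then condition (2), $\X_\lambda = \Y_{Z \cup \{\lambda\}, \{\lambda_1\}} = \{\, Z\cup\{\lambda\},\ Z\cup\{\lambda,\lambda_1\}\,\}$, says precisely that the zones containing $\lambda$ are exactly those two; translating back through $\Z_\C = \{Z_{c'} : c' \in \C\}$, this says every codeword $c'$ with $c'_\lambda = 1$ has support either $\supp(c)\cup\{\lambda\}$ or $\supp(c)\cup\{\lambda,\lambda_1\}$, and moreover both of these supports are realized by codewords of $\C$ (since $\X_\lambda$ equals, not merely contains, that two-element set). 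In other words, deleting the $\lambda$-coordinate from $\C$ kills exactly two codewords — those two — and the neuron $\lambda$ is $0$ on all remaining codewords.

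The key steps in order: (i) extract $c \in \C$ with $Z_c = Z$ from condition (3); (ii) from condition (2) conclude that $\{\,c' \in \C : c'_\lambda = 1\,\}$ consists of exactly two codewords, call them $v^{(1)}, v^{(2)}$, with $\supp(v^{(1)}) = \supp(c)\cup\{\lambda\}$ and $\supp(v^{(2)}) = \supp(c)\cup\{\lambda,\lambda_1\}$; (iii) observe that $\C - \lambda$ is obtained from $\C$ by deleting the $\lambda$th coordinate, and by (ii) this merges $v^{(1)}$ with $c$ and $v^{(2)}$ with the codeword $d \in \C$ having $\supp(d) = \supp(c)\cup\{\lambda_1\}$ — which exists precisely because condition (3) gives $\Y_{Z,\{\lambda_1\}} = \{Z, Z\cup\{\lambda_1\}\} \subseteq \Z$; (iv) conclude that, conversely, adjoining to $\C - \lambda$ a new always-zero neuron and then the two codewords whose supports are $\supp(c)\cup\{\lambda\}$ and $\supp(c)\cup\{\lambda,\lambda_1\}$ recovers $\C$ exactly, since no other codeword of $\C$ has $c'_\lambda = 1$ and the map $\C \setminus \{v^{(1)},v^{(2)}\} \to \C - \lambda$ (delete coordinate $\lambda$) is a bijection onto its image. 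This is really just bookkeeping once (ii) and (iii) are in hand.

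I do not expect a genuine obstacle here; the statement is essentially a restatement of Definition \ref{def:k-piercing} for $k=1$ together with the dictionary between zones and codewords, exactly as in the proof of Lemma \ref{lem:onecodeword}. The one point that needs a little care is step (iii): one must check that the codeword $c$ (background zone) and the codeword $d$ with support $\supp(c)\cup\{\lambda_1\}$ are genuinely distinct from $v^{(1)}, v^{(2)}$ and from each other, so that deleting coordinate $\lambda$ sends $v^{(1)} \mapsto c$ and $v^{(2)} \mapsto d$ as claimed and does not collapse anything unexpectedly. Distinctness of $c$ and $d$ follows from $\lambda_1 \notin Z = \supp(c)$; distinctness of $\{c,d\}$ from $\{v^{(1)},v^{(2)}\}$ follows from $\lambda \notin \supp(c)$ (via condition (1)) together with $\lambda \in \supp(v^{(i)})$. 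With that settled, the argument closes, and the proof can be phrased in one short paragraph in parallel with Lemma \ref{lem:onecodeword}'s proof.
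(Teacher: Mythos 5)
Your proof is correct and follows exactly the route the paper intends: the paper states this lemma without proof as a "similar translation" of the $k$-piercing definition (in parallel with Lemma \ref{lem:onecodeword}), and your argument is precisely that translation, with condition (2) pinning down the two codewords containing $\lambda$ and condition (3) supplying the background codewords $c$ and $d$ so that deleting the $\lambda$-coordinate collapses nothing unexpectedly. The extra care you take in step (iii) is a welcome addition, not a deviation.
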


\begin{lemma} If $\lambda$ is a 2-piercing of $\{\lambda_1, \lambda_2\}$ for a code $\C$, then $\C$ can be obtained from $\C-\lambda$ by adding a neuron which is always 0, and then adding four codewords: one whose support is $\supp(c)\cup\{\lambda\}$ for some $c\in \C$, and then codewords with support $\supp(c)\cup\{\lambda, \lambda_1\}$, $\supp(c)\cup \{\lambda, \lambda_2\}$, and $\supp(c)\cup\{\lambda, \lambda_1,\lambda_2\}$.
\end{lemma}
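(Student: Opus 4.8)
The plan is to mimic, in a more elaborate bookkeeping form, the argument for the 0-piercing case in Lemma \ref{lem:onecodeword}, now using Definition \ref{def:k-piercing} with $k=2$. So suppose $\lambda$ is a 2-piercing of $\{\lambda_1,\lambda_2\}$ in $\C$, identified by a background zone $Z$ with corresponding codeword $c$ (so $Z=\supp(c)$). First I would unpack the three conditions of the definition: condition (1) gives $\lambda,\lambda_1,\lambda_2\notin Z$; condition (2) says $\X_\lambda = \Y_{Z\cup\{\lambda\},\{\lambda_1,\lambda_2\}}$; and condition (3) says $\Y_{Z,\{\lambda_1,\lambda_2\}}\subseteq\Z_\C$. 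By Lemma \ref{lem:zonecount}, $|\X_\lambda|=2^2=4$.

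Next I would read off what condition (2) says concretely. By definition of the $\Lambda$-cluster with $\Lambda=\{\lambda_1,\lambda_2\}$ and base zone $Z\cup\{\lambda\}$,
\[
\Y_{Z\cup\{\lambda\},\{\lambda_1,\lambda_2\}}=\bigl\{Z\cup\{\lambda\},\ Z\cup\{\lambda,\lambda_1\},\ Z\cup\{\lambda,\lambda_2\},\ Z\cup\{\lambda,\lambda_1,\lambda_2\}\bigr\},
\]
and condition (2) asserts this is exactly the set of zones of $\C$ containing $\lambda$. Translating back to codewords: the codewords $v\in\C$ with $v_\lambda=1$ are precisely the four whose supports are $\supp(c)\cup\{\lambda\}$, $\supp(c)\cup\{\lambda,\lambda_1\}$, $\supp(c)\cup\{\lambda,\lambda_2\}$, and $\supp(c)\cup\{\lambda,\lambda_1,\lambda_2\}$. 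In particular every other codeword of $\C$ has a $0$ in position $\lambda$, so deleting the $\lambda$th coordinate from those four codewords produces (possibly with repetition) $\supp(c)$, $\supp(c)\cup\{\lambda_1\}$, $\supp(c)\cup\{\lambda_2\}$, $\supp(c)\cup\{\lambda_1,\lambda_2\}$ — and I would invoke condition (3), which says exactly that these four sets are all zones of $\C-\lambda$ already (since $\Y_{Z,\{\lambda_1,\lambda_2\}}\subseteq\Z_\C$ and none of these involve $\lambda$, they survive in $\Z_\C-\lambda=\Z_{\C-\lambda}$). Hence $\C-\lambda$ contains a codeword $c'$ with $\supp(c')=\supp(c)$, and $\C$ is recovered from $\C-\lambda$ by re-inserting a coordinate $\lambda$ that is $0$ on all old codewords, and then adjoining the four new codewords with supports $\supp(c)\cup\{\lambda\}$, $\supp(c)\cup\{\lambda,\lambda_1\}$, $\supp(c)\cup\{\lambda,\lambda_2\}$, $\supp(c)\cup\{\lambda,\lambda_1,\lambda_2\}$, which is the claimed description. (The statement of the lemma phrases the anchor as "$\supp(c)\cup\{\lambda\}$ for some $c\in\C$"; here $c$ is the background-zone codeword, which lies in both $\C$ and $\C-\lambda$.)

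I expect the only genuinely delicate point to be the careful matching of zones versus codewords across the operation of removing the coordinate $\lambda$: one must check that no two of the four cluster sets $Z\cup\Lambda_i$ collapse to the same set after deleting $\lambda$ (they do not, since $\lambda$ appears in all four and the $\Lambda_i$ are the four distinct subsets of $\{\lambda_1,\lambda_2\}$, so the four sets are already distinct in $\mathcal P([n])$), and that the "new" codewords are genuinely not in $\C-\lambda$ after the coordinate is reinserted as $0$ (they are not, because each has $v_\lambda=1$). Both are immediate once the definitions are on the table, so this is really just a bookkeeping exercise parallel to the $0$- and $1$-piercing lemmas, with $2^0$ and $2^1$ replaced by $2^2=4$.
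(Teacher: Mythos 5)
Your proposal is correct and follows exactly the route the paper intends: the paper gives no separate proof for the 1- and 2-piercing lemmas, stating only that they follow "similarly, translating the definitions," and your argument is precisely that translation of Definition \ref{def:k-piercing} with $k=2$ (condition (2) pinning down $\X_\lambda$ as the four cluster zones, condition (3) guaranteeing the background cluster already lies in $\C-\lambda$), in the same style as the paper's proof of Lemma \ref{lem:onecodeword}. The extra bookkeeping you supply about non-collapse of the four new codewords is a welcome addition rather than a deviation.
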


For each of these cases, we can translate these operations into changes in the canonical form, as shown in the following result. 

\begin{thm} \label{thm:cfchange}  Let $\C$ be a code on $n$ neurons.  Then, the following three statements hold:

\begin{enumerate}
\item If $\lambda$ is a 0-piercing at zone $z$, then $$CF(J_\C) = CF(J_{\C-\lambda}) \cup\{x_ix_\lambda\ | \ i \in [n]\backslash (z\cup\{\lambda\}) \} \cup\{x_\lambda(1-x_j)\ | \ j\in z\}$$

\item If $\lambda$ is a 1-piercing of $\{\lambda_1\}$ at zone $z$, then $$CF(J_\C) = CF(J_{\C-\lambda}) \cup\{x_ix_\lambda\ | \ i \in [n]\backslash (z\cup\{\lambda, \lambda_1\} )\} \cup\{x_\lambda(1-x_j)\ | \ j\in z\}$$

\item If $\lambda$ is a 2-piercing of $\{\lambda_1, \lambda_2 \}$ at zone $z$, then $$CF(J_\C) = CF(J_{\C-\lambda}) \cup\{x_ix_\lambda\ | \ i \in [n]\backslash (z\cup\{\lambda, \lambda_1, \lambda_2\} )\} \cup\{x_\lambda(1-x_j)\ | \ j\in z\}$$

\end{enumerate}
\end{thm}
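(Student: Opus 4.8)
The plan is to prove all three statements simultaneously by a uniform argument, since the three cases differ only in the size of the "protected" label set $\Lambda = \{\lambda_1, \dots, \lambda_m\}$ (with $m = 0, 1, 2$). The essential tool is Proposition~\ref{prop:RFstructure}: a pseudo-monomial $\prod_{i \in \sigma} x_i \prod_{j \in \tau}(1-x_j)$ lies in $J_\C$ if and only if $\bigcap_{i \in \sigma} U_i \subseteq \bigcup_{j \in \tau} U_j$ in every realization of $\C$. Since the canonical form is the set of \emph{minimal} pseudo-monomials in $J_\C$, I need to understand which set containments are forced when one passes from $\C - \lambda$ to $\C$ by adjoining a $k$-piercing, and which of those containments are minimal.

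First I would fix a realization $\U = \{U_1, \dots, U_{n-1}\}$ of $\C - \lambda$ (renumbering so $\lambda = n$ for convenience) and use the piercing geometry to describe a realization of $\C$: by the preceding lemmas, $\C$ is obtained by adding a new, small set $U_\lambda$ that lies inside the zone $z$ of $\C - \lambda$ and is pierced only by $U_{\lambda_1}, \dots, U_{\lambda_m}$. Concretely, in any realization $U_\lambda \subseteq \bigcap_{j \in z} U_j$, and $U_\lambda$ is disjoint from every $U_i$ with $i \notin z \cup \Lambda$, while $U_\lambda$ genuinely overlaps (without containing or being contained in) each $U_{\lambda_t}$. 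The key structural point, which I would extract from Definition~\ref{def:k-piercing} conditions (2) and (3) and Proposition~\ref{prop:translation}, is that the zones of $\C$ not involving $\lambda$ are exactly the zones of $\C - \lambda$, so no new containments among the \emph{old} sets $\{U_i : i \neq \lambda\}$ are created or destroyed. Hence $CF(J_{\C - \lambda}) \subseteq CF(J_\C)$, because every minimal pseudo-monomial of $J_{\C - \lambda}$ is a minimal pseudo-monomial of $J_\C$ (it cannot become non-minimal, as that would require a new divisor pseudo-monomial, which would encode a new containment among old sets), and conversely any pseudo-monomial in $J_\C$ not involving $x_\lambda$ already belongs to $J_{\C - \lambda}$.

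Next I would pin down the new canonical form elements, i.e.\ the minimal pseudo-monomials of $J_\C$ that involve $x_\lambda$. From the geometry: $U_\lambda \cap U_i = \emptyset$ for $i \in [n] \setminus (z \cup \{\lambda\} \cup \Lambda)$ gives $x_i x_\lambda \in J_\C$, and $U_\lambda \subseteq U_j$ for $j \in z$ gives $x_\lambda(1 - x_j) \in J_\C$. I must then argue (a) these are minimal, and (b) there are no \emph{other} minimal pseudo-monomials containing $x_\lambda$. For (a): $x_i x_\lambda$ minimal because neither $x_i$ nor $x_\lambda$ alone is in $J_\C$ (both $U_i$ and $U_\lambda$ are nonempty), and $x_\lambda(1-x_j)$ minimal for the same reason together with the fact that $x_\lambda(1-x_j)$ having a proper pseudo-monomial divisor would force $U_\lambda = \emptyset$ or $U_\lambda \subseteq \emptyset$. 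For (b): suppose $f = x_\lambda \prod_{i \in \sigma} x_i \prod_{j \in \tau}(1-x_j) \in J_\C$ is minimal with $\sigma \cup \tau$ nonempty and not of the two listed forms. Translating via Proposition~\ref{prop:RFstructure}, $U_\lambda \cap \bigcap_{i \in \sigma} U_i \subseteq \bigcup_{j \in \tau} U_j$. If some $i \in \sigma$ lies outside $z \cup \Lambda$, then already $U_\lambda \cap U_i = \emptyset$ and $x_i x_\lambda$ divides $f$, contradicting minimality; so $\sigma \subseteq z \cup \Lambda$. If $\sigma \cap z \neq \emptyset$ or $\tau$ contains some $j \in z \setminus \sigma$, then $x_\lambda(1-x_j)$-type divisibility or the containment $U_\lambda \subseteq U_j$ again gives a contradiction with minimality. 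The remaining possibility — using the condition that $U_\lambda$ is neither contained in nor contains any $U_{\lambda_t}$, and that the $\Lambda$-cluster structure (condition (2a) of Proposition~\ref{prop:translation}: $x_\lambda \prod_{i \in \sigma'} x_i \notin CF(J_\C)$ for $\sigma' \subseteq \Lambda$) rules out indices from $\Lambda$ appearing productively — is that $f$ reduces to one of the listed generators or is not minimal.

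The main obstacle I anticipate is step (b): carefully ruling out "mixed" pseudo-monomials that involve $x_\lambda$ together with indices drawn from several of $z$, $\Lambda$, and their complement, and in particular showing that indices from $\Lambda$ never contribute a new minimal generator. This is exactly where condition (2) of the $k$-piercing definition (that $\X_{\lambda_{k+1}} = \Y_{Z \cup \{\lambda_{k+1}\}, \Lambda}$, so that $U_\lambda$ meets all $2^k$ combinations of the pierced sets) is indispensable: it guarantees that no containment of the form $U_\lambda \cap \bigcap_{t \in T} U_{\lambda_t} \subseteq \bigcup (\cdots)$ holds nontrivially for $T \subseteq \Lambda$, because every such intersection zone is actually realized. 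I would handle this by invoking Proposition~\ref{prop:translation} directly rather than re-deriving the containments, since that proposition has already done the translation work; the bulk of the proof is then the minimality bookkeeping, which is routine once the zone structure of $\C$ versus $\C - \lambda$ is understood. Finally I would note that the three displayed formulas are just the specialization $m = 0, 1, 2$, with $\Lambda = \emptyset, \{\lambda_1\}, \{\lambda_1, \lambda_2\}$ respectively, so a single argument covers all three.
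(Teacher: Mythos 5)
The paper does not actually prove this theorem: it states the result and defers the details to \cite{CY15}, remarking only that the relevant change to the code is the removal of a neuron whose place field pierces $k$ others. Your proposal therefore supplies an argument the paper omits, and its overall strategy is sound. A remark on execution: the cleanest rigorous version of your step (b) goes through the codeword-level description rather than through a chosen realization. By Lemma \ref{lem:onecodeword} and its $1$- and $2$-piercing analogues, the codewords of $\C$ whose support contains $\lambda$ are exactly the $2^k$ words with supports $z\cup\{\lambda\}\cup\Lambda_i$, $\Lambda_i\subseteq\Lambda$. Since any pseudo-monomial with factor $x_\lambda$ automatically vanishes on every codeword not containing $\lambda$, its membership in $J_\C$ is a finite check against these $2^k$ words; this is precisely how condition (2) of Definition \ref{def:k-piercing} kills the ``mixed'' cases you worry about (if $\sigma\subseteq z\cup\Lambda$ and $\tau\cap z=\emptyset$, the codeword with support $z\cup\{\lambda\}\cup(\Lambda\cap\sigma)$ fails to annihilate $f$, so $f\notin J_\C$), and it avoids having to argue about which geometric configurations are forced in every realization.

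There is one genuine omission. To prove the stated equality you must show that \emph{every} minimal pseudo-monomial of $J_\C$ appears on the right-hand side, and your case analysis only treats pseudo-monomials that do not involve the variable $x_\lambda$ and those with $x_\lambda$ as a positive factor. You must also rule out minimal pseudo-monomials of the form $g'(1-x_\lambda)$. The argument is short but uses condition (3) of Definition \ref{def:k-piercing}: because $\Y_{Z,\Lambda}\subseteq\Z$, the restrictions of the codewords containing $\lambda$ are already codewords of $\C$, so the codewords of $\C$ with $c_\lambda=0$ realize all of $\C-\lambda$; hence $g'(1-x_\lambda)\in J_\C$ forces $g'\in J_{\C-\lambda}\subseteq J_\C$, and $g'(1-x_\lambda)$ is then a nontrivial multiple of an ideal element, hence not minimal. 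With that case added (and with your step (b) tightened as above — note in particular that when $\sigma\cap z\neq\emptyset$ the correct conclusion is that $f/x_j\in J_\C$ for $j\in\sigma\cap z$, not that $x_\lambda(1-x_j)$ divides $f$), the argument is complete.
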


The particular details of how changes in the canonical form reflect changes to the code are discussed at length in \cite{CY15}; in our case, the change is the removal of a neuron whose place field pierces $k$ others.  This leads to the following property  of the canonical form of a $k$-inductively pierced code and a characterization of 0-inductively pierced codes.

\begin{thm} \label{thm:canonicalformat} Let $\C$ be a code on $n$ neurons. 
\begin{enumerate}
\item If $\C$ is $k$-inductively pierced, then the $CF(J_\C) \subseteq \{f_{ij} \,|\, 1\leq i<j\leq n, f_{ij}\in \{x_ix_j, x_i(1-x_j), x_j(1-x_i)\}\}$.
\item $\C$ is $0$-inductively pierced if and only if $CF(J_\C) = \{f_{ij} \,|\, i,j\in [n], i<j, f_{ij}  \in \{x_ix_j, x_i(1-x_j), x_j(1-x_i) \}$.   
\end{enumerate}
\end{thm}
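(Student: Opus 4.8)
The plan is to prove both statements by induction on $n$, using Theorem~\ref{thm:cfchange} to control how the canonical form changes as piercing curves are removed. For part (1), suppose $\C$ is $k$-inductively pierced. Then by Definition~\ref{def:inductivelypiercedabstractdescrip} there is a curve $\lambda$ which is a $0$-, $1$-, \ldots, or $k$-piercing with $\C - \lambda$ still $k$-inductively pierced. By the induction hypothesis, $CF(J_{\C-\lambda})$ consists only of pseudo-monomials of the form $f_{ij}$ with $i<j$ and $f_{ij}\in\{x_ix_j, x_i(1-x_j), x_j(1-x_i)\}$. Now apply the appropriate case of Theorem~\ref{thm:cfchange}: in every case, $CF(J_\C)$ is obtained from $CF(J_{\C-\lambda})$ by adjoining only pseudo-monomials of the form $x_ix_\lambda$ and $x_\lambda(1-x_j)$ — again degree-two pseudo-monomials of the allowed shape, now involving the new neuron $\lambda$. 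Hence $CF(J_\C)$ lies in the stated set. The base case $n=1$ is immediate since the only well-formed code on one neuron is $\{0,1\}$, whose neural ideal is the zero ideal, so $CF(J_\C)=\emptyset$.

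For part (2), the forward direction is the special case $k=0$ of part (1), except that we must upgrade the inclusion ``$\subseteq$'' to an equality of the form displayed: we need that for \emph{every} pair $i<j$, exactly one of $x_ix_j$, $x_i(1-x_j)$, $x_j(1-x_i)$ appears in $CF(J_\C)$. This follows from Proposition~\ref{prop:crossing}: in a $0$-inductively pierced (well-formed) code no two curves intersect, so in any well-formed realization $U_i$ and $U_j$ are either disjoint or nested; disjointness gives $x_ix_j \in CF(J_\C)$ via Proposition~\ref{prop:RFstructure}, and nesting $U_i \subseteq U_j$ gives $x_i(1-x_j)\in J_\C$ (and its minimality, hence membership in the canonical form, since well-formedness forbids $U_i = U_j$ and forbids $U_i = \emptyset$). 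So precisely one $f_{ij}$ occurs for each pair. For the reverse direction, assume $CF(J_\C)$ has exactly that form. Translating via Proposition~\ref{prop:RFstructure}, every pair of place fields in any realization is disjoint or nested, so no two curves intersect in any well-formed realization; Proposition~\ref{prop:crossing} then yields that $\C$ is $0$-inductively pierced.

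The main obstacle I expect is the reverse direction of (2): from the algebraic hypothesis on $CF(J_\C)$ one must produce an actual well-formed realization (or at least argue about all of them) in order to invoke Proposition~\ref{prop:crossing}, and one must be careful that the canonical-form data is genuinely consistent — i.e. that a code with such a canonical form is in fact well-formed and convexly realizable, so that Proposition~\ref{prop:crossing}'s hypotheses apply. One clean way around invoking realizations directly is to argue combinatorially: the relations ``$U_i$ disjoint from $U_j$'' and ``$U_i \subseteq U_j$'' encoded by the $f_{ij}$ determine a forest-like nesting structure on $[n]$, and one shows directly that a minimal element of this nesting order is a $0$-piercing of $\C$ (the codeword witnessing it is the one supported on that element's ancestors), then removes it and induct. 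I would set up this combinatorial argument carefully, checking that the three conditions of Example~\ref{def:0-piercing} hold for the chosen $\lambda$ using only the structure of $CF(J_\C)$, and that removing $\lambda$ leaves a canonical form of the same restricted shape on $n-1$ neurons.
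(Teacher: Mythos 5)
Your plan is correct and follows essentially the same route as the paper: part (1) and the forward direction of (2) by induction via Theorem~\ref{thm:cfchange}, and the reverse direction of (2) by the combinatorial argument you describe as the fallback — choosing a curve $\lambda$ minimal in the nesting order encoded by the pseudo-monomials $x_\lambda(1-x_i)$ (the paper equivalently takes $\lambda$ maximizing the set of such $i$), verifying from $CF(J_\C)$ that the codewords supported on its ancestor set $B$ and on $B\cup\{\lambda\}$ lie in $\C$ and that no other codeword contains $\lambda$, and then inducting on $\C-\lambda$. The only remaining work is the case-by-case verification you have already flagged, and your instinct to avoid Proposition~\ref{prop:crossing} in the reverse direction (since realizability is not given) matches what the paper actually does.
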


\begin{proof} 

Both (1) and the forward direction of (2) follow immediately by induction from Theorem \ref{thm:cfchange}, and the fact that the only k-inductively pierced code on $1$ neuron is $\C = \{0,1\}$, which has empty canonical form.

To see the reverse direction of (2), we proceed by induction on $n$.  If $n=1$, then $CF(J_\C) = \emptyset$ so $\C = \{0,1\}$, which is 0-inductively pierced. Now, let $n>1$, and assume the result holds for $n-1$.  We will show that there is some $\lambda$ so that $CF(J_{\C-\lambda})$ is also in the desired form, and that $\lambda$ is a $0$-piercing of $\C$. To do this, pick $\lambda$ which maximizes $|B_\lambda|$, where $B_\lambda = \{i \,|\, x_\lambda(1-x_i) \in CF(J_\C) \}$ (this choice may not be unique).  Let $B=B_\lambda$, and let $v_B$ be the vector with $\supp(v_B) = B$.  To see that $\lambda$ is a $0$-piercing of $\C$, we will show that both $v_B$ and $v_{B\cup \lambda} \in \C$, and that if $c\in \C$ with $c_\lambda = 1$, then $\supp(c)  =\{ \lambda \} \cup B$. 

First, to show that $v_B$ and $v_{B\cup \lambda}$ are in $\C$, we will show that for each $f_{ij} \in CF(J_\C)$, we have $f_{ij}(v_B) = f_{ij}(v_{B\cup \lambda}) = 0$.  We have two cases: 

(\emph{Case 1: $f_{ij} = x_ix_j$}).  If either $i$ or $j$ is not in $B\cup\lambda$, then $f_{ij}(v_B) = f_{ij}(v_{B\cup\lambda}) = 0$. We will show that any options where both $i$ and $j$ are in $B\cup \lambda$ is impossible.  
 If both $i$ and $j$ are in $B$, then  $x_\lambda(1-x_i)$ and $x_\lambda(1-x_j)$ are both in $CF(J_\C)$. But then $x_\lambda = x_\lambda(1-x_i) + x_i(x_\lambda(1-x_j)) + x_\lambda(x_ix_j) \in J_\C$ which is a contradiction.  If $i\in B$ and $j=\lambda$, then $x_\lambda(1-x_i)\in CF(J_\C)$, which is a contradiction since we can't have both $x_\lambda(1-x_i)$ and $f_{i\lambda} = x_\lambda x_i$ in $CF(J_\C)$. A similar argument holds for $i= \lambda$ and $j\in B$.  Thus, any case where $i$ and $j$ are both in $B\cup \{ \lambda\}$ is impossible. 

(\emph{Case 2: $f_{ij} = x_i(1-x_j)$}).  If $i\notin B\cup \{ \lambda \}$, then clearly $f_{ij}(v_B) = f_{ij}(v_{B\cup \lambda}) = 0$.  If $i=\lambda$, then $j$ is necessarily in $B$ by definition, and hence $f_{ij}(v_B) = f_{ij}(v_{B\cup \lambda}) = 0$.  Finally, if $i\in B$, then $j\neq \lambda$ as we cannot have both $x_\lambda(1-x_i)$ and $x_i(1-x_\lambda)$ in $CF(J_\C)$ by the presumed format.  Thus, in this case, $(1-x_j)(x_\lambda(1-x_i))+ x_\lambda(x_i(1-x_j)) = x_\lambda(1-x_j)\in J_\C$, and as neither $x_\lambda$ nor $1-x_j$ may be in $J_\C$, we have $x_\lambda(1-x_j)\in CF(J_\C)$, and hence $j\in B$.  So again $f_{ij}(v_B) = f_{ij}(v_{B\cup\lambda}) =0$.\\

Now, suppose by way of contradiction that there is some $c_\lambda = 1$ but $\supp(c) \neq \{B\cup \{\lambda\}\}$. If $i\in B$ but $c_i\neq 1$, then $x_\lambda(1-x_i)$ could not be in $J_\C$ as $c$ would not evaluate to $0$ on it, but we know $x_\lambda(1-x_i)$ is in the canonical form by definition of $B$.  So this cannot occur.  Thus, $\supp(c)\neq \{B\cup\{\lambda\}\}$ must imply $\supp(c)\supsetneq B\cup\{ \lambda\}$.  Let $i\in \supp(c)$ and $i\notin B\cup\{\lambda\}$.  For every $j\in B$, we know $x_ix_j\notin CF(J_\C)$, and if $x_j(1-x_i)\in CF$ then $i\in B$ so we must have $x_i(1-x_j)\in CF(J_\C)$.  Furthermore, $x_i(1-x_\lambda) \in CF(J_\C)$ also, as $x_\lambda(1-x_i)$ and $x_ix_\lambda$ cannot be.  So $\lambda$ was not a maximum choice which is a contradiction.

This shows that $\lambda$ is indeed a 0-piercing of $\C$; i.e.,  $c_\lambda = 0$ for all codewords except one, and that one is a copy of $v$ otherwise.  When we delete neuron $\lambda$, we obtain $CF(J_{\C-\lambda})$, which by \cite{CY15} contains exactly the elements of $CF(J_\C)$ which did not involve $x_\lambda$; hence each pair will occur exactly once and thus $CF(J_{\C-\lambda})$ will have the desired format.
\end{proof}

From these results, we see that it is possible to detect whether a code is $0$-inductively pierced by using the canonical form. We now move on to detecting whether a code is $1$-inductively pierced. Recall that a code $\C$ is \emph{1-inductively pierced} if either $\C = \{0,1\}$, or there is some 0- or 1-piercing $\lambda$ such that $\C - \lambda$ is $0$ or $1$-inductively pierced.


For a given code $\C$, we define the \emph{general relationship graph} $G(\C)$ to have vertices $V=[n]$, and an edge $\{i,j\}$ appears if and only if none of $x_ix_j, x_i(1-x_j)$ and $x_j(1-x_i)$ appear in $CF(J_\C)$. That is, $G(\C)$ connects two vertices exactly when there is no interesting relationship (disjointness, containment) between their respective place fields.

\begin{ex} Let $\C = \{00000, 10000, 11000, 10100, 11100, 01000, 00010, 01010, 01011\}$. The canonical form for this code is $CF(J_\C) = \{x_1x_4, x_1x_5, x_3x_4, x_3x_5, x_3(1-x_1), x_5(1-x_2), x_5(1-x_4)\}$.  The general relationship graph $G(\C)$ places an edge between those pairs that do not appear together in an element in the canonical form; here, those edges are $\{1,2\}, \{2,3\}, \{2,4\}$. See Figure \ref{figure:C2} for an Euler diagram for this code and a drawing of $G(\C)$.

\begin{figure}[h]
\begin{center}
\includegraphics[width=.7\textwidth]{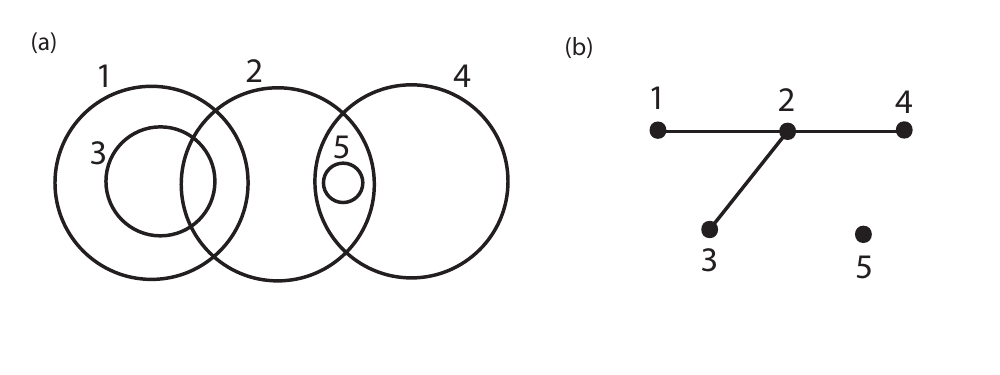}
\end{center}
 \caption{{\bf (a)} Euler diagram for the code $\C$. {\bf(b)}The general relationship graph $G(\C)$.}
 \label{figure:C2}
\end{figure}

\end{ex}

\begin{thm}\label{thm:01piercing} $\C$ is 1-inductively pierced if and only if $CF(J_\C)$ consists only of degree two pseudo-monomials meeting the following conditions:
\begin{enumerate}[(i)]
\item For each pair $\{i,j\}$, at most one of $x_ix_j, x_i(1-x_j), x_j(1-x_i)$ appears in $CF(J_\C)$.
\item $G(\C)$ is a forest.
\end{enumerate}

\end{thm}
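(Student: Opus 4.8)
The plan is to prove both implications by induction on $n$, the number of neurons, with Theorem \ref{thm:cfchange} as the workhorse; the argument runs parallel to Theorem \ref{thm:canonicalformat}(2), which is the $0$-piercing analogue of this statement.

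\emph{Forward direction.} Suppose $\C$ is $1$-inductively pierced. The base case $\C=\{0,1\}$ has $CF(J_\C)=\emptyset$ and $G(\C)$ a single vertex, so (i)--(ii) hold vacuously. For the inductive step, write $\C=(\C-\lambda)$ with $\lambda$ a $0$- or $1$-piercing and $\C-\lambda$ again $1$-inductively pierced; by the inductive hypothesis $CF(J_{\C-\lambda})$ consists of degree-two pseudo-monomials satisfying (i)--(ii). By Theorem \ref{thm:cfchange}, $CF(J_\C)$ is $CF(J_{\C-\lambda})$ together with some new pseudo-monomials, all of the form $x_ix_\lambda$ or $x_\lambda(1-x_j)$; in particular they are degree two, they only involve the new neuron $\lambda$, and for each $i$ at most one of $x_ix_\lambda,\,x_\lambda(1-x_i),\,x_i(1-x_\lambda)$ is added ($x_i(1-x_\lambda)$ is never added). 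This gives (i) and the degree-two claim at once. For (ii): if $\lambda$ is a $0$-piercing, Theorem \ref{thm:cfchange}(1) shows every pair $\{i,\lambda\}$ carries a relation in $CF(J_\C)$, so $\lambda$ is isolated in $G(\C)$; if $\lambda$ is a $1$-piercing of $\{\lambda_1\}$, the same theorem shows $\{i,\lambda\}$ carries a relation for every $i\ne\lambda_1$, while $x_\lambda x_{\lambda_1}\notin CF(J_\C)$, $x_\lambda(1-x_{\lambda_1})\notin CF(J_\C)$ (as $\lambda_1\notin z$), and $x_{\lambda_1}(1-x_\lambda)\notin CF(J_\C)$ (it involves $x_\lambda$ yet is neither in $CF(J_{\C-\lambda})$ nor among the added pseudo-monomials); hence $\{\lambda,\lambda_1\}$ is the unique edge at $\lambda$. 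Either way $G(\C)$ is $G(\C-\lambda)$ with a new degree-$\le 1$ vertex attached, hence still a forest.

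\emph{Reverse direction.} Assume $CF(J_\C)$ consists of degree-two pseudo-monomials satisfying (i)--(ii); induct on $n$, the case $n=1$ being immediate. For $n\ge 2$ the strategy is to exhibit a neuron $\lambda$ that is a $0$- or $1$-piercing with $\C-\lambda$ still satisfying (i)--(ii): then Theorem \ref{thm:cfchange} shows $CF(J_{\C-\lambda})$ is exactly the part of $CF(J_\C)$ not involving $x_\lambda$, so it is degree two and satisfies (i), and $G(\C-\lambda)=G(\C)-\lambda$ is a forest, so the inductive hypothesis applies and $\C$ is $1$-inductively pierced. Since $G(\C)$ is a nonempty forest it has a vertex of degree $\le 1$. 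Introduce the strict partial order $i\prec j\iff x_i(1-x_j)\in CF(J_\C)$ (transitive: $x_i(1-x_j),\,x_j(1-x_k)\in J_\C$ force $x_i(1-x_k)\in J_\C$, hence in $CF$ since $x_i,1-x_k\notin J_\C$). The key lemma is that some degree-$\le 1$ vertex of $G(\C)$ is $\prec$-minimal, i.e.\ has $W_\lambda:=\{i:i\prec\lambda\}=\emptyset$: given a degree-$\le 1$ vertex $\lambda$ with $W_\lambda\neq\emptyset$, a polynomial identity such as $x_ix_j=x_j\,x_i(1-x_\lambda)+x_i\,x_jx_\lambda$ shows that every $G(\C)$-neighbor of any $i\in W_\lambda$ lies in $W_\lambda\cup\{\mu\}$, where $\mu$ is the unique neighbor of $\lambda$ (if any); so the forest $G(\C)$ restricted to this small vertex set has a degree-$\le1$ vertex $i\in W_\lambda$, and transitivity gives $W_i\subsetneq W_\lambda$, so iterating terminates.

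Fix such a $\lambda$ and partition $[n]\setminus\{\lambda\}$ using (i) as $A=\{i:x_ix_\lambda\in CF(J_\C)\}$, $Z=\{i:x_\lambda(1-x_i)\in CF(J_\C)\}$, and the neighbor set of $\lambda$ in $G(\C)$, which is empty or $\{\mu\}$. I claim $\lambda$ is a $0$-piercing identified by $Z$ in the first case, and a $1$-piercing of $\{\mu\}$ identified by $Z$ in the second; this is checked against Proposition \ref{prop:translation}. Conditions (1) and (2) are read directly off $A,Z,\mu$. Condition (3) --- substituting $x_i=0$ for $i\notin Z\cup\{\lambda,\mu\}$ and $x_j=1$ for $j\in Z$ annihilates all of $CF(J_\C)$ --- is the heart of the matter: one must exclude configurations like $x_ix_j\in CF(J_\C)$ with $i,j\in Z$, or $x_i(1-x_\mu)\in CF(J_\C)$ with $i\in Z$, and so on. Each is killed by a short identity in $\ff_2[x_1,\dots,x_n]$ forcing either $x_\lambda\in J_\C$ (impossible, since neuron $\lambda$ fires) or a relation between $\lambda$ and $\mu$ in $CF(J_\C)$ (impossible, since $\{\lambda,\mu\}$ is an edge of $G(\C)$), and here $W_\lambda=\emptyset$ is used crucially. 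Finally $\C-\lambda$ still has the all-zeros word and all remaining neurons firing, so the induction closes.

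\emph{Main obstacle.} The forward direction is a routine induction off Theorem \ref{thm:cfchange}; the difficulty is entirely in the reverse direction, and has two parts. First, a low-degree vertex of $G(\C)$ need not be a piercing unless it is also $\prec$-minimal (nothing is contained in its place field), so the forest structure must be leveraged --- via the localization of neighbors of $W_\lambda$ --- to produce a vertex that is simultaneously $\prec$-minimal and of degree $\le 1$. Second, one must carry out the bookkeeping in Proposition \ref{prop:translation}(3), enumerating every way a surviving element of $CF(J_\C)$ could fail to vanish under the substitution and ruling each out by nonemptiness of a place field or by the edge $\{\lambda,\mu\}$.
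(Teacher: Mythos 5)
Your proposal is correct and follows essentially the same route as the paper's proof: the forward direction is the same induction off Theorem \ref{thm:cfchange}, and the reverse direction hinges on the same key lemma, namely producing a degree-$\le 1$ vertex $\lambda$ of $G(\C)$ with no $x_j(1-x_\lambda)$ in $CF(J_\C)$, then verifying it is a $0$- or $1$-piercing and removing it. The only cosmetic difference is that you locate this vertex by a well-founded descent on the sets $W_\lambda$, whereas the paper derives a contradiction from a cycle of containments among leaves; both rest on the same neighbor-localization and transitivity facts, and your explicit check of Proposition \ref{prop:translation}(3) is, if anything, more detailed than the paper's.
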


\begin{cor} $\C$ is 0-inductively pierced if and only if $\C$ is 1-inductively pierced and $G(\C)$ consists only of isolated vertices.
\end{cor}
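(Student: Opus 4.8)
The plan is to read off the corollary from the two canonical-form characterizations already in hand, Theorem~\ref{thm:canonicalformat}(2) and Theorem~\ref{thm:01piercing}, together with the elementary observation that ``$G(\C)$ consists only of isolated vertices'' is exactly the statement that \emph{every} pair $\{i,j\}$ has at least one of $x_ix_j$, $x_i(1-x_j)$, $x_j(1-x_i)$ in $CF(J_\C)$; this is just the negation of the edge condition defining $G(\C)$.

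For the forward direction I would first note that if $\C$ is $0$-inductively pierced then it is $1$-inductively pierced. This is immediate from Definition~\ref{def:inductivelypiercedabstractdescrip} by induction on the number of neurons: a $0$-piercing is in particular an allowed (``$0$- or $1$-'') piercing, and $\C-\lambda$ being $0$-inductively pierced forces it, inductively, to be $1$-inductively pierced; the base case $\C=\{0,1\}$ is trivial. Next, Theorem~\ref{thm:canonicalformat}(2) says $CF(J_\C)$ contains exactly one of $x_ix_j$, $x_i(1-x_j)$, $x_j(1-x_i)$ for each pair $i<j$; in particular at least one is present for each pair, so no pair is an edge of $G(\C)$, i.e.\ $G(\C)$ consists only of isolated vertices.

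For the reverse direction, suppose $\C$ is $1$-inductively pierced and $G(\C)$ has only isolated vertices. By Theorem~\ref{thm:01piercing}, $CF(J_\C)$ consists of degree-two pseudo-monomials with at most one of $x_ix_j$, $x_i(1-x_j)$, $x_j(1-x_i)$ per pair $\{i,j\}$ (condition (i)). Since $G(\C)$ is edgeless, at least one of these three is present for every pair, so in fact exactly one is present per pair. It remains to observe that a degree-two pseudo-monomial in $J_\C$ cannot have the form $(1-x_i)(1-x_j)$: by the remark following Proposition~\ref{prop:RFstructure}, $J_\C$ contains no pseudo-monomial with empty $\sigma$, since the all-zeros word lies in $\C$. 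Hence $CF(J_\C)$ has precisely the shape required in Theorem~\ref{thm:canonicalformat}(2), and so $\C$ is $0$-inductively pierced.

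The argument is short because Theorems~\ref{thm:canonicalformat} and~\ref{thm:01piercing} do the heavy lifting; the only point requiring care is the bookkeeping in the reverse direction — pinning down that ``at most one per pair'' (from Theorem~\ref{thm:01piercing}), ``at least one per pair'' (from $G(\C)$ having no edges), and ``no $(1-x_i)(1-x_j)$ terms'' (from $00\cdots0\in\C$) together force exactly the normal form demanded by Theorem~\ref{thm:canonicalformat}(2). I expect no genuine obstacle beyond making that translation airtight.
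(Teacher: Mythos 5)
Your proof is correct and follows the same route the paper intends: the paper's entire justification is ``This follows by Theorem~\ref{thm:canonicalformat}'', i.e.\ reading the corollary off the canonical-form characterizations of $0$- and $1$-inductively pierced codes exactly as you do. Your version just makes explicit the bookkeeping the paper leaves implicit (that $0$-inductively pierced implies $1$-inductively pierced, and that ``at most one per pair'' plus ``edgeless'' plus the exclusion of $(1-x_i)(1-x_j)$ terms recovers the normal form of Theorem~\ref{thm:canonicalformat}(2)), which is a faithful expansion rather than a different argument.
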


This follows by Theorem \ref{thm:canonicalformat}.

\begin{proof}  First, we show the conditions are necessary. By Theorem \ref{thm:canonicalformat}, any $1$-inductively pierced code has a canonical form with only degree-two polynomials using each pair $i,j$ at most once, so condition (i) follows.  Condition (ii) follows by induction: the only 1-inductively pierced code on 1 neuron is $\{0,1\}$ and the graph of this code is a single vertex.  If $\lambda$ is a 0- or 1-piercing of $\C$ and $\C-\lambda$ is 1-inductively pierced, then $G(\C-\lambda)$ is a forest by induction.  If $\lambda$ is a $0$-piercing, then by Theorem \ref{thm:cfchange}, $CF(J_\C)$ will contain a pair $x_\lambda x_j$ or $x_\lambda(1-x_j)$ for every element $j\in [n]-\lambda$, plus the relationships in $CF(J_{\C-\lambda})$, so $G(\C)$ is obtained from $G(\C-\lambda)$, by adding an isolated vertex $\lambda$.  If $\lambda$ is a 1-piercing of $\lambda_i$, then again by Theorem \ref{thm:cfchange}, $CF(J_\C)$ will have a pair $x_\lambda x_j$ or $x_\lambda(1-x_j)$ for every element $j\in [n]-\{\lambda,\lambda_i\}$. so $G(\C)$ is obtained from $G(\C-\lambda)$ by adding a vertex $\lambda$, and adding single edge connecting $\lambda$ with $\lambda_i$, so $G(\C)$ is also a forest.\\

To show these two conditions are sufficient, we proceed by induction on $n$. For $n=1$, then the only code which has no degree 1 terms in its canonical form is the code $\{0,1\}$, whose canonical form is empty. In this case, condition (i) is trivially satisfied, and $G(\{0,1\})$ an isolated vertex, so the code meets meets condition (ii).  This code is trivially 1-inductively pierced.

Now, let $\C$ be an arbitrary code on $n$ vertices meeting the required conditions.  We will show that there is some vertex $\lambda$ of $G(\C)$ with degree $\leq 1$ such that $x_j(1-x_\lambda)$ does not appear in $CF(J_\C)$ for any $j \in [n]\backslash\{\lambda\}$.  Once we prove that such a $\lambda$ must exist, we will show that it is either a $0$ or $1$ piercing of $\C$; furthermore, that $\C-\lambda$ meets the required conditions and hence by induction $\C-\lambda$ is 1-inductively pierced.

First, we prove that such a $\lambda$ must exist.  Let $L$ be the set of elements of $[n]$ which have degree $\leq 1$ in $G(\C)$; since $G$ is a forest and $n\geq 2$, $L$ contains at least two elements.  Suppose by way of contradiction that for every element $\ell \in L$ there is some $i\in [n]$ such that $x_i(1-x_\ell)\in CF(J_\C)$. We will show there is some $\ell'\in L$ with $x_\ell'(1-x_\ell)\in CF(J_\C)$.   To do so, we will follow a path in $G(\C)$ starting from $i$ of vertices $j$ with $x_j(1-x_\ell)\in CF(J_\C)$, until we reach a leaf.  If $i\in L$, then we're done. If not, then $i$ has degree $\geq 2$, and so if $i$ and $\ell$ are connected in $G(\C)$, the degree of $i$ allows us to move along a path away from $G(\C)$ towards any leaf; if they are not connected, we may move along any path towards any leaf.  Let $\{i,j\}$ be the first edge along this path. Note that $\{j,\ell\}$ is not an edge, by our choice of path. Hence, one of $x_\ell x_j, x_\ell(1-x_j)$, or $x_j(1-x_\ell)$ is in $CF(J_\C)$.   If $x_\ell x_j\in CF(J_\C)$, then we know $x_i(x_\ell x_j) + x_j( x_i(1-x_\ell)) = x_ix_j\in CF(J_\C)$, but this is a contradiction as $\{i,j\}$ is an edge.  If $x_\ell(1-x_j) \in CF(J_\C)$, then $(1-x_j)[x_i(1-x_\ell)] + x_i[x_\ell(1-x_j)] = x_i(1-x_j)\in CF(J_\C)$, again a contradiction as $\{i,j\}$ is an edge.  So it must be the case that $x_j(1-x_\ell)\in CF(J_\C)$.
 
  Repeating this as many times as necessary as we follow the path, we will eventually arrive at a leaf $\ell'$ with $x_\ell'(1-x_\ell)$, and thus find another element $\ell'$ of $L$  with $x_\ell'(1-x_\ell)$.  Since $L$ is finite, repeating this leaf-finding process gives a list $\ell_1,\ell_2,...,\ell_k$ of vertices with each of $\ell_i(1-\ell_{i+1})$ and $\ell_k(1-x_{\ell_1})$ appearing in $CF(J_\C)$.  Since for any triple $i,j,k$,  where $x_i(1-x_j)$ and $x_j(1-x_k)\in CF(J_\C)$, we have  $x_i(1-x_k)\in CF(J_\C)$, we then obtain that $\ell_1(1-x_k)\in CF(J_\C)$ as well as $x_{\ell_k}(1-x_1)$, and this contradicts condition (i). Thus, such a $\lambda$ must exist.

Now, given such a $\lambda$, since has degree $\leq 1$ in $G(\C)$, there is at most one $\lambda_i$ which there is no degree-2 polynomial involving $\lambda$ and $\lambda_i$. If there is none, then by condition (i), $\lambda$ is a $0$-piercing.  If there is exactly one, $\lambda$ is a 1-piercing of $\lambda_i$.  

Finally note that by Theorem \ref{thm:cfchange}, the canonical form $CF(J_{\C-\lambda})$ is exactly $CF(J_\C)$ with any term involving $\lambda$ removed; such a canonical form will still have only degree-2 pseudo-monomials and condition (i) will still be met.  Furthermore, $G(\C-\lambda)$ is just $G(\C)$ with $\lambda$ deleted since no other edges are affected, so $G(\C-\lambda)$ is a forest, so condition (ii) is still met.  Thus, by induction, $\C-\lambda$ is 0 or 1-inductively pierced and the result holds.
\end{proof}

The proof of the reverse implication above is quite powerful. Not only does it show that 1-inductively pierced codes can be detected, it actually gives us a way to determine a sequence $(\lambda_1, \ldots, \lambda_n)$ of curve labels such that $\lambda_{i}$ is a $0$ or $1$ piercing of $\mathcal D_{\calc - \lambda_{i+1} - \cdots - \lambda_n}$, which we will call a \emph{drawing order}. Formalizing the algorithm, we determine a possible drawing order as follows:

\begin{alg}\label{alg:ordering}
$\ $\\

{\bf Input:}  $CF(J_\C)$ and $G(\C)$ meeting conditions (i) and (ii).

{\bf Output: }an 1-inductively pierced drawing order for $\C$.\\

Step 0: Initialize an empty list $L$.

Step 1: Find a vertex $\ell$ in $G(\C)$ such that no pseudo-monomial of the form $x_j(1-x_\ell)$ is present in $CF(J_\C)$, and $\deg(\ell)\leq 1$.

Step 2: Set $L = (\ell, L)$.

Step 3: If $|L| = n$, stop.  Otherwise, set $\C  = \C-\ell$, recompute $CF(J_\C)$ and $G(\C)$, and return to Step 1.\\

\end{alg}

Note that this algorithm will serve to give an acceptable ordering even if the code is 0-inductively pierced.  Section 5 contains an example of a drawing order outputted from the above algorithm.\\

We believe similar results can be continued to $k$-inductively pierced codes, as in the following conjecture:
\begin{conj*} A code is $k$-inductively pierced if and only if $CF(J_\C)$ consists only of degree two pseudo-monomials meeting the following conditions:
\begin{enumerate}[(i)]
\item For each pair $\{i,j\}$, at most one of $x_ix_j, x_i(1-x_j), x_j(1-x_i)$ appears in $CF(J_\C)$.
\item $G(\C)$ is a chordal graph with no $k+2$-cliques.
\end{enumerate}
\end{conj*}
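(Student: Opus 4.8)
\medskip
\noindent We outline an approach, extending the proofs of Theorems~\ref{thm:canonicalformat} and \ref{thm:01piercing} with ``forest'' replaced by ``chordal graph with no $(k+2)$-clique''.

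\medskip
\noindent\emph{Necessity.} We would first record the evident extension of Theorem~\ref{thm:cfchange}: if $\lambda$ is a $j$-piercing of $\Lambda=\{\lambda_1,\dots,\lambda_j\}$ at background zone $z$, then
$$CF(J_\C)=CF(J_{\C-\lambda})\cup\{x_ix_\lambda\mid i\in[n]\setminus(z\cup\{\lambda\}\cup\Lambda)\}\cup\{x_\lambda(1-x_j)\mid j\in z\},$$
which should follow from the neuron-removal analysis of \cite{CY15} exactly as in the cases $j\le 2$. Then we induct on $n$; the base case $\C=\{0,1\}$ has empty canonical form and $G(\C)$ a single vertex. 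For the step, let $\lambda$ be a $j$-piercing, $j\le k$, with $\C-\lambda$ $k$-inductively pierced, so by induction $CF(J_{\C-\lambda})$ is degree-two and meets (i), and $G(\C-\lambda)$ is chordal with no $(k+2)$-clique. Conditions (2)--(3) of Definition~\ref{def:k-piercing} and Proposition~\ref{prop:RFstructure} put the zones $z\cup\{\lambda\}$, $z\cup\{\lambda_a\}$, $z\cup\{\lambda_a,\lambda_b\}$, and $z\cup\{\lambda,\lambda_a\}$ in $\Z$, so no degree-two pseudo-monomial on a pair from $\Lambda\cup\{\lambda\}$ lies in $J_\C$: thus $\Lambda\cup\{\lambda\}$ is a clique of $G(\C)$, and by the extended Theorem~\ref{thm:cfchange} the only edges of $G(\C)$ at $\lambda$ go to $\Lambda$. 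Hence $G(\C)$ is $G(\C-\lambda)$ with a simplicial vertex adjoined along a clique of size $j\le k$; this preserves chordality, and each clique of $G(\C)$ either avoids $\lambda$ (so has size $\le k+1$ by induction) or lies in $\Lambda\cup\{\lambda\}$ (so has size $\le j+1\le k+1$), so $G(\C)$ is chordal with no $(k+2)$-clique. The new generators all involve $x_\lambda$ and use each pair at most once, so $CF(J_\C)$ is still degree-two and meets (i).

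\medskip
\noindent\emph{Sufficiency.} Assume $CF(J_\C)$ is degree-two and meets (i) and (ii); we induct on $n$, the case $\C=\{0,1\}$ being trivial. Let $\prec$ be the strict partial order on $[n]$ with $a\prec b\iff x_a(1-x_b)\in CF(J_\C)$ (transitivity is the polynomial identity used in the proof of Theorem~\ref{thm:01piercing}; irreflexivity and antisymmetry follow from (i)). The engine of the induction is the claim that $G(\C)$ has a simplicial vertex $\lambda$ that is $\prec$-minimal, i.e.\ with $x_i(1-x_\lambda)\notin CF(J_\C)$ for all $i$. Granting it, set $Z=\{i\mid x_\lambda(1-x_i)\in CF(J_\C)\}$ and $A=\{i\mid x_\lambda x_i\in CF(J_\C)\}$; since $CF(J_\C)$ is degree-two, meets (i), and contains no $x_i(1-x_\lambda)$, one has $[n]\setminus\{\lambda\}=A\sqcup Z\sqcup N(\lambda)$, and since $\{\lambda\}\cup N(\lambda)$ is a clique in a graph with no $(k+2)$-clique, $j:=\deg_{G(\C)}\lambda\le k$. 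We would then check the conditions of Proposition~\ref{prop:translation} for $\lambda$ being a $j$-piercing of $N(\lambda)$ identified by $Z$: (1) and (2a)--(2c) are immediate from the partition and from $CF(J_\C)$ being degree-two, while (3) is a finite case check --- were some $f_{ab}\in CF(J_\C)$ to survive the substitution $x_i=0$ $(i\in A)$, $x_i=1$ $(i\in Z)$, one would combine (via Proposition~\ref{prop:RFstructure}) transitivity of $\prec$, the fact that a field disjoint from $U_\lambda$ has all its subfields disjoint from $U_\lambda$, and, \emph{crucially}, that $N(\lambda)$ is a clique, to force a relation contradicting that $a$ or $b$ lies in $N(\lambda)$, or contradicting the $\prec$-minimality of $\lambda$. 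Thus $\lambda$ is a $j$-piercing with $j\le k$; by the extended Theorem~\ref{thm:cfchange}, $CF(J_{\C-\lambda})$ is $CF(J_\C)$ with the $\lambda$-terms deleted --- still degree-two, still meeting (i) --- and $G(\C-\lambda)=G(\C)-\lambda$ is chordal with no $(k+2)$-clique, so by induction $\C-\lambda$, and hence $\C$, is $k$-inductively pierced.

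\medskip
\noindent\emph{The main obstacle.} All of the above is bookkeeping once the claim is proved, so the heart of the matter is to show that \emph{in a chordal $G(\C)$ whose canonical form meets (i), some simplicial vertex is $\prec$-minimal.} The natural attack copies Theorem~\ref{thm:01piercing}: if no simplicial vertex is $\prec$-minimal, each simplicial $s$ has some $i(s)\prec s$, and we walk in $G(\C)$ from $i(s)$ keeping the current vertex $\prec s$ until a simplicial $s'\prec s$ is reached, so that iterating produces a $\prec$-cycle among simplicial vertices, contradicting (i). Each step is licensed by a \emph{propagation lemma}: if $a\prec b$, $\{a,c\}\in E(G(\C))$ and $\{b,c\}\notin E(G(\C))$, then $c\prec b$ --- the relation certified by $\{b,c\}\notin E(G(\C))$ cannot be $x_bx_c$ or $x_b(1-x_c)$, since with $a\prec b$ either would force $x_ax_c$ or $x_a(1-x_c)$ into $CF(J_\C)$ and contradict $\{a,c\}\in E(G(\C))$ --- together with the fact that a non-simplicial $v\prec s$ always has a neighbour outside $N(s)\cup\{s\}$ (otherwise $N(v)\subseteq N(s)$ would be a clique). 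For a forest the walk terminates at once by heading toward a leaf; the real difficulty in a general chordal graph is to guarantee the walk reaches a simplicial vertex rather than cycling, while controlling the ``external'' neighbours picked up along the way. We expect this to require localizing to the connected component of $i(s)$ in the induced (chordal) subgraph on $\{v\mid v\prec s\}$ and running a clique-tree / leaf-clique argument there; establishing this termination step --- equivalently, the existence of a vertex of $G(\C)$ that is simultaneously simplicial and $\prec$-minimal --- is the part we do not yet see how to push through cleanly, and is presumably why the statement is posed as a conjecture.
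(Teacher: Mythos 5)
This statement is posed in the paper as a \emph{conjecture}: the authors do not prove it, and they say explicitly that only the ``only if'' direction is established, via the observation that the inductive removal of piercings exhibits a perfect elimination ordering of $G(\C)$ in which each eliminated vertex has back-degree at most $k$ (whence chordality and $(k+1)$-colorability, hence no $(k+2)$-clique). Your necessity argument is essentially this same argument, organized as an induction on adding a simplicial vertex rather than removing one, and your clique bound (every clique through $\lambda$ lies in $\Lambda\cup\{\lambda\}$) is a slightly more direct route to ``no $(k+2)$-clique'' than the paper's colorability remark. It does lean on an extension of Theorem \ref{thm:cfchange} to $j$-piercings for $j>2$, but since the paper states Theorem \ref{thm:cfchange} itself without proof (citing \cite{CY15}), this is consistent with the paper's own level of rigor.

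For sufficiency, your outline goes beyond anything in the paper, and the gap you isolate is the real one: everything reduces to producing a vertex of $G(\C)$ that is simultaneously simplicial and $\prec$-minimal, and the walk/propagation argument that closes this in the forest case (Theorem \ref{thm:01piercing}) does not obviously terminate in a general chordal graph. Since you flag this step as unproved rather than asserting it, your submission is an honest partial argument, not a proof of the conjecture --- but nothing in what you wrote contradicts the paper, and the paper offers no proof for you to have missed. Two further cautions if you pursue the sufficiency direction: first, even granting the claim, your verification of condition (3) of Proposition \ref{prop:translation} is only gestured at and would need to be written out (it is the analogue of the two case analyses in the proof of Theorem \ref{thm:canonicalformat}); second, the paper warns that not every $2$-inductively pierced abstract description is realizable by a well-formed diagram, so any step of your argument that silently invokes a well-formed realization (e.g.\ via Proposition \ref{prop:RFstructure} read geometrically) must be checked to work purely combinatorially from the ideal $J_\C$.
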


The conditions in this conjecture are certainly necessary.  A $k$-inductively pierced code has an ordering of $0,1,2$,... $k$ piercings; a quick investigation finds that when removing a $k$-piercing, the graph $G(\C)$ changes in a very predictable way. In particular, we remove a vertex $v$ of degree $k$ whose neighborhood, along with $v$, forms a $k+1$-clique. The inductive ordering of the piercings thus exhibits a {\it perfect elimination ordering} \cite{West} for the graph $G(\C)$ which implies that $G(\C)$ must be chordal, and by the  degree of each removed vertex we know the graph will be $k+1$-colorable and hence contain no $k+2$-clique.  We believe that a similar proof technique to that used in Theorem \ref{thm:01piercing} might be able to extend more generally to prove this conjecture; however, since not even all 2-inductively pierced codes are realizable, it might be more productive to characterize the graphs of inductively pierced codes rather than the more general classes of $k$-inductively pierced codes.  Likewise, while the results of Stapleton, \emph{et al.}, \cite{SZHR11} imply that any code which is 1-inductively pierced can be realized with circular place fields, they also show that 2-inductively pierced codes require additional conditions for realizability that are not satisfied by every 2-inductively pierced code. It is not yet known if these additional conditions can be translated into clear conditions on the canonical form.

\section{Toric Ideals of $k$-inductively Pierced Neural Codes}\label{sect:toricideal}

In this section, we discuss necessary and sufficient conditions for $0$ and $1$-inductively pierced codes in terms of their corresponding toric ideals.  This gives a second computational algebraic geometry approach to checking whether codes are $k$-inductively pierced. We begin by investigating $0$-inductively pierced codes.

\begin{thm} \label{thm:0idealiff0pierced}
Let $\mathcal{C}$ be a code on $n$ neurons such that each neuron fires at least once, i.e. $\bigcup_{z \in \mathcal C} \ \supp(z)=[n]$.  Let $\mathcal C$ be well-formed.  Then, the toric ideal $I_{\mathcal C} = \langle 0 \rangle$ if and only if $\mathcal{C}$ is $0$-inductively pierced.
\end{thm}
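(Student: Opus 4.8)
The plan is to prove both directions by connecting the toric ideal to the hypergraph $\mathcal{H}_{\mathcal C}$, using the characterization of $0$-inductively pierced codes already established in Theorem \ref{thm:canonicalformat}(2) together with Proposition \ref{prop:crossing}. For the forward direction, I would argue the contrapositive: suppose $\mathcal C$ is not $0$-inductively pierced. By Proposition \ref{prop:crossing}, in any well-formed realization $d$ of $\mathcal C$ some pair of curves $\lambda_i, \lambda_j$ must cross, so the place fields $U_i, U_j$ are neither disjoint nor nested; hence all four zones $\emptyset$-relative patterns $\{i\}, \{j\}, \{i,j\}$ relative to a common background region appear — more carefully, there exist codewords whose supports exhibit $U_i \cap U_j \neq \emptyset$, $U_i \not\subseteq U_j$, $U_j \not\subseteq U_i$. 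Translating to the hypergraph: vertices $i$ and $j$ lie together in some edge, and each lies in an edge missing the other. I would then exhibit an explicit balanced edge set (equivalently a nonzero binomial in $I_{\mathcal C}$) built from these edges — for instance, if $e_1 \ni i,j$, $e_2 \ni i$ with $j \notin e_2$, $e_3 \ni j$ with $i \notin e_3$, one seeks a multiset relation among $e_1, e_2, e_3$ and possibly further edges that is degree-balanced at every vertex; this produces $I_{\mathcal C} \neq \langle 0\rangle$. A cleaner route is to use Theorem \ref{thm:canonicalformat}(2) directly: if $\mathcal C$ is not $0$-inductively pierced then $CF(J_{\mathcal C})$ contains an element that is not of the stated degree-two ``disjoint/nested'' form, which forces the existence of three sets with a genuine containment $\bigcap_{i\in\sigma}U_i \subseteq \bigcup_{j\in\tau}U_j$ with $|\sigma|+|\tau| \geq 3$ or an inclusion not accounted for by pairwise nesting, and I would convert this into a nontrivial binomial relation among the corresponding $p_c$'s.

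For the reverse direction, suppose $\mathcal C$ is $0$-inductively pierced. I would induct on $n$, using Lemma \ref{lem:onecodeword}: there is a $0$-piercing $\lambda$, so $\mathcal C$ is obtained from $\mathcal C - \lambda$ by adjoining a zero row and then a single codeword $v$ with $\supp(v) = \supp(c) \cup \{\lambda\}$ for some $c \in \mathcal C - \lambda$. By induction $I_{\mathcal C - \lambda} = \langle 0 \rangle$. I must show adjoining the neuron $\lambda$ and the codeword $v$ keeps the toric ideal zero. The map $\phi_{\mathcal C}$ sends $p_v \mapsto x_\lambda \prod_{i \in \supp(c)} x_i$; since $\lambda$ is a brand-new variable appearing in no other monomial image $\phi_{\mathcal C}(p_{c'})$ (because $\lambda$ was a $0$-piercing, no other codeword contains $\lambda$), the monomial $\phi_{\mathcal C}(p_v)$ is the unique image divisible by $x_\lambda$. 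Hence any binomial $p^u - p^w \in \ker \phi_{\mathcal C}$ must have equal $p_v$-degree in $u$ and $w$ (matching the power of $x_\lambda$), so cancelling $p_v$ reduces to a relation in $\ker \phi_{\mathcal C - \lambda} = \langle 0\rangle$; therefore $u = w$ and $I_{\mathcal C} = \langle 0 \rangle$. The base case $n = 1$ is immediate since then $\mathcal C = \{0,1\}$, $\C^* = \{1\}$, and $\phi_{\mathcal C}: \mathbb{K}[p_1] \to \mathbb{K}[x_1]$ is injective.

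The main obstacle is the forward direction: one must rule out that a non-$0$-inductively-pierced code could still have a trivial toric ideal. The toric ideal detects only multiplicative/monomial relations among supports, which is coarser than the full set-theoretic information in the canonical form, so I would need to confirm that whenever a genuine crossing is forced (Proposition \ref{prop:crossing}), the resulting configuration of codeword supports does produce a balanced edge set — i.e. a nonzero binomial. I expect the key lemma to be: if vertices $i,j$ together with the ``$i$-only,'' ``$j$-only,'' and ``$i,j$-together'' edges all occur in $\mathcal{H}_{\mathcal C}$, then these edges (possibly augmented by edges forced to exist by well-formedness, such as the background zone) support a balanced bicoloring. I would prove this by a direct degree-counting construction, treating the common background support as the ``anchor'' that balances the non-square part, and this careful combinatorial bookkeeping is where the real work lies; everything else is bookkeeping with $\phi_{\mathcal C}$ and the inductive removal of piercings.
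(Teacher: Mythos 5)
Your proposal is correct and follows essentially the same route as the paper: the reverse direction is the same induction on removing a $0$-piercing (the paper phrases the cancellation via primitive balanced edge sets in $\mathcal H_{\C}$ rather than directly in $\ker\phi_{\C}$, but the point is identical---only one hyperedge contains $\lambda$), and the ``key lemma'' you anticipate for the forward direction is precisely the paper's Lemma~\ref{lem:generator}, where well-formedness gives an open ball around the crossing point contained in a single background zone $z$ of the remaining curves, so $00z,10z,01z,11z\in\C$ and $p_{11z}p_{00z}-p_{10z}p_{01z}$ is a nonzero element of $I_{\C}$. One caution: your weakened restatement (merely $U_i\cap U_j\neq\emptyset$ with neither field contained in the other, with no common background) would not by itself produce a balanced quadratic, so the common-background ``anchor'' you mention is essential and must be extracted from well-formedness exactly as above.
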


In order to prove Theorem \ref{thm:0idealiff0pierced}, we will need  the following lemma that notes that crossings of curves in well-formed diagrams result in non-zero toric ideals. The proof is constructive, describing a quadratic binomial that appears in the toric ideal of a given neural code if a realization contains a crossing of curves.

\begin{lemma}\label{lem:generator} Let $\mathcal C \subset \{0,1\}^n$ be a neural code with abstract diagram $\D_{\mathcal C}$. If a well-formed diagram $d$ of $\D_{\mathcal C}$ contains two curves that intersect, then the toric ideal $I_{\mathcal C}$ is nonempty. \label{thm:crossinggenerator}
\end{lemma}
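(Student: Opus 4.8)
The plan is to produce an explicit balanced edge set in the hypergraph $\mathcal H_{\mathcal C}$ whenever two curves cross in a well-formed diagram $d$; by the discussion preceding Theorem \ref{thm:combinatorialdegreebound}, every balanced edge set gives rise to a binomial in $I_{\mathcal C}$, and if this binomial is nonzero we are done. So the real content is to locate four (or more) codewords whose supports assemble into a nontrivial balanced edge set.

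First I would set up the local picture. Suppose curves $\lambda_1$ and $\lambda_2$ intersect in $d$. Since $d$ is well-formed, the two curves cross transversally and at each crossing point exactly two curves meet (conditions (2) and (3) of well-formedness). Pick a crossing point $p$ of $\lambda_1$ and $\lambda_2$; a small neighborhood of $p$ is divided by the two arcs into four local sectors, and because zones are connected (condition (4)) and the neighborhood can be taken small, these four sectors lie in four zones that differ only in their membership of $\lambda_1$ and $\lambda_2$. Concretely, there is a set $Z \subseteq [n] \setminus \{\lambda_1, \lambda_2\}$ such that all four of $Z$, $Z \cup \{\lambda_1\}$, $Z \cup \{\lambda_2\}$, $Z \cup \{\lambda_1, \lambda_2\}$ are zones, i.e. the corresponding codewords lie in $\mathcal C$. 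This is exactly the local configuration drawn in Figure \ref{fig:crossingofcurves} in the proof of Proposition \ref{prop:crossing}, and I would cite that figure and reuse that reasoning.

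Next I would exhibit the balanced edge set. Write $z$, $z_1$, $z_2$, $z_{12}$ for the codewords with supports $Z$, $Z \cup \{\lambda_1\}$, $Z \cup \{\lambda_2\}$, $Z \cup \{\lambda_1, \lambda_2\}$ respectively. I claim $\mathcal E = (B,R)$ with $B = \{Z, Z \cup \{\lambda_1, \lambda_2\}\}$ and $R = \{Z \cup \{\lambda_1\}, Z \cup \{\lambda_2\}\}$ is a balanced edge set of $\mathcal H_{\mathcal C}$: for every vertex $v \in Z$, $\deg_B v = 2 = \deg_R v$; for $v = \lambda_1$, $\deg_B v = 1 = \deg_R v$; similarly for $v = \lambda_2$; and all other vertices have degree $0$ on both sides. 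The associated binomial is
\[
f_{\mathcal E} = p_{z} \, p_{z_{12}} - p_{z_1} \, p_{z_2},
\]
with the convention that $p_{00\ldots 0}$ is omitted (if $Z = \emptyset$, then $z$ is the all-zeros word, which is not a generator of the source ring, and the binomial degenerates to $p_{z_{12}} - p_{z_1} p_{z_2}$ — still a legitimate nonzero element of $I_{\mathcal C}$, as in Example \ref{ex:A1}). In all cases the two monomials $p_z p_{z_{12}}$ and $p_{z_1} p_{z_2}$ are distinct — their indeterminates correspond to distinct codewords because $z_1 \neq z_2$ and $\{z, z_{12}\} \neq \{z_1, z_2\}$ as sets of codewords — so $f_{\mathcal E} \neq 0$. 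Hence $I_{\mathcal C} \neq \langle 0 \rangle$, proving the lemma.

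The main obstacle I anticipate is the bookkeeping at the boundary case $Z = \emptyset$ and, more importantly, making rigorous the claim that a crossing forces all four of those codewords to be present. The subtlety is that one must use well-formedness in an essential way: without condition (4) (connected zones) or condition (3) (at most double points) a crossing of $\lambda_1$ and $\lambda_2$ need not create a genuine four-fold local splitting of a single background zone — other curves could pass through the neighborhood of $p$, or a zone could be disconnected so that the four sectors near $p$ do not correspond to the "expected" four zones. I would handle this by taking the neighborhood of $p$ small enough to avoid all other curves (possible since curves intersect in finitely many points and $p$ is not on any $\lambda_i$ for $i \neq \lambda_1, \lambda_2$) and invoking connectedness of zones to identify each of the four sectors with the zone determined by its $\lambda_1, \lambda_2$-membership together with the common background membership $Z$ read off from any one sector. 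Once this local analysis is pinned down, the algebraic conclusion is immediate.
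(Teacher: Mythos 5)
Your proposal is correct and follows essentially the same route as the paper: both extract the four codewords $00z$, $10z$, $01z$, $11z$ from a small neighborhood of the crossing point (using well-formedness to find a single background zone) and exhibit the quadratic $p_{11z}p_{00z}-p_{10z}p_{01z}$ in $I_{\mathcal C}$, your balanced-edge-set phrasing being just an equivalent packaging of the paper's direct verification that $\phi_{\mathcal C}$ kills this binomial. Your explicit handling of the degenerate case $Z=\emptyset$ (where $p_{00\ldots0}$ is not an indeterminate of the source ring) is a small point of care that the paper's write-up glosses over.
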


\begin{proof}
Let $d$ be a well-formed diagram of $\D_{\mathcal C}$ such that two curves $\lambda_1$ and $\lambda_2$ intersect.  Let $q$ be an intersection point of $\lambda_1$ and $\lambda_2$. Since $d$ is well-formed, there exists an open ball around $q$, that is contained entirely in a single zone $Z$ of $d -\lambda_1 - \lambda_2$ with associated codeword $z$ as illustrated in Figure \ref{fig:crossingofcurves}.  Thus, the following codewords must be in $\mathcal C$: $10z, 01z, 00z, 11z$. 
Then the matrix of the codewords in $\mathcal C$ looks like 
\[ \left(
\begin{array}{ccccc}
0 & 1 & 0 & 1 &\ldots \\
0 & 0 & 1 & 1 &\ldots \\
z^T & z^T & z^T & z^T & \ddots
\end{array}
\right), \] 
and we have 
\begin{align*}
& \phi_{\mathcal C}(p_{00z}) = \prod\limits_{i\in supp(z)}^{}x_i
& \phi_{\mathcal C}(p_{10z} ) =  x_1\prod\limits_{i\in supp(z)}^{}x_i \\
&  \phi_{\mathcal C}(p_{01z}) =  x_2\prod\limits_{i\in supp(z)}^{}x_i  &
 \phi_{\mathcal C}(p_{11z}) =  x_1x_2\prod\limits_{i\in supp(z)}^{}x_i.
\end{align*}
Therefore $p_{11z}p_{00z}-p_{10z}p_{01z}$ is an element of the toric ideal $I_{\mathcal C}$.  Indeed, $$\phi_{\mathcal C} (p_{11z}p_{00z}-p_{10z}p_{01z}) =\left(\prod\limits_{i\in supp(z)}^{}x_i\cdot x_1x_2\prod\limits_{i\in supp(z)}^{}x_i\right)-\left(x_1\prod\limits_{i\in supp(z)}^{}x_i\cdot x_2\prod\limits_{i\in supp(z)}^{}x_i\right)=0.$$  
\end{proof}


While our goal is to understand the realization of a code by understanding its toric ideal, we note that Lemma \ref{lem:generator} and its proof allow one to understand some things quickly about a toric ideal of a code simply by noticing motifs, i.e. place field configurations in its realization. Indeed, we can conclude that there is a quadratic binomial for every two fields that intersect transversally as in Figure \ref{fig:twofields}.  For example, the toric ideal of the code associated to a chain of $n$ fields  as illustrated in Figure \ref{fig:nfields} contains a quadratic binomials for each of its $n-1$ pairwise intersections.

\begin{figure}[!ht]
\includegraphics[width=.3\linewidth]{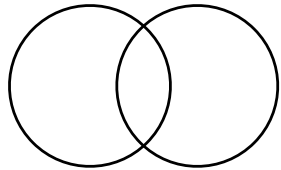}
\caption{Two fields that intersect transversally lead to a quadratic binomial in the toric ideal.}
\label{fig:twofields}
\end{figure}

\begin{figure}
\includegraphics[width=.7\linewidth]{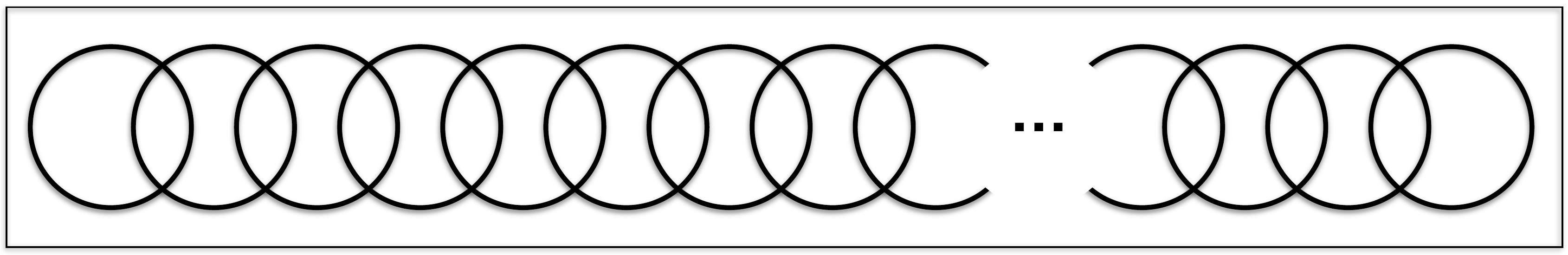}
\caption{A chain of $n$ fields intersecting transversally.}
\label{fig:nfields}
\end{figure}
Now we proceed to the proof of Theorem \ref{thm:0idealiff0pierced}, which states that the toric ideal of a code is trivial exactly when we fail to see the motif of Figure \ref{fig:twofields}.

\begin{proof}[Proof of Theorem  \ref{thm:0idealiff0pierced}]
Let $\mathcal{C}$ be a well-formed code on $n$ neurons such that each neuron fires at least once.

$( \Rightarrow)$ We will proceed by contradiction. Assume that $I_{\mathcal C} = \langle 0 \rangle$ and that $\D_{\mathcal C}$ is not inductively $0$-pierced. By Proposition \ref{prop:crossing}, since $\D_{\mathcal C}$ is not inductively $0$-pierced, there is some well-formed realization of $\D_{\mathcal C}$ with two intersecting curves.  Thus, by Lemma \ref{lem:generator}, the ideal $I_{\mathcal C}$ is non-empty, a contradiction.




$( \Leftarrow)$ Assume that $\mathcal{C}$ is $0$-inductively pierced.  To prove that $I_{\mathcal{C}}=\langle 0 \rangle$, we will proceed by induction on the number of neurons $n$.  If $n=1$, then immediately $I_{\mathcal{C}}=\langle 0 \rangle$.  For the induction step, assume that $I_{\mathcal{C}}=\langle 0 \rangle$ for $n=r$.  Now, let $n=r+1$.  Since $\mathcal{C}$ is $0$-inductively pierced, $\mathcal{C}$ has a $0$-piercing, $\lambda$, such that $\mathcal{C} - \lambda$ is $0$-inductively pierced. Thus, by our induction hypothesis, $I_{\mathcal{C} - \lambda}=\langle 0 \rangle$.

Recall from Section \ref{sec:background-toric-ideals} that we can understand the toric ideal of a code $\mathcal C$, by understanding its hypergraph.  So now let us compare the hypergraphs $\mathcal{H}_{\mathcal{C} - \lambda}$ and $\mathcal{H}_{\mathcal{C}}$. The hypergraph $\mathcal{H}_{\mathcal{C}}$ is obtained from $\mathcal{H}_{\mathcal{C} - \lambda}$ by adding the vertex $i$ and a single edge containing $i$ (only a single edge is added by Lemma \ref{lem:onecodeword}). Since there is only one edge that contains vertex $\lambda$, any primitive balanced edge set must be only on the nodes $[n]\setminus \lambda$, and hence corresponding to a primitive binomial in $I_{\mathcal{C} - \lambda}$. Therefore, since $I_{\mathcal{C} - \lambda}=\langle 0 \rangle$, the ideal $I_{\mathcal{C}}=\langle 0 \rangle.$
\end{proof}

Obtaining results for 1-inductively pierced codes is less direct, and we rely on the machinery in \cite{GP13} to give us a combinatorial perspective on the binomials in the toric ideals.  Theorem \ref{thm:1piercings} states that the toric ideals of 1-inductively pierced codes are generated by quadratics.  Thus, these codes give an infinite family of toric ideals generated by quadratics, which we do not believe have been studied before as class of ideals.

We introduce several lemmas before stating and proving Theorem \ref{thm:1piercings}.  We then end the section with a computational result.  For a subset $W \subseteq [n]$, let $U_W := \cap_{i \in W} U_i$.

\begin{lemma}\label{lem:containment}  Let $\mathcal C \subseteq \{0,1\}^n$ be well-formed and 1-inductively pierced with associated hypergraph $\mathcal H_{\calc}=([n], E(H_{\calc}))$. Let $W \cup \{\lambda\} \in E(H_{\calc})$ with $\lambda \notin W$ and $W \neq \emptyset$. If $U_W\setminus \cup_{i \notin W \cup \{\lambda\}} U_i \subset U_{\lambda}$, then there exists $\lambda_0 \in w$ such that $U_{\lambda_0} \subset U_{\lambda}$, or, in other words, there exists $\lambda_0 \in W$ such that for all $c \in \C$, if $\lambda_0 \in \supp c$  then $\lambda \in \supp c$.
\end{lemma}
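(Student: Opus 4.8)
The plan is to argue by induction on the number of neurons $n$, exploiting the inductive structure of $1$-inductively pierced codes together with Proposition \ref{prop:RFstructure} to translate set containments back and forth into pseudo-monomials in $J_\C$. First I would dispose of the base case (small $n$, where $W\cup\{\lambda\}$ being an edge with $W\neq\emptyset$ forces $|W|=1$ and the claim is immediate). For the inductive step, pick a $0$- or $1$-piercing $\mu$ of $\C$ with $\C-\mu$ still $1$-inductively pierced, as guaranteed by Definition \ref{def:inductivelypiercedabstractdescrip}. The case analysis then splits according to how $\mu$ relates to the edge $W\cup\{\lambda\}$.

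The key steps, in order: (1) If $\mu\notin W\cup\{\lambda\}$, then removing $\mu$ does not disturb the edge $W\cup\{\lambda\}$ (it remains an edge of $\mathcal H_{\C-\mu}$, since by Lemmas \ref{lem:onecodeword} and following the only codewords involving $\mu$ have supports $\supp(c)\cup\{\mu\}$ for $c\in\C$ and at most one further support, so the support $W\cup\{\lambda\}$ persists after deleting coordinate $\mu$), and the hypothesis $U_W\setminus\bigcup_{i\notin W\cup\{\lambda\}}U_i\subset U_\lambda$ passes to $\C-\mu$ — here one must check the union over $i\notin W\cup\{\lambda\}$ only loses the set $U_\mu$, which is harmless for a containment of the form ``$A\setminus(\text{bigger union})\subset U_\lambda$''. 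Then the induction hypothesis yields $\lambda_0\in W$ with $U_{\lambda_0}\subset U_\lambda$ in any realization of $\C-\mu$, and by Proposition \ref{prop:RFstructure} this is equivalent to $x_{\lambda_0}(1-x_\lambda)\in J_{\C-\mu}$; reinserting neuron $\mu$ (Theorem \ref{thm:cfchange}) only adds pseudo-monomials involving $x_\mu$, so $x_{\lambda_0}(1-x_\lambda)$ remains in $J_\C$, i.e. $U_{\lambda_0}\subset U_\lambda$ holds for $\C$ as well. (2) If $\mu=\lambda$, then by the piercing structure $\lambda$ is itself a $0$- or $1$-piercing; a $0$-piercing is contained in a single zone, and a $1$-piercing of some $\lambda_1$ forces $U_\lambda$ to meet only $U_{\lambda_1}$ among the other fields, so $W\cup\{\lambda\}$ being an edge with $W\neq\emptyset$ pins down $W\subseteq\{\lambda_1\}$, hence $W=\{\lambda_1\}$; but then $U_W\setminus(\cdots)\subset U_\lambda$ together with the piercing geometry (the zone identifying the piercing is a subset of $U_W$ not in $U_\lambda$) gives a contradiction unless $W=\{\lambda_1\}$ yields $U_{\lambda_1}\subset U_\lambda$ directly — so $\lambda_0=\lambda_1$ works. (3) If $\mu\in W$, set $W'=W\setminus\{\mu\}$; either $W'=\emptyset$, in which case $W=\{\mu\}$ and we are done with $\lambda_0=\mu$, or one shows $W'\cup\{\lambda\}$ is still an edge of $\mathcal H_{\C-\mu}$ and the containment descends (using that $\mu$'s field, being a $0$- or $1$-piercing, is small enough that deleting it does not destroy the relation on the remaining fields), apply induction to get $\lambda_0\in W'\subseteq W$, and lift back.

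The main obstacle I expect is bookkeeping in step (3) — verifying that after removing a piercing $\mu$ that happens to lie in $W$, the hypothesis $U_W\setminus\bigcup_{i\notin W\cup\{\lambda\}}U_i\subset U_\lambda$ still yields a usable containment for the smaller code. Removing $\mu$ replaces $U_W$ by the larger set $U_{W'}$, so the containment does not automatically persist; one has to use the specific geometry of piercings (that $U_\mu$ is either a $0$-piercing contained in one zone or a $1$-piercing meeting exactly one other field) to argue that the ``problematic'' part of $U_{W'}$ not already inside $U_\lambda$ can be reabsorbed into the subtracted union, or else that $\mu$ itself is the desired $\lambda_0$. This is where I would spend the most care; the translation via Proposition \ref{prop:RFstructure} and Theorem \ref{thm:cfchange} makes the lifting direction routine, but the descending direction genuinely uses that the removed curve is a piercing rather than an arbitrary neuron.
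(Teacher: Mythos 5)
Your skeleton --- induction on $n$, peel off a $0$- or $1$-piercing $\mu$ with $\C-\mu$ still $1$-inductively pierced, and split into cases according to how $\mu$ sits relative to $W\cup\{\lambda\}$ --- is exactly the paper's. But the step you yourself flag as the main obstacle is a genuine gap, and the paper closes it with an idea you are missing. When $\mu\in W$ and $|W|\ge 2$, you propose to recurse on $W'=W\setminus\{\mu\}$ and correctly note that the containment need not descend, since $U_{W'}\supseteq U_W$; you do not resolve this. The paper never recurses here. Instead, when $\mu\in W$ the piercing $\mu$ \emph{is} the desired $\lambda_0$ (or the case is impossible): if $\mu$ intersects $\lambda$, then a $0$- or $1$-piercing meets at most one other curve, forcing $W=\{\mu\}$ and a contradiction as in the $\mu=\lambda$ case; if $\mu$ does not intersect $\lambda$, then since $W\cup\{\lambda\}$ is an edge we have $U_\mu\cap U_\lambda\neq\emptyset$, so the two fields are nested, and a $0$- or $1$-piercing cannot contain another curve, which forces $U_{\mu}\subset U_\lambda$. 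So $\lambda_0=\mu$, with no recursion and no descent problem.

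Two further points. In your case $\mu=\lambda$, the conclusion ``$\lambda_0=\lambda_1$ works'' is false: if $\lambda$ is a $1$-piercing of $\lambda_1$ then $U_{\lambda_1}\not\subset U_\lambda$, since a piercing straddles the curve it pierces and cannot contain it. The correct resolution is that this case is vacuous: $W=\{\lambda_1\}$ is forced, the hypothesis then says $U_{\lambda_1}\setminus U_\lambda$ is covered by other fields, and well-formedness would force $\lambda$ to meet a second curve, contradicting that it is a $0$- or $1$-piercing. Finally, in the case $\mu\notin W\cup\{\lambda\}$, your claim that dropping $U_\mu$ from the subtracted union is ``harmless'' has the inclusion backwards: subtracting less enlarges the left-hand side, so the containment does not automatically pass to $\C-\mu$. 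The paper justifies the descent geometrically: if the enlarged set escaped $U_\lambda$, the curve $\mu$ would have to cross the boundary of $U_W$ at a point outside $U_\lambda$ near which the original set already escapes $U_\lambda$, which is impossible because $\mu$ intersects at most one other curve.
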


\begin{proof}  We will proceed by induction on $n$.  For the base case, let $n=2$ and $W \cup \{\lambda\} \in E(H_{\calc})$.  Then $W$ contains exactly one label, and the statement follows immediately.

Now assume the statement holds when $n \leq r$.  Let $n=r+1$.  Let $W \cup \{\lambda\} \in E(H_{\calc})$ with $\lambda \notin W$ and $W \neq \emptyset$, and assume $U_W\setminus \cup_{i \notin W \cup \{\lambda\}} U_i \subset U_{\lambda}$. Since $\mathcal C$ is inductively pierced, there exists a $0$ or $1$ piercing $\lambda_1$ in $\D_{\mathcal C}$ such that $\mathcal C - \lambda_1$ is 1-inductively pierced. We now investigate four possible cases based on the relationship of $\lambda_1$ to $\lambda$ and $W$.

(\emph{Case 1: $\lambda_1 \notin W$ and $\lambda_1 \neq \lambda$})  Notice that in this case, since $\lambda_1$ is a $0$ or $1$-piercing and can only intersect one other curve, we have $U_W\setminus \cup_{i \notin W \cup \{\lambda\} \cup \{\lambda_1\}} U_i \subset U_{\lambda}$ for any well-formed diagram.  Indeed, if $U_W\setminus \cup_{i \notin W \cup \{\lambda\} \cup \{\lambda_1\}} U_i$ is not a subset of $U_{\lambda}$, then $\lambda_1$ intersects the boundary of $U_W$ at point $q$ outside of $U_{\lambda}$ such that, within an $\epsilon$ ball of $q$, there exists a point in  $U_W \setminus \cup_{i \notin W \cup \{\lambda\} } U_i$ that is also outside of $U_{\lambda}$. Thus, the induction hypothesis is met for $\mathcal C - \lambda_1$, and the statement follows.

(\emph{Case 2: $\lambda_1 = \lambda$}) Notice that a $1$-piercing or $0$ piercing cannot fully contain another curve.  Thus, since $W \neq \emptyset$, and none of the curves whose labels are in $W$ can be fully contained in $U_{\lambda}$, the curve $\lambda_1 = \lambda$ must be a $1$-piercing, and furthermore, all curves whose labels are in $W$ must intersect $\lambda_1=\lambda$.  However, since $\lambda_1=\lambda$ is a $1$-piercing there can only be one curve label $\lambda_2$ in $W$.  Now, since $U_W\setminus \cup_{i \notin W \cup \{\lambda\}} U_i \subset U_{\lambda}$, the set $U_{\lambda_2} \setminus U_{\lambda_1=\lambda}$ must be covered by other fields in any well-formed realization of $\C$, however, by well-formedness, this would imply that $\lambda_1 = \lambda$ intersects at least one more curve in addition to $\lambda_2$, a contradiction.

(\emph{Case 3: $\lambda_1 \in W$ and $\lambda_1$ intersects $\lambda$}) In this case, since $\lambda_1$ cannot intersect any additional curves other than $\lambda$, we have $W=\{ \lambda_1\}$, a contradiction with reasoning parallel to Case 2.

(\emph{Case 4: $\lambda_1 \in W$ and $\lambda_1$ does not intersect $\lambda$})  Since $\lambda_1$ does not intersect $\lambda$, either $U_{\lambda} \subset U_{\lambda_1}$  or  $U_{\lambda_1} \subset U_{\lambda}$.  However, the fact that $\lambda_1$ is a $0$ or $1$-piercing in $\D_{\mathcal C}$ precludes the former, therefore $U_{\lambda_1} \subset U_{\lambda}$.

\end{proof}

\begin{lemma}\label{lem:balancededgesets}  Let $\mathcal C \subseteq \{0,1\}^n$ be well-formed and 1-inductively pierced.  Let $\mathcal H_{\mathcal C}$ be the hypergraph corresponding to $\mathcal C$, and let $\mathcal E=(B, R)$ be a primitive balanced edge set of $\mathcal H_{\mathcal C}$.  Let $Z \subset [n]$ and $\lambda \in [n] \setminus Z$. If $ Z \in B$ and $\{\lambda\} \cup Z \in R$, then there exists a hyperedge $\{\lambda\} \cup W \in B$ such that $U_W\setminus \cup_{i \notin W \cup \{\lambda\}} U_i \nsubseteq U_{\lambda}$ in all realizations of $\mathcal C$.
\end{lemma}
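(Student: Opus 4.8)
The plan is to begin by recasting the geometric conclusion as a condition on $\mathcal{C}$ alone. For a fixed $W\subseteq[n]$ with $\lambda\notin W$, any point of $U_W\setminus\bigcup_{i\notin W\cup\{\lambda\}}U_i$ has support either $W$ or $W\cup\{\lambda\}$, so this set fails to be contained in $U_\lambda$ exactly when $\mathcal{C}$ has a codeword with support $W$; since this is a property of $\mathcal{C}$ and not of the realization, the conclusion ``$U_W\setminus\bigcup_{i\notin W\cup\{\lambda\}}U_i\nsubseteq U_\lambda$ in all realizations'' is equivalent to ``$W$ is the support of a codeword of $\mathcal{C}$'' (with $W=\emptyset$ always qualifying, via the all-zeros codeword). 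Thus it suffices to exhibit a blue edge $\{\lambda\}\cup W\in B$ with $W$ a codeword support. Since $\{\lambda\}\cup Z\in R$ we have $\deg_B\lambda=\deg_R\lambda\geq 1$, so $B$ does contain some edge $\{\lambda\}\cup W$; if $W=\emptyset$ for one of them we are done. Otherwise, by Lemma~\ref{lem:containment}, such a $W$ failing to be a codeword support forces some $\lambda_0\in W$ with $U_{\lambda_0}\subseteq U_\lambda$, i.e.\ every edge of $\mathcal{H}_{\mathcal{C}}$ meeting $\lambda_0$ also meets $\lambda$; so the goal becomes to find a blue edge through $\lambda$ whose ``$W$-part'' carries no such forcing label.

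I would prove this by contradiction: assume every blue edge $\{\lambda\}\cup W$ of $B$ has $W\neq\emptyset$ and a forcing label $\lambda_0(W)\in W$ as above, and aim to produce a proper balanced sub-edge-set of $\mathcal{E}$, contradicting primitivity (note in passing that no forcing label lies in $Z$, since it would force $\lambda\in Z$). The structural input is that $\mathcal{C}$, being $1$-inductively pierced, has a $0$- or $1$-piercing $\mu$ with $\mathcal{C}-\mu$ again $1$-inductively pierced, and that the edges of $\mathcal{H}_{\mathcal{C}}$ containing $\mu$ are precisely $\{\mu\}\cup z$ (and, when $\mu$ is a $1$-piercing of some $\mu'$, also $\{\mu,\mu'\}\cup z$), where the background zone $z$ and $z\cup\{\mu'\}$ are themselves codeword supports of $\mathcal{C}$ (by Lemma~\ref{lem:onecodeword} and its $1$-piercing analogue). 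Induct on $n$. If no edge of $\mathcal{E}$ contains $\mu$, then $\mathcal{E}$ is a primitive balanced edge set of $\mathcal{H}_{\mathcal{C}-\mu}$ still satisfying the hypotheses ($Z\in B$ and $\{\lambda\}\cup Z\in R$, those edges being unchanged), so induction yields a blue edge $\{\lambda\}\cup W\in B$ with $W$ a codeword support of $\mathcal{C}-\mu$; since $\mu\notin W$, either $W$ is already a codeword support of $\mathcal{C}$, or $W\in\{z,\,z\cup\{\mu'\}\}$, and in each case $W$ is a codeword support of $\mathcal{C}$, closing this step.

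The remaining case, that some edge of $\mathcal{E}$ contains $\mu$, is the heart of the argument. If $\mu$ is a $0$-piercing, then $\{\mu\}\cup z$ is the unique edge through $\mu$, and balance at $\mu$ forces it to occur with equal multiplicity in $B$ and $R$; one copy on each side is then a balanced sub-edge-set, which is proper (since $\mathcal{E}=(\{\{\mu\}\cup z\},\{\{\mu\}\cup z\})$ is incompatible with $Z\in B$, $\{\lambda\}\cup Z\in R$), contradicting primitivity --- so this subcase does not occur. If $\mu$ is a $1$-piercing of $\mu'$, the same balance-plus-primitivity bookkeeping pins the two edges $e_1=\{\mu\}\cup z$ and $e_2=\{\mu,\mu'\}\cup z$ down to a short list of configurations (essentially: all copies of $e_1$ on one colour, all copies of $e_2$ on the other), and in each one either a balanced piece involving $e_1$, $e_2$ and the edges covering $\mu'$ can be split off --- again contradicting primitivity --- or the configuration directly displays $\lambda\in e_1=\{\lambda\}\cup z$ with $W=z$ a codeword support, or it reduces to the code $\mathcal{C}-\mu$ or $\mathcal{C}-\mu'$. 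I expect this last bookkeeping --- in particular, arranging that the set peeled off is genuinely balanced at $\mu'$ by tracking how $\mu'$ is covered by the rest of $\mathcal{E}$ --- to be the main obstacle; the hypothesis $U_{\lambda_0}\subseteq U_\lambda$ supplied by Lemma~\ref{lem:containment} is exactly what excludes the balanced-but-not-$1$-inductively-pierced hypergraph configurations (such as a four-cycle-type edge set) that would otherwise be genuine counterexamples.
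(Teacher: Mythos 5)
Your opening reduction (the set $U_W\setminus\bigcup_{i\notin W\cup\{\lambda\}}U_i$ fails to lie in $U_\lambda$ in all realizations exactly when $W$ is a codeword support) is correct, and your contradiction setup — every blue edge $\{\lambda\}\cup W$ carries, via Lemma \ref{lem:containment}, a forcing label $\lambda_0$ with $U_{\lambda_0}\subseteq U_\lambda$, and no forcing label lies in $Z$ — matches the paper exactly. But at that point you abandon these ingredients and launch an induction on $n$ via removal of a piercing $\mu$, whose decisive case ($\mu$ a $1$-piercing appearing in edges of $\mathcal{E}$) you explicitly leave as unfinished ``bookkeeping.'' That case is a genuine gap, not a routine verification: you would have to show that the edges through $\mu$ and $\mu'$ can always be split off as a proper balanced sub-edge-set, and you give no argument that the peeled-off piece is balanced at every vertex of $z$ and at $\mu'$ simultaneously, nor that the containments are proper. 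As written, the proof is incomplete.

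The irony is that you are one step from the paper's finish and do not need the induction at all. You have already established: (a) each forcing label $\lambda_0$ appears \emph{only} in hyperedges containing $\lambda$ (since $U_{\lambda_0}\subseteq U_\lambda$ means every codeword supporting $\lambda_0$ supports $\lambda$), so every blue and every red edge through any $\lambda_0$ is an edge through $\lambda$; and (b) no $\lambda_0$ lies in $Z$, so the red edge $\{\lambda\}\cup Z$ contains none of them. Now just count at $\lambda$: every blue edge through $\lambda$ contains at least one forcing label, so $\deg_B\lambda\leq\sum_i\deg_B\mu_i$ where $\mu_1,\ldots,\mu_s$ are the distinct forcing labels; balance at each $\mu_i$ gives $\sum_i\deg_B\mu_i=\sum_i\deg_R\mu_i$; and every red edge counted on the right contains $\lambda$ while the red edge $\{\lambda\}\cup Z$ is an additional one, so $\deg_R\lambda\geq(\text{number of red edges through }\lambda\text{ meeting some }\mu_i)+1$. (To make the chain of inequalities airtight one should count red edges through $\lambda$ meeting some $\mu_i$ rather than $\sum_i\deg_R\mu_i$ directly, in case an edge contains two forcing labels, but the same bookkeeping on the blue side fixes this.) The conclusion $\deg_R\lambda>\deg_B\lambda$ contradicts balance at $\lambda$, with no induction and no case analysis on where the removed piercing sits.
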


\begin{proof}  Assume the hypotheses, and, proceeding by contradiction, assume for every hyperedge  of the form $\{\lambda\} \cup W \in B$ the relationship $U_W\setminus \cup_{i \notin W \cup \{\lambda\}} U_i \subseteq U_{\lambda}$ holds in every realization of $\mathcal C$, or equivalently, $U_W\setminus \cup_{i \notin W \cup \{\lambda\}} U_i \subset U_{\lambda}$ holds in every well-formed realization of $\mathcal C$.  Let $B_1, \ldots B_t$ be the hyperedges in $B$ that contain $\{ \lambda\}$; note that since $\deg_R \lambda \geq 1$ and $\mathcal E$ is balanced, $B$ contains at least one edge containing $\lambda$.  

By Lemma \ref{lem:containment}, each hyperedge $B_i$  for $1 \leq i \leq t$ contains a vertex that does does not appear in any edge of $H_{\mathcal C}$ without $\lambda$.  Let us denote these vertices uniquely as $\mu_1, \ldots, \mu_s$.  Since $Z$ is a hyperedge of $H_{\mathcal C}$ and does not contain $\lambda$, it must be the case $\mu_i \notin Z$ for $1 \leq i \leq s$.  So combining this with the fact that $\deg_B \mu_i = \deg_R \mu_i$, we have the following degree counts:
$$\deg_B \lambda = \sum_{i=1}^s \deg_B \mu_i, \text{ and}$$
$$\deg_R \lambda \geq \sum_{i=1}^s \deg_R \mu_i + 1=\sum_{i=1}^s \deg_B \mu_i + 1=\deg_B \lambda+1.$$
However, this contradicts $\deg_B \lambda = \deg_R \lambda$.
\end{proof}

\begin{thm}\label{thm:1piercings}  Let $\mathcal C$ be well-formed.  If $\mathcal C$ is $1$-inductively pierced then the toric ideal $I_{\mathcal C}$ is generated by quadratics or $I_{\mathcal C}=\langle 0 \rangle$.
\end{thm}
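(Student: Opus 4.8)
The plan is to verify the combinatorial criterion of Theorem~\ref{thm:combinatorialdegreebound}: it suffices to show that every primitive balanced edge set $\mathcal E=(B,R)$ of $\mathcal H_{\mathcal C}$ with $|B|\ge|R|$ and $n:=|B|>2$ admits a proper splitting set $(\Gamma_1,S,\Gamma_2)$ with $|\Gamma_{1,\mathrm{blue}}|,|\Gamma_{1,\mathrm{red}}|,|\Gamma_{2,\mathrm{blue}}|,|\Gamma_{2,\mathrm{red}}|<n$ (the alternative $I_{\mathcal C}=\langle 0\rangle$ arises exactly when no such $\mathcal E$ exists, which by Theorem~\ref{thm:0idealiff0pierced} is the $0$-inductively pierced case). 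Fix such an $\mathcal E$. First I would reduce to the case where the ``last'' piercing is relevant. Since $\mathcal C$ is $1$-inductively pierced there is a chain of codes $\mathcal C=\mathcal C^{(N)},\mathcal C^{(N-1)},\dots,\mathcal C^{(1)}$, where $\mathcal C^{(i-1)}=\mathcal C^{(i)}-\lambda^{(i)}$ and $\lambda^{(i)}$ is a $0$- or $1$-piercing of $\mathcal C^{(i)}$. Letting $k$ be the largest index with $\lambda^{(k)}\in\bigcup_{e\in\mathcal E}e$ (such $k$ exists since $\mathcal E\neq\emptyset$ and edges are nonempty), every edge of $\mathcal E$ is an edge of $\mathcal H_{\mathcal C^{(k)}}$, and $\mathcal E$ stays a primitive balanced edge set there. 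Replacing $\mathcal C$ by $\mathcal C^{(k)}$ (still well-formed, since $1$-inductively pierced codes are realizable by circles by \cite{SZHR11}, and still $1$-inductively pierced with every neuron firing), I may assume $\mathcal C$ has a $0$- or $1$-piercing $\lambda$ with $\mathcal C-\lambda$ $1$-inductively pierced and with some edge of $\mathcal E$ containing $\lambda$.

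Next I would pin down how $\lambda$ sits inside $\mathcal E$. By Lemma~\ref{lem:onecodeword} and its analogue for $1$-piercings, the edges of $\mathcal H_{\mathcal C}$ containing $\lambda$ are either a single edge $e_0=z\cup\{\lambda\}$ (if $\lambda$ is a $0$-piercing) or two edges $e_0=z\cup\{\lambda\}$ and $e_1=z\cup\{\lambda,\lambda_1\}$ (if $\lambda$ is a $1$-piercing of some $\lambda_1$), where $z$ and $z\cup\{\lambda_1\}$ are themselves edges of $\mathcal H_{\mathcal C}$ and $\lambda,\lambda_1\notin z$. If $\lambda$ were a $0$-piercing, then $e_0$ could not appear in $\mathcal E$: by primeness of $I_{\mathcal C}$ the blue and red monomials of a primitive binomial have disjoint support, so $e_0$ lies in at most one of $B,R$, while balance at $\lambda$ forces equal multiplicity in $B$ and $R$; this contradicts the reduction. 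Hence $\lambda$ is a $1$-piercing of $\lambda_1$. A short case analysis (disjoint supports plus balance at $\lambda$, which is contained only in $e_0$ and $e_1$) then shows that exactly one of $e_0,e_1$ lies in $B$, with some multiplicity $m\ge1$, the other lies in $R$ with the same multiplicity $m$, and neither occurs otherwise; say $e_0\in B$ and $e_1=\{\lambda_1\}\cup e_0\in R$ (the opposite coloring is symmetric, using the color-dual of Lemma~\ref{lem:balancededgesets}).

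The heart of the argument is to extract the splitting from a single quadratic. Applying $\phi_{\mathcal C}$ to $f_{\mathcal E}$ and cancelling the contributions of the $e_0$'s and $e_1$'s shows the remaining blue edges carry exactly $m$ more copies of $\lambda_1$ than the remaining red edges. Since $e_0\in B$ and $\{\lambda_1\}\cup e_0=e_1\in R$, Lemma~\ref{lem:balancededgesets} (with $Z=e_0$, in the role of its ``$\lambda$'' the label $\lambda_1$) produces a blue edge $f=\{\lambda_1\}\cup W\in B$ with $U_W\setminus\bigcup_{i\notin W\cup\{\lambda_1\}}U_i\nsubseteq U_{\lambda_1}$; by Proposition~\ref{prop:RFstructure} this relation is precisely the assertion that the codeword with support $W$ lies in $\mathcal C$, so $W$ is an edge of $\mathcal H_{\mathcal C}$ (the degenerate case $W=\emptyset$ being dispatched directly, using that the all-zeros word is in $\mathcal C$). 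Here $f\neq e_0$ (as $\lambda_1\in f$, $\lambda_1\notin e_0$) and $f\neq e_1$ (as $f$ is blue while $e_1$ is red), so $f$ is an edge of $M_B:=B\setminus\{e_0^{m}\}$. Because $f=\{\lambda_1\}\cup W$, $e_1=\{\lambda_1\}\cup e_0$, and $\lambda_1\notin e_0\cup W$, a one-line $\phi_{\mathcal C}$-check gives $q:=p_{e_0}p_f-p_{e_1}p_W\in I_{\mathcal C}$, a quadratic. Writing $M_B''=M_B\setminus\{f\}$ and $M_R=R\setminus\{e_1^{m}\}$, I would then verify
$$f_{\mathcal E}\;=\;\bigl(p_{e_0}^{\,m-1}M_B''\bigr)\bigl(p_{e_0}p_f-p_{e_1}p_W\bigr)\;+\;p_{e_1}\bigl(p_{e_0}^{\,m-1}p_WM_B''-p_{e_1}^{\,m-1}M_R\bigr),$$
where the second binomial also lies in $I_{\mathcal C}$ (again by a direct $\phi_{\mathcal C}$-computation) and corresponds to a balanced edge set with blue size $n-1$ and red size $|R|-1$, both $<n$. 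Taking $\Gamma_1$ to be (the balanced edge set of) $q$, $\Gamma_2$ that of the second binomial, and $S=\{W\}$ (and $S=\emptyset$ when $W=\emptyset$), this identity packages as a proper splitting set, and Theorem~\ref{thm:combinatorialdegreebound} then gives that $I_{\mathcal C}$ is generated by quadratics.

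The main obstacle I anticipate is the multiset bookkeeping needed to confirm that the binomial identity above literally furnishes a \emph{proper} splitting set in the sense of \cite{GP13}: checking $S=\Gamma_{1,\mathrm{red}}\cap\Gamma_{2,\mathrm{blue}}$, the properness conditions $S\neq\Gamma_{1,\mathrm{red}},\Gamma_{2,\mathrm{blue}}$ (exactly where the hypothesis $n>2$ is used), the coloring containments $\Gamma_{i,\mathrm{blue}}\subseteq B\cup S$ and $\Gamma_{i,\mathrm{red}}\subseteq R\cup S$, and the balance equation $\mathcal E\cup S=(\Gamma_1\cup\Gamma_2)\setminus S$, all interpreted with multiset union, intersection, and difference. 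A secondary subtlety is making sure the edge $f$ from Lemma~\ref{lem:balancededgesets} is genuinely distinct from $e_0$ and $e_1$ and that $W$ is a legitimate edge of $\mathcal H_{\mathcal C}$, so that $p_W$ exists and the size counts for $\Gamma_1,\Gamma_2$ are as claimed; this, together with the analysis of the piercing's two edges, is where Lemmas~\ref{lem:containment} and \ref{lem:balancededgesets} and well-formedness genuinely do the work.
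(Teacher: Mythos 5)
Your proposal is correct and takes essentially the same route as the paper's proof: both verify the hypothesis of Theorem \ref{thm:combinatorialdegreebound} by locating the piercing's two hyperedges $z\cup\{\lambda\}$ and $z\cup\{\lambda,\lambda_1\}$ in opposite colors, invoking Lemma \ref{lem:balancededgesets} to produce the blue edge $\{\lambda_1\}\cup W$ and the new edge $W$, and taking $S=\{W\}$ with the same $\Gamma_1,\Gamma_2$. The only differences are organizational: you select the deepest piercing meeting $\mathcal E$ where the paper inducts on $n$, and you spell out the multiplicities and the explicit binomial identity that the paper leaves implicit.
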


\begin{proof}
Let $\mathcal C$ be well-formed and assume that $\mathcal C$ is 1-inductively pierced.  Since Theorem \ref{thm:0idealiff0pierced} holds, we can assume without loss of generality that $\mathcal C$ is not 0-inductively pierced. We will now proceed by induction on the number of neurons $n$.  The statement holds for $n=2$, in this case, there is a single code $\mathcal C_2=\{ 00, 10, 01, 11\}$ that is 1-inductively pierced, but not 0-inductively pierced; the toric ideal of $\mathcal C_2$ is generated by a single quadratic, $p_{11} - p_{10}p_{01}$.

For the induction step assume that $I_{\mathcal{C}}$ is generated by quadratics for $n=r$ and let $n=r+1$.  Let $\lambda$ be a 0 or 1-piercing in $\D_{\mathcal C}$.  If $\lambda$ is a 0-piercing then the statement follows by the same argument as in the proof of Theroem \ref{thm:0idealiff0pierced}, so assume that $\lambda$ is a 1-piercing.  In particular, assume that $\lambda$ is a 1-piercing of $\lambda_1$ in $\D_{\mathcal C}$ identified by zone $Z$. Let $f_{\mathcal E}$ be a primitive binomial in $I_{\mathcal C}$ of degree at least three and let $\mathcal E=(B, R)$ be the primitive balanced edge set of $\mathcal H_{\mathcal C}$ that corresponds to $f_{\mathcal E}$.  Without loss of generality, assume that $|B| \geq |R|$.  Note that this means $|B| \geq 3$. 

If no edge in $\mathcal E$ contains the vertex $\lambda$ then $\mathcal E$ is a balanced edge set on $\mathcal H_{\mathcal C - \lambda}$ and, by the induction hypothesis, $f_{\mathcal E}$ is generated by quadratics.  Thus, assume there are edges $B_1 \in B$ and $R_1 \in R$ that contain $\lambda$, furthermore since $\mathcal E$ is primitive, $B_1 \neq R_1$ and thus without loss of generality we can assume $B_1 = \{ \lambda\} \cup Z$ and $R_1 = \{ \lambda, \lambda_1\} \cup Z$, the only two hyperedges in $\mathcal H_{\C}$ that contain $\lambda$.  We will now proceed to find a proper splitting set of $\mathcal E$.  

By Lemma \ref{lem:balancededgesets}, there exists a hyperedge $B_2=\{ \lambda_1 \} \cup W \in B$ such that $U_W\setminus \cup_{i \notin W\cup \{\lambda_1\}} U_i \nsubseteq U_{\lambda_1}$ in all well-formed realizations of $\mathcal{C}$.  Furthermore, by primitivity, $R_2 \neq B_1$ and thus $\lambda \notin W$. Let $d$ be a well-formed realization of $\D_{\mathcal C}$. Since $U_W\setminus \cup_{i \notin W \cup \{\lambda_1\}} U_i \nsubseteq U_{\lambda_1}$ but the intersection of $U_W\setminus \cup_{i \notin W \cup \{\lambda_1\}}U_i$ and $U_{\lambda_1}$ is non-empty, some segment $s$ of the curve $\lambda_1$ is contained in $U_W\setminus \cup_{i \notin W \cup \{\lambda_1\}}U_i$.  Furthermore, since $d$ is well-formed there is an $\epsilon$-neighborhood of $s$ that is fully contained in $U_W\setminus \cup_{i \notin w \cup \{\lambda_1\}}U_i$.  Thus, $U_W\setminus \cup_{i \notin W}U_i$ is non-empty, implying $W$ is an edge in $H_{\mathcal C}$.

Now let $\Gamma_1$ and $\Gamma_2$ be balanced edge sets such that $\Gamma_{1_{blue}}=\{B_1, B_2\}$, $\Gamma_{1_{red}}=\{R_1, W\}$, $\Gamma_{2_{blue}}=B\setminus\{B_1, B_2\} \cup \{W\}$, and $\Gamma_{2_{red}}=R\setminus\{R_1\}$.  Then $S=\{ W \}$ is a proper splitting set of $\mathcal E$ with decomposition $(\Gamma_1, S, \Gamma_2)$ and the proposition follows from Theorem \ref{thm:combinatorialdegreebound}.

\end{proof}

One might expect that the converse of Theorem \ref{thm:1piercings} to be true, however there exists a counterexample with as few as $3$ neurons.

\begin{ex}\label{ex:A1counterexample}
  
The code $\C$=A1=$\{0,1\}^3$ is 2-inductively pierced but not 1-inductively pierced.
Notice that in Example \ref{ex:A1}, the cubic $p_{111}-p_{100}p_{010}p_{001}$ was given as a generator of $I_\C$.  However, this cubic can be written in terms of quadratics.  In particular, $$p_{111}-p_{100}p_{010}p_{001}\ =\ (p_{111}-p_{110}p_{001})+\ p_{001}(p_{110}-p_{100}p_{010}).$$  
Thus, we can give a generating set of the toric ideal of A1 that is generated only by quadratics: $$I_{\mathrm{A1}}=\langle p_{110}-p_{100}p_{010},\ p_{101}-p_{100}p_{001},\ p_{011}-p_{010}p_{001},\ p_{111}-p_{110}p_{001} \rangle.$$  
\end{ex}

While the preceding example shows that the converse of Theorem \ref{thm:1piercings} is false, we do note that two-piercings result in signature cubic binomials in the toric ideal.



\begin{prop}\label{prop:cubic}
Let $\C$ be a well-formed neural code on $n$ neurons.  If there is a triple intersection in a well-formed realization of $\C$ then the toric ideal $I_\C$ contains a cubic binomial, in particular, a binomial of the form $p_{111w} p_{000v}^2-p_{100v}p_{010v}p_{001w}$ or $p_{111w}-p_{1000 \cdots 0}p_{0100 \cdots 0}p_{001w}$ where $v,w \in \{0,1\}^{n-3}$.
\end{prop}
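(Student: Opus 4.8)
The plan is to follow the template of the proof of Lemma~\ref{lem:generator}: split the statement into an entirely formal half and a geometric half. The formal half is that, once we know that the codewords occurring in it all lie in $\C$, the displayed binomial lies in $\ker\phi_\C=I_\C$; the geometric half is that a triple intersection forces those codewords into $\C$ in the first place.

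For the formal half, first I would note that permuting the $n$ neurons relabels the variables $x_i$ and the codewords simultaneously and so carries $I_\C$ isomorphically to the toric ideal of the permuted code; hence we may assume, after relabelling, that the three mutually overlapping place fields are $U_1,U_2,U_3$. Writing $m_u=\prod_{i\in\supp u}x_i$ for $u\in\{0,1\}^{n-3}$, one has, for any $v,w\in\{0,1\}^{n-3}$,
$$\phi_\C(p_{111w})\,\phi_\C(p_{000v})^2 \;=\; x_1x_2x_3\,m_w\,m_v^2 \;=\; \phi_\C(p_{100v})\,\phi_\C(p_{010v})\,\phi_\C(p_{001w}),$$
so $p_{111w}p_{000v}^2-p_{100v}p_{010v}p_{001w}\in I_\C$; and when $v=00\cdots0$ the symbol $p_{000v}$ is not a variable (as $00\cdots0\notin\C^*$), but the same computation with the two copies of $p_{000v}$ omitted gives $p_{111w}-p_{1000\cdots0}p_{0100\cdots0}p_{001w}\in I_\C$. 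Thus the whole content of the proposition is the claim that a triple intersection in a well-formed realization forces $111w\in\C$ and $001w\in\C$ with a \emph{common} background word $w$, and $100v,010v\in\C$ (and $000v\in\C$ unless $v=00\cdots0$) with a common background word $v$, for suitable $v,w\in\{0,1\}^{n-3}$. A sanity check is Example~\ref{ex:A1counterexample}: for $\C=\{0,1\}^3$ one has $n-3=0$, so $v=w$ is the empty (all-zeros) word and the claim returns exactly the second form, which is the cubic exhibited there.

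For the geometric half I would fix a well-formed realization $d$ exhibiting the triple intersection and argue as in Lemma~\ref{lem:generator}: since $d$ is well-formed (curves crossing generally in finitely many points, no point on three curves, each zone connected), a sufficiently small ball about any point of the plane lies in a single zone, whose codeword therefore lies in $\C$. Taking $p\in U_1\cap U_2\cap U_3$, a small ball about $p$ yields $111w\in\C$, where $w$ records neurons $4,\dots,n$ at $p$. The remaining codewords come from walking the three bounding curves near the overlap: following $\partial U_3$ away from $U_1\cup U_2$, and recording via well-formedness which complements have been entered, produces a point of $U_3\setminus(U_1\cup U_2)$, hence a codeword $001w'$; following $\partial U_1$ and $\partial U_2$ out of the overlap $U_1\cap U_2$ produces singleton cells $100v'$, $010v''$ and the ambient cell $000v'''$. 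It then remains to show these walks can be carried out so that the background words line up as $w'=w$ and $v'=v''=v'''=:v$, with $v$ either an honest codeword (first form) or the all-zeros word (second form).

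The step I expect to be the main obstacle is exactly this last bookkeeping, together with pinning down what a ``triple intersection'' should mean in a general well-formed — not necessarily circular — realization (one should check the notion is strong enough to rule out degenerate configurations such as one of the three fields being contained in the union of the other two, which would destroy one of the singleton cells). A third curve can meet the overlap of the other two in several topologically distinct ways, and one must verify that in each the required four or five codewords genuinely occur, with the claimed coincidences of background words. I would organize this as a short case analysis on how $\lambda_3$ meets $\partial(U_1\cap U_2)$, of the same flavour as the four cases in the proof of Lemma~\ref{lem:containment}, reducing each case — after possibly relabelling which of $\lambda_1,\lambda_2,\lambda_3$ plays the distinguished role — to the situation in which the distinguished curve genuinely exits the overlap of the other two while its interior still meets that overlap, where the walks above apply directly and keep the ambient background zone fixed.
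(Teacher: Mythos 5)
Your algebraic half is fine and matches the paper: once the relevant codewords are known to lie in $\C$, the displayed binomials are visibly in $\ker\phi_\C$, and the degenerate case $v=00\cdots0$ is handled exactly as you say. The problem is the geometric half, and you have correctly diagnosed it yourself: you never actually establish that the background words line up ($w'=w$ and $v'=v''=v'''$). A ``walk'' along $\partial U_3$ out of $U_1\cup U_2$, or along $\partial U_1$ out of $U_1\cap U_2$, can cross boundaries of other place fields $U_4,\dots,U_n$ along the way, so the codeword you land on need not share a background word with the one you started from; the proposed case analysis on how $\lambda_3$ meets $\partial(U_1\cap U_2)$ does not by itself control this, and as written the proof is incomplete.

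The paper's proof closes exactly this gap with a different choice of base points: instead of one point in the triple overlap plus three separate walks, it zooms in on the \emph{two intersection points} $p,q$ of $\lambda_1$ and $\lambda_2$, one lying outside $U_3$ and one inside $U_3$. By well-formedness (no point lies on three curves, and curves intersect generally), each of $p$ and $q$ has a small ball contained in a single zone of the arrangement $d-\lambda_1-\lambda_2$ --- the same mechanism as in Lemma~\ref{lem:generator}. The four quadrants of the ball around $p$ then give $000v,100v,010v,110v\in\C$ with a \emph{common} $v$, and the four quadrants around $q$ give $001w,101w,011w,111w\in\C$ with a common $w$; the coincidence of background words is automatic, not something to be arranged by a walk. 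Combining $111w,000v$ from one side with $100v,010v,001w$ from the other yields the cubic. If you want to repair your argument, replace the walks by this two-crossing-point observation; the only residual issue (shared with the paper) is that the notion of ``triple intersection'' must be read as guaranteeing that one crossing of $\lambda_1,\lambda_2$ lies in $U_3$ and the other does not.
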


\begin{proof}
Let $\C$ be a well-formed neural code on $n$ neurons with a triple intersection and let $d$ be a well-formed realization of $\C$.   Let us denote the three intersecting curves as $\lambda_1, \lambda_2,$ and $\lambda_3$. Since $\C$ is well-formed, all curves intersect generally, so in particular $\lambda_1$ and $\lambda_2$ intersect at two points, say  $p$ and $q$. We zoom in on the triple intersection, and we have a place field arrangement of $\C$ as illustrated in Figure \ref{fig:tripleintersection}. 


\begin{figure}[!ht] 
\begin{subfigure}[b]{.5\linewidth}
\begin{center}
\begin{tikzpicture}
\draw [name path=1] (-3.75,0.5) arc (-120:90:3cm) node[anchor=east] {$\lambda_1$}; 
\draw [name path=2] (1.25,6) arc (90:300:3cm) node[anchor=west] {$\lambda_2$};
\draw [name path=3] (2.5,-1) arc (0:180:3cm) node[anchor=north east] {$\lambda_3$};

\path [name intersections={of = 1 and 2}];
	\coordinate[label=below:$q$] (q) at (intersection-1);
	\coordinate[label=above:$p$] (p) at (intersection-2);
\fill[fill=red,inner sep=1pt,name intersections={of=1 and 2}]
    (intersection-1) circle (2pt)
    (intersection-2) circle (2pt) ;
\draw[red,thick,dashed] (p) circle (1cm);
\draw[red,thick,dashed] (q) circle (1cm);
\end{tikzpicture}
\caption{A closeup on a triple intersection.}
\label{fig:tripleintersection}
\end{center}
\end{subfigure}

  \begin{subfigure}[b]{0.5\linewidth}
    \centering
\begin{tikzpicture}
\draw [name path=1] (-1,0) arc (0:90:3cm) node[anchor=east] {$\lambda_1$}; 
\draw [name path=2] (0,3) arc (90:180:3cm);
\node at (0.3,3) {$\lambda_2$};
\node at (-3.5,2) {$100v$};
\node at (-0.75,2) {$010v$};
\node at (-2,1) {$110v$};
\node at (-2,3.25) {$000v$};

\path [name intersections={of = 1 and 2,by=p}];

\node [fill=red,inner sep=2pt,label=-90:$p$] at (p) {};
\end{tikzpicture}

    \caption{A closeup on the point of intersection $p$.} 
    \label{fig:pointp} 
    
  \end{subfigure}
  \begin{subfigure}[b]{0.5\linewidth}
    \centering
\begin{tikzpicture}
\draw [name path=1] (-3.75,1) arc (-120:0:3.25cm) node[anchor=east] {$\lambda_1$}; 
\draw [name path=2] (-2.5,3.75) arc (180:300:3.25cm) node[anchor=west] {$\lambda_2$};
\draw [name path=3] (1.75,0) arc (0:180:2.5cm) node[anchor=north east] {$\lambda_3$};

\node at (-2,1) {$101w$};
\node at (.5,1) {$011w$};
\node at (-0.75,1.75) {$111w$};
\node at (-0.75,0) {$001w$};
\path [name intersections={of = 1 and 2,by=q}];
\node [fill=red,inner sep=2pt,label=-90:$q$] at (q) {};
\end{tikzpicture}
    \caption{A closeup on the point of intersection $q$} 
    \label{fig:pointq} 
  \end{subfigure} 
\caption{Closeups on the points of intersection in a triple intersection of curves.}
\label{fig:tripleintersectionfull}
\end{figure}
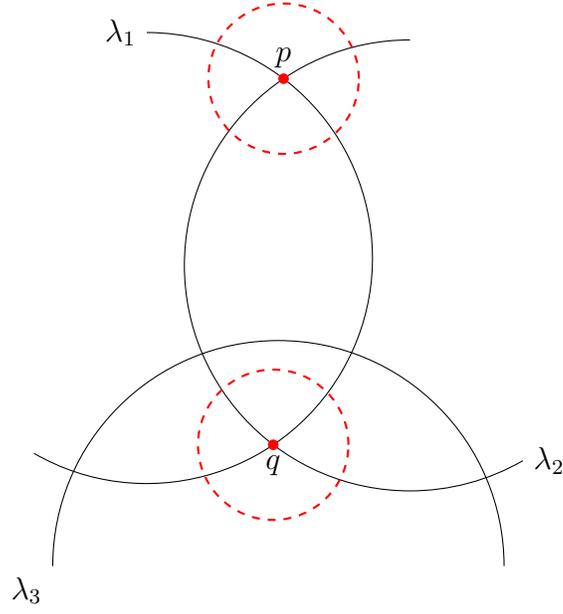
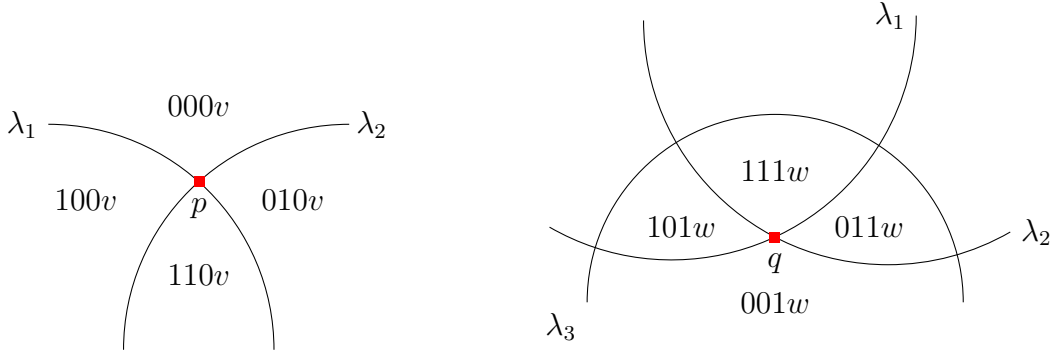

Since $\C$ is well-formed, there exists an open ball around $p$, so the codewords $000v$, $100v$, $010v$, $110v$ must be in $\C$ for some $v\in \{0,1\}^{n-3}$  (see Figure \ref{fig:pointp}).  
Similarly, there exists an open ball around $q$, so the codewords $111w,\ 101w,\ 011w,\ 001w$ must be in $\C$ , for some $w\in \{0,1\}^{n-3}$ (see Figure \ref{fig:pointq}).  

Then the matrix of the codewords in $\C$ looks like 
\[ \left(
\begin{array}{ccccccccc}
0 & 1 & 0 & 1 & 1 & 1 & 0 & 0 &\ldots \\
0 & 0 & 1 & 1 & 1 & 0 & 1 & 0 &\ldots \\
0 & 0 & 0 & 0 & 1 & 1 & 1 & 1 &\ldots \\
v^{\text{T}} & v^{\text{T}} & v^{\text{T}} & v^{\text{T}} & w^{\text{T}} & w^{\text{T}} & w^{\text{T}} & w^{\text{T}} & \ddots
\end{array}
\right), \] 
and, if $v$ is not the all zeros codeword, we have 

\begin{align*}
& \phi_{\C}(p_{000v}) = \prod\limits_{i\in \supp(v)}^{}x_i
& \phi_{\C}(p_{100v}) =  x_1\prod\limits_{i\in \supp(v)}^{}x_i \\
& \phi_{\C}(p_{010v}) =  x_2\prod\limits_{i\in \supp(v)}^{}x_i  
 & \phi_{\C}(p_{001w}) =  x_3\prod\limits_{i\in \supp(w)}^{}x_i \\
& \phi_{\C}(p_{111w}) = x_1x_2x_3\prod\limits_{i\in \supp(w)}^{}x_i.
\end{align*}

Therefore, $p_{111w} p_{000v}^2-p_{100v}p_{010v}p_{001w}$ is in the toric ideal $I_{\C}$.  If $v$ is the all zeros codeword then $p_{111w}-p_{1000 \cdots 0}p_{0100 \cdots 0}p_{001w} \in I_{\C}$.
\end{proof}

In essence, Proposition \ref{prop:cubic} tells us that if we notice three fields intersecting as in Figure \ref{fig:A1}, then we can expect a particular cubic in the toric ideal.  While Example \ref{ex:A1counterexample} shows that it is possible for this cubic to be generated by quadratics in the ideal, we wonder whether there exist term orders such that these signature cubics appear in the reduced Gr\"{o}bner basis.  Using the Macaulay2 interface for {\tt gfan} \cite{gfan}, we are able to find a term order that works in this sense for $n=3$.

In the following proposition, we use a \emph{weighted graded reverse lexicographic } monomial order: Let $\mathbb K [x_1, \ldots, x_n]$ be a polynomial ring, and let $w \in \R^n$ be a weight vector.  Let $deg_w(x^a)=a_1w_1+a_2w_2+\ldots+a_nw_n$ and let $x^a < x^b$ if and only if $deg_w(x^a) < deg_w(x^b)$ or $deg_w(x^a) = deg_w(x^b)$ and there exists $1\leq i \leq n$ such that $a_n=b_n,\ \ldots\ ,a_{i+1}=b_{i+1}$ , $a_i > b_i$ \cite{sage}.  Furthermore, to make the statement of Proposition \ref{prop:3neuron_termorder_GB_gensdeg2} cleaner, we will view each $I_{\C}$ as a subset of the larger polynomial ring $\mathbb K[ p_{100}, p_{010}, p_{001}, p_{110}, p_{101}, p_{011}, p_{111}]$.

\begin{prop}\label{prop:3neuron_termorder_GB_gensdeg2}
A well-formed neural code $\C$ on 3 neurons is 1-inductively pierced if and only if the Gr\"{o}bner basis of $I_{\C}$ with respect to the weighted graded reverse lexicographic order with the weight vector $w=(0,0,0,1,1,1,0)$ contains only binomials of degree $2$ or less.
\end{prop}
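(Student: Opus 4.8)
The plan is to verify both directions by an essentially finite computation, organized around the small classification of codes on three neurons. First I would enumerate the well-formed neural codes on $n=3$ neurons containing $000$ in which every neuron fires; up to relabeling there are only finitely many, and each is either $0$-inductively pierced, $1$-inductively pierced (but not $0$-), $2$-inductively pierced (but not $1$-), or not $2$-inductively pierced at all. By Theorem \ref{thm:canonicalformat} and Theorem \ref{thm:01piercing} these classes are distinguished by the canonical form of the neural ideal, so the classification itself is bookkeeping. For the forward direction I would argue: if $\C$ is $1$-inductively pierced then either $I_\C = \langle 0\rangle$ (when $\C$ is $0$-inductively pierced, by Theorem \ref{thm:0idealiff0pierced}) — in which case the Gröbner basis is empty and the condition holds vacuously — or, by Theorem \ref{thm:1piercings}, $I_\C$ is generated by quadratics. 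In the latter case I would invoke the standard fact that a Gröbner basis produced by Buchberger's algorithm from a quadratic generating set, using a graded (or weighted-graded with nonnegative weights) order, can only introduce new elements whose degree is bounded by the degrees already present when the order is graded — more carefully, since the given weight vector $w=(0,0,0,1,1,1,0)$ is a \emph{refinement} giving a graded order with respect to the grading $\deg(p_{100})=\deg(p_{010})=\deg(p_{001})=\deg(p_{111})=1$, $\deg(p_{110})=\deg(p_{101})=\deg(p_{011})=1$ under which $\phi_\C$ is homogeneous of the appropriate type — every element of the reduced Gröbner basis is homogeneous in this grading, and the quadratic generation forces all of them to have $p$-degree $2$.

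For the reverse direction I would contrapose: suppose $\C$ is \emph{not} $1$-inductively pierced. Since $n=3$ and $3$-piercings are impossible in $\R^2$, by the classification $\C$ is either $2$-inductively pierced but not $1$-inductively pierced, or not even $2$-inductively pierced. The not-$2$-inductively-pierced codes on three neurons can be checked directly (again a finite list, and in fact by the remark after Definition \ref{def:inductivelypiercedabstractdescrip} together with the $n=3$ structure there is essentially the one relevant obstruction). The key positive case is $\C = \{0,1\}^3 = \mathrm{A1}$: by Proposition \ref{prop:cubic} its toric ideal contains the signature cubic $p_{111} - p_{100}p_{010}p_{001}$, and I would verify — this is the one genuine computation — that with respect to the weighted graded reverse lexicographic order with weight $w=(0,0,0,1,1,1,0)$ the reduced Gröbner basis of $I_{\mathrm{A1}}$ (viewed inside $\kk[p_{100},p_{010},p_{001},p_{110},p_{101},p_{011},p_{111}]$) actually contains a cubic element. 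The point of the particular weight vector is that it makes $p_{111}$ (weight $0$) lighter than the product $p_{100}p_{010}p_{001}$ would be if the $p_{ij}$ had positive weight, but since $p_{100},p_{010},p_{001}$ also have weight $0$, the tie is broken by the reverse-lex part, and one checks that $p_{100}p_{010}p_{001}$ is the leading term, so the cubic cannot be reduced away by the quadratic relations $p_{110}-p_{100}p_{010}$, etc. For the other three-neuron codes that are $2$- but not $1$-inductively pierced I would run the same check; this is where the bulk of the work (and the main obstacle) lies, since one must confirm that \emph{no} quadratic regeneration collapses the cubic for \emph{every} such code, not just for A1.

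The main obstacle, then, is the reverse direction: Example \ref{ex:A1counterexample} already shows that a toric ideal of a non-$1$-inductively-pierced code can be \emph{generated} by quadratics, so "generated by quadratics" is genuinely weaker than "has a quadratic Gröbner basis in this order," and the whole content of the proposition is that the chosen order is sharp enough to detect the obstruction. I would handle this by explicitly listing the finitely many well-formed codes on three neurons that are $2$-inductively pierced but not $1$-inductively pierced (these are exactly the ones whose canonical form fails condition (ii) of Theorem \ref{thm:01piercing}, i.e. whose general relationship graph $G(\C)$ contains a $3$-clique), computing each $I_\C$ via its hypergraph as in Section \ref{sec:background-toric-ideals}, and verifying with the {\tt gfan} computation cited in the text that each produces a degree-$3$ element in the reduced Gröbner basis under $w=(0,0,0,1,1,1,0)$. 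Combining the two directions gives the biconditional.
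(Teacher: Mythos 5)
Your reverse direction is sound and matches the paper's approach (the paper's entire proof is an exhaustive computation: enumerate the well-formed codes on three neurons up to symmetry, compute each reduced Gr\"obner basis under the given order, and observe that exactly the $0$- and $1$-inductively pierced codes come out with maximum degree two). The genuine gap is in your forward direction. The ``standard fact'' you invoke --- that a quadratic generating set together with a (weighted) graded order forces every element of the reduced Gr\"obner basis to have degree two --- is false. Quadratic generation does not imply a quadratic Gr\"obner basis even for homogeneous ideals under degree-compatible orders: $S$-polynomials of quadrics have degree three and need not reduce to zero. Worse, your own reverse direction refutes the claim inside this very proposition: by Example \ref{ex:A1counterexample}, $I_{\mathrm{A1}}$ \emph{is} generated by quadratics, yet the proposition requires its reduced Gr\"obner basis under $w=(0,0,0,1,1,1,0)$ to contain a cubic. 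If your ``standard fact'' held, A1 would satisfy the Gr\"obner criterion and the biconditional would fail. The homogeneity claim is also off: $I_\C$ is homogeneous for the grading $\deg(p_c)=|\supp(c)|$, not for the $w$-grading or the standard grading (e.g.\ $p_{110}-p_{100}p_{010}$ has $w$-degrees $1$ and $0$ on its two terms), and in any case homogeneity does not cap Gr\"obner basis degrees at generator degrees.

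The fix is simply to treat the forward direction the same way you treat the reverse: for each of the finitely many $0$- or $1$-inductively pierced well-formed codes on three neurons (up to symmetry), compute the reduced Gr\"obner basis under the stated order and verify directly that its maximum degree is two. Theorem \ref{thm:0idealiff0pierced} disposes of the $0$-inductively pierced cases (zero ideal, empty basis), but for the $1$-inductively pierced cases with nonzero ideal you cannot bypass the computation with a degree-bound argument; Theorem \ref{thm:1piercings} gives you quadratic \emph{generation}, which, as you yourself note, is strictly weaker than the Gr\"obner condition the proposition asserts.
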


\begin{proof}
Using the weighted graded reverse lexicographic order with weight vector $w=(0,0,0,1,1,1,0)$ we computed the reduced Gr\"{o}bner bases of the toric ideals of each well-formed neural code up to symmetry. We found that only the 0- and 1- inductively pierced codes had reduced Gr\"{o}bner bases with maximum degree two.  
\end{proof}
We end this section with the following conjecture.

\begin{conj}
For each $n$, there exists a term order such that a code is 0- or 1-inductively pierced if and only if the reduced Gr\"{o}bner basis contains binomials of degree 2 or less.
\end{conj}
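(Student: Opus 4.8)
The plan is to prove the statement for well-formed codes, consistent with the standing assumptions of the paper, by exhibiting for each $n$ an explicit term order $<_w$ on the polynomial ring with one variable $p_c$ for each nonzero $c\in\{0,1\}^n$ (the ambient ring used in Proposition~\ref{prop:3neuron_termorder_GB_gensdeg2}), and then establishing the two implications by induction on the number of neurons. By Theorem~\ref{thm:1piercings} a $0$- or $1$-inductively pierced code already has $I_{\C}$ generated by quadratics, so the forward implication reduces to choosing $<_w$ so that a natural quadratic generating set is a Gröbner basis; the reverse implication is where the term order does the real work, since Example~\ref{ex:A1counterexample} shows that quadratic generation does not by itself imply $0$- or $1$-inductive piercedness. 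For $<_w$ I would use a weighted graded reverse lexicographic order generalizing the vector $w=(0,0,0,1,1,1,0)$ of Proposition~\ref{prop:3neuron_termorder_GB_gensdeg2}, with $w_{p_c}$ a function of $\supp(c)$ tuned so that: (a) when a $1$-piercing of $\lambda_1$ at a background zone $Z$ is added, creating the hyperedges $Z\cup\{\lambda\}$ and $Z\cup\{\lambda,\lambda_1\}$ of $\mathcal H_{\C}$, the resulting quadratic relation can be presented with leading monomial relatively prime to the leading monomials of the generators inherited from $\C-\lambda$; and (b) the signature cubic of Proposition~\ref{prop:cubic} at a triple intersection has leading monomial the ``separated'' product $p_{100v}p_{010v}p_{001w}$ rather than $p_{111w}$, hence divisible by no quadratic leading monomial. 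The $n=2$ base case should recover $p_{11}-p_{10}p_{01}$ with leading term $p_{11}$.

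For the forward direction, assume $\C$ is $0$- or $1$-inductively pierced and induct on $n$; removing a $0$- or $1$-piercing $\lambda$, the code $\C-\lambda$ has reduced Gröbner basis in degree $\le 2$ by the induction hypothesis. By Lemma~\ref{lem:onecodeword} and its $1$-piercing analogue, together with the balanced-edge-set bookkeeping in the proof of Theorem~\ref{thm:1piercings}, $\mathcal H_{\C}$ is $\mathcal H_{\C-\lambda}$ with the vertex $\lambda$ and one (for a $0$-piercing) or two (for a $1$-piercing) new hyperedges attached, and $I_{\C}$ is generated by a quadratic generating set of $I_{\C-\lambda}$ together with at most one new quadratic. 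Property (a) makes the leading monomial of the new quadratic relatively prime to the inherited leading monomials, so every new $S$-polynomial reduces to zero by Buchberger's first criterion and the remaining ones reduce to zero by the induction hypothesis; hence the quadratic generating set is a Gröbner basis, still in degree $\le 2$ after interreduction.

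For the reverse direction I would argue the contrapositive. Suppose $\C$ is well-formed and $2$-inductively pierced but not $1$-inductively pierced; then some dismantling of $\D_{\C}$ uses a $2$-piercing of a pair $\{\lambda_1,\lambda_2\}$ at an intermediate stage $\C'$, and Definition~\ref{def:k-piercing} puts the zones $Z\cup\{\lambda_1,\lambda_2\}$ and $Z\cup\{\lambda,\lambda_1,\lambda_2\}$ into $\D_{\C'}$, so $U_{\lambda_1}\cap U_{\lambda_2}\neq\emptyset$ and the curve $\lambda$ crosses both $\lambda_1$ and $\lambda_2$; hence any well-formed realization of $\C'$ --- and therefore of $\C$, since passing to a sub-code only refines zones --- contains a triple intersection. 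Proposition~\ref{prop:cubic} then provides a cubic $f\in I_{\C}$ whose leading monomial, by property (b), is $p_{100v}p_{010v}p_{001w}$; this monomial lies in $\mathrm{in}_{<_w}(I_{\C})$ but, again by (b), is divisible by no degree-$\le 2$ monomial of $\mathrm{in}_{<_w}(I_{\C})$, so $\mathrm{in}_{<_w}(I_{\C})$, and hence the reduced Gröbner basis, has a minimal generator of degree $\ge 3$.

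The hard part is the construction and verification of $<_w$ uniformly in $n$: one must establish property (a) --- that the piercing quadratics can always be presented with compatible leading monomials --- and property (b) --- that the signature cubics cannot be rewritten modulo the quadratics --- simultaneously, and both require controlling how leading monomials interact under the piercing operations; this interplay is precisely what keeps the statement conjectural. A further gap is that the reverse step above only treats $2$-inductively pierced codes: a well-formed code failing to be $0$-, $1$-, or $2$-inductively pierced need not contain a triple intersection, yet still needs a degree-$\ge 3$ Gröbner basis element. By Theorem~\ref{thm:canonicalformat} such a code carries a degree-$\ge 3$ pseudo-monomial in $CF(J_{\C})$ --- a fan of fields with empty common region --- but transporting this obstruction from the neural ideal to the toric ideal is not automatic; the most promising route seems to be to first resolve the conjecture on canonical forms of $k$-inductively pierced codes, which forces $G(\C)$ to be chordal with no $(k+2)$-clique, and then to turn a non-chordal $G(\C)$, or a large clique in it, into a high-degree primitive binomial whose leading term survives under $<_w$. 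I expect the term-order construction to be the harder of these two.
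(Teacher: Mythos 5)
The statement you are working on is presented in the paper as an open conjecture: the authors prove it only for $n=3$ (Proposition~\ref{prop:3neuron_termorder_GB_gensdeg2}), and that proof is an exhaustive computation of reduced Gr\"obner bases for all well-formed codes on three neurons under the specific weight vector $w=(0,0,0,1,1,1,0)$. There is therefore no general argument in the paper to compare yours against; what you have written is a research plan rather than a proof, and you concede as much in your final paragraph.

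The two gaps you flag are the essential ones, and neither is close to being closed. First, the term order is never constructed: your properties (a) and (b) are the content of the conjecture repackaged as hypotheses on the weight vector, and nothing in the proposal shows that a single weight vector can satisfy both simultaneously for every code on $n$ neurons --- in particular, the same variable $p_c$ can play the role of $p_{111w}$ in one signature cubic of Proposition~\ref{prop:cubic} and of $p_{001w'}$ in another, so the weights cannot be assigned piercing-by-piercing. Second, even granting the order, your contrapositive only covers codes whose failure of $1$-inductive piercedness is witnessed by a triple intersection; a well-formed code can fail condition (ii) of Theorem~\ref{thm:01piercing} because $G(\C)$ contains a long cycle rather than a triangle, in which case there is no triple intersection and you have exhibited no candidate element of the initial ideal of degree $\geq 3$. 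You acknowledge this second gap but propose to route it through the paper's other (also unproved) conjecture on canonical forms of $k$-inductively pierced codes, which replaces one open problem with another. There is also a smaller unjustified step: the claim that a $2$-piercing at an intermediate stage $\C'$ of the dismantling forces a triple intersection in every well-formed realization of $\C$ itself rests on the assertion that ``passing to a sub-code only refines zones,'' which is not established in the paper. In short, your proposal is a reasonable outline of how one might attack the conjecture, and it correctly identifies where the difficulty lies, but it does not prove the statement --- and neither does the paper.
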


\section{Conclusion: Drawing place fields for neural codes}\label{sect:conclusion}
Our motivating question is how to draw the realization of a place field diagram for a neural code assuming we know \emph{a priori} that it is convexly realizable in dimension two.  Existing work in the field of information theory \cite{SZHR11} gives an algorithm for drawing such realizations when the data sets, i.e. codes, are inductively pierced.  Thus, our question of focus for this manuscript is how to determine whether a neural code is inductively pierced, or $k$-inductively pierced.  
 To this end, we utilized two different algebraic objects, the neural ideal and the toric ideal of a neural code.  The following theorem summarizes our results.

\begin{thm}\label{thm:mainthm} (Summary of results)
Let $\C$ be a well-formed neural code on $n$ neurons.
\begin{enumerate}
\item The neural code $\C$ is $0$-inductively pierced if and only if $CF(J_\C) = \{f_{ij} \,|\, i,j\in [n], i<j, f_{ij}  \in \{x_ix_j, x_i(1-x_j), x_j(1-x_i) \}$.  
\item $\C$ is 1-inductively pierced if and only if $CF(J_\C)$ consists only of degree two pseudo-monomials meeting the following conditions:
\begin{enumerate}[(i)]
\item For each pair $\{i,j\}$, at most one of $x_ix_j, x_i(1-x_j), x_j(1-x_i)$ appears in $CF(J_\C)$.
\item $G(\C)$ is a forest.
\end{enumerate}
\item The neural code $\C$ is 0-inductively pierced if and only if $I_{\C}= \langle 0 \rangle$.
\item If the neural code $\C$ is 0 and 1-inductively pierced then $I_{\C}$ is $\langle 0 \rangle$ or generated by quadratics.

\end{enumerate}
\end{thm}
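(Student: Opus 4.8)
The plan is to observe that Theorem \ref{thm:mainthm} is purely a consolidation of results already proved in Sections \ref{sect:canonicalform} and \ref{sect:toricideal}, so the proof amounts to matching each of its four parts to the appropriate earlier statement and checking that the standing hypotheses agree. First I would dispatch parts (1)--(3) directly: part (1) is verbatim Theorem \ref{thm:canonicalformat}(2), whose proof runs by induction on $n$ using the canonical-form update rule of Theorem \ref{thm:cfchange}; part (2) is verbatim Theorem \ref{thm:01piercing}; and part (3) is Theorem \ref{thm:0idealiff0pierced}, which combines Proposition \ref{prop:crossing}, Lemma \ref{lem:generator}, and an induction on $n$ via the hypergraph description. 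For these three parts the ``proof'' is simply a citation.

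For part (4), I would first record the inclusion of classes implicit in Definition \ref{def:inductivelypiercedabstractdescrip}: a $0$-piercing is permitted in the inductive step defining a $1$-inductively pierced code, so every $0$-inductively pierced code is $1$-inductively pierced, and the hypothesis ``$\C$ is $0$ and $1$-inductively pierced'' is equivalent to ``$\C$ is $1$-inductively pierced.'' With that reduction in hand, part (4) is exactly Theorem \ref{thm:1piercings}: a well-formed $1$-inductively pierced code $\C$ has $I_\C$ generated by quadratics or $I_\C = \langle 0 \rangle$, with the $\langle 0 \rangle$ alternative occurring precisely in the $0$-inductively pierced case by Theorem \ref{thm:0idealiff0pierced}. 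I would write this out in a sentence or two so the logical reduction is transparent to the reader.

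There is no real obstacle here; the mathematical content lives entirely in the cited theorems. The one piece of bookkeeping worth doing explicitly is confirming that the running assumptions are uniform across the four cited results -- each of Theorems \ref{thm:canonicalformat}, \ref{thm:01piercing}, \ref{thm:0idealiff0pierced}, and \ref{thm:1piercings} assumes that $\C$ is well-formed, contains the all-zeros word, and has every neuron firing at least once, and these are exactly the hypotheses placed on $\C$ in the statement of Theorem \ref{thm:mainthm}. Hence the proof is a short assembly paragraph citing those four theorems, with the brief remark on the $0$- versus $1$-inductively pierced class inclusion inserted to justify the phrasing of part (4).
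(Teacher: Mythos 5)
Your proposal is correct and matches the paper exactly: Theorem \ref{thm:mainthm} is presented as a summary whose four parts are precisely Theorem \ref{thm:canonicalformat}(2), Theorem \ref{thm:01piercing}, Theorem \ref{thm:0idealiff0pierced}, and Theorem \ref{thm:1piercings}, and the paper supplies no further argument beyond those citations. One pedantic note on part (4): since $0$-inductively pierced implies $1$-inductively pierced, the literal conjunction ``$0$ and $1$-inductively pierced'' is equivalent to ``$0$-inductively pierced'' rather than to ``$1$-inductively pierced'' as you wrote; the intended reading is the disjunction ``$0$- or $1$-inductively pierced,'' which does reduce to ``$1$-inductively pierced,'' so your appeal to Theorem \ref{thm:1piercings} is the right move in any case.
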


Using the canonical form of the neural ideal, we have fully classified 0- and 1-inductively pierced codes. For toric ideals, we have a full understanding of $0$-inductively pierced codes and a partial understanding of $1$-inductively pierced codes. However, in the big picture, this work is still in progress. One goal for further work is to completely classify $k$-inductively pierced codes using their toric ideals; another is classify them using their canonical forms. 



We end here with a large example, illustrating our results. Consider the following neural code on 17 neurons:
\begin{align*}
\C=\{&c_0=00000000000000000, 
c_1=10000000000000000,
c_2=11000000000000000,\\
&c_3=11100000000000000,
c_4=10100000000000000,
c_5=10010000000000000,\\
&c_6=10011000000000000,
c_7=00010000000000000,
c_8=00011000000000000,\\
&c_9=00011100000000000,
c_{10}=00010100000000000,
c_{11}=00010010000000000,\\
&c_{12}=00010001000000000,
c_{13}=00010001100000000,
c_{14}=00000001100000000,\\
&c_{15}=00000001110000000,
c_{16}=00000001010000000,
c_{17}=00000001011000000,\\
&c_{18}=00000001001000000,
c_{19}=00000001000000000,
c_{20}=00000001000100000,\\
&c_{21}=00000001000110000,
c_{22}=00000001000010000,
c_{23}=00000001000011000,\\
&c_{24}=00000000000010000,
c_{25}=00000000000011000,
c_{26}=00000000000010100,\\
&c_{27}=00000000000010010,
c_{28}=00000000000010001\}.
\end{align*}  
We compute the canonical form of the neural ideal $CF(J_\C$), determine the graph of the code $G(\C)$, and compute its toric ideal $I_\C$.  The toric ideal of $\C$ is
\\

$I_\C= \langle p_{c_{20}}p_{c_{24}}-p_{c_{21}},\ p_{c_{19}}p_{c_{24}}-p_{c_{22}},\ p_{c_{19}}p_{c_{25}}-p_{c_{23}},\ p_{c_{16}}p_{c_{18}}-p_{c_{23}}, 
\ p_{c_{16}}p_{c_{18}}-p_{c_{17}}p_{c_{19}},\\ p_{c_{14}}p_{c_{16}}-p_{c_{15}}p_{c_{19}},\ p_{c_{7}}p_{c_{9}}-p_{c_{8}}p_{c_{10}},\ p_{c_{7}}p_{c_{19}}-p_{c_{12}},\ p_{c_{7}}p_{c_{14}}-p_{c_{13}},\ p_{c_{1}}p_{c_{3}}-p_{c_{2}}p_{c_{4}},\\ p_{c_{1}}p_{c_{7}}-p_{c_{5}},\ p_{c_{1}}p_{c_{8}}-p_{c_6} \rangle.$\\

From these computations we see that $I_\C$ is generated by binomials of degree at most 2. Thus, from our results we know that $\C$ is not $0$-inductively pierced and is possibly $1$-inductively pierced.  The canonical form will give us more information.


The pseudo-monomials in the canonical form and the graph $G(\C)$ are listed in Appendix \ref{app:17neuroncomputation}. Since for each pair $\{i,j\}$, at most one of $x_ix_j$, $x_i(1-x_j)$, and $x_j(1-x_i)$ appears in $CF(J_\C)$ and since $G(\C)$ is a forest, by Theorem \ref{thm:01piercing}, the code $\C$ is 1-inductively pierced.  Thus, we can use the existing algorithm in \cite{SZHR11} that draws Euler diagrams with circles. The algorithm is implemented and available at \url{http://www.eulerdiagrams.org/inductivecircles.html}.  Figure \ref{fig:17neuroncodeexample} shows the input and output of the program. Note that to input the code in this program we rename each codeword to its support, where a=1, b=2, etc., omitting commas and braces.  The output of the program is a place field diagram of $\C$.  

Finally, from Algorithm \ref{alg:ordering} in Section \ref{sect:canonicalform}, we determine a drawing order for this place field diagram of $\C$ as follows: 
\begin{enumerate}[(i)]

\begin{minipage}{0.6\textwidth}
\item 1 (0-piercing in $\emptyset$), 
\item 4 (1-piercing of 1), 
\item 8 (1-piercing of 4), 
\item 13 (1-piercing of 8),
\item 5 (1-piercing of 1 in 4),
\item 6 (1-piercing of 5 in 4),
\item 9 (1-piercing of 4 in 8),
\item 10 (1-piercing of 9 in 8),
\end{minipage}
\begin{minipage}{0.55\textwidth}
\item 11 (1-piercing of 10 in 8),
\item 12 (1-piercing of 13 in 8),
\item 14 (1-piercing of 8 in 13),
\item 2 (0-piercing in 1),
\item 3 (1-piercing of 2 in 1),
\item 7 (0-piercing in 4),
\item 15 (0-piercing in 13),
\item 16 (0-piercing in 13),
\item 17 (0-piercing in 13). 
\end{minipage}
\end{enumerate}

\begin{figure}[!ht]
\begin{center}
\includegraphics[width=4in]{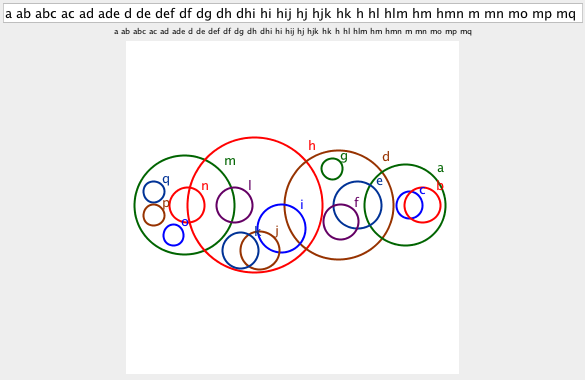}
\caption{A place field diagram of a 17-neuron code drawn using the implemented algorithm from \cite{SZHR11}. }
\label{fig:17neuroncodeexample}
\end{center}
\end{figure}

\section{Acknowledgements}  This collaboration is a result of the 2014 AMS Mathematics Research Community, ``Algebraic and Geometric Methods in Applied Discrete Mathematics," which was supported by NSF DMS-1321794. Elizabeth Gross was supported by NSF DMS-1304167.  The authors would like to thank Tim Hsu and Richard Kulbeka for their extensive comments on a preliminary version of this work. 

\newpage
\appendix
\section{\\Generators for the $n=3$ case} \label{App:AppendixA}

Below is a table of the generating sets of $I_A$, the toric ideal, for the different codes on $n=3$ neurons listed in Figure 6 of the original paper on the neural ring \cite{CICY2013}.   \\

\begin{longtable}{|c|c|}
	\hline {\bf Generators of $I_A$} & {\bf Codes}  \\ \hline
	$p_{111}-p_{100}p_{010}p_{001}$	& 	$A1$	 \\  
	$p_{110}-p_{100}p_{010}$ & \\
	$p_{101}-p_{100}p_{001}$ & \\
	$p_{011}-p_{010}p_{001}$ & \\ \hline
	
	$p_{111}-p_{010}p_{101}$ & $A2$ \\
	$p_{110}-p_{100}p_{010}$ & \\ \hline
	
	$p_{111}-p_{100}p_{010}p_{001}$ & $A3$ \\
	$p_{110} - p_{100}p_{010}$ & \\
	$p_{101} - p_{100}p_{001}$ & \\ \hline
	
	$p_{111}-p_{100}p_{011}$ & $A4$ \\
	$p_{110} - p_{100}p_{010}$ & \\
	$p_{100}p_{011}- p_{010}p_{101}$ & \\ \hline
	
    $p_{110} - p_{100}p_{010}$ & $ A5, B2, C1, F1 $\\ \hline
	
	$p_{100}p_{111} - p_{110}p_{101}$ & $A6$ \\ \hline
	
	$p_{111}-p_{010}p_{101}$ & $A7$ \\ \hline
	
	$p_{111}-p_{100}p_{010}p_{001}$ & $A8$ \\ 
	$p_{110}-p_{100}p_{010}$ & \\ \hline
	
	$p_{111}-p_{100}p_{011}$ & $A9,A16$ \\ \hline
	
    $p_{111}-p_{010}p_{101}$ & $A10$ \\
    $p_{100}p_{011}-p_{010}p_{101}$ & \\ \hline
    
	$p_{100}^2p_{011}-p_{110}p_{101}$ & $A11$ \\
	$p_{111}-p_{100}p_{011}$ & \\ \hline 
	
    $p_{111}-p_{100}p_{010}p_{001}$ & $A14$ \\ \hline

    $p_{111}^2-p_{110}p_{101}p_{011}$ & $A15$ \\ \hline    
    
    $p_{101}-p_{100}p_{001}$ & $B1$ \\
    $p_{110}-p_{100}p_{010}$ & $ $ \\ \hline
    
    $p_{100}p_{011}-p_{010}p_{101}$ & $B3$ \\ \hline
    
    $p_{011}-p_{010}p_{001}$ & $E1$ \\
    $p_{101}-p_{100}p_{001}$ & $ $ \\ 
    $p_{110}-p_{100}p_{001}$ & $ $ \\ \hline
    
    $p_{110}-p_{100}p_{010}$ & $E2$ \\ 
    $p_{100}p_{011}-p_{010}p_{101}$ & $ $ \\
    $ $ & $ $ \\ \hline

    $p_{100}^2p_{011}-p_{110}p_{101}$ & $E3$ \\ \hline

	$0$ & $A12, A13, A17, A18, A19, A20, B4, B5, B6, C2, C3, D1, E4, F2, F3, G1, H1, I1$ \\ \hline


\end{longtable} 



\newpage
\section{\\Computation for a 17-neuron code}
\label{app:17neuroncomputation}
The generators of the toric ideal of the 17-neuron code from Section \ref{sect:conclusion}:

$I_\C= \langle p_{c_{20}}p_{c_{24}}-p_{c_{21}},\ p_{c_{19}}p_{c_{24}}-p_{c_{22}},\ p_{c_{19}}p_{c_{25}}-p_{c_{23}},\ p_{c_{16}}p_{c_{18}}-p_{c_{23}},\ p_{c_{16}}p_{c_{18}}-p_{c_{17}}p_{c_{19}},\ p_{c_{14}}p_{c_{16}}-p_{c_{15}}p_{c_{19}},\ p_{c_{7}}p_{c_{9}}-p_{c_{8}}p_{c_{10}},\ p_{c_{7}}p_{c_{19}}-p_{c_{12}},\ p_{c_{7}}p_{c_{14}}-p_{c_{13}},\ p_{c_{1}}p_{c_{3}}-p_{c_{2}}p_{c_{4}},\ p_{c_{1}}p_{c_{7}}-p_{c_{5}},\ p_{c_{1}}p_{c_{8}}-p_{c_6} \rangle$.

\begin{table}
\begin{center}
\begin{tabular}{|c|c|c|c|c|}

  \hline
$x_2(1-x_1)$  	&	$x_2x_4$   	&	$x_3x_{15}$ 	&	$x_6x_{10}$	&	$x_9x_{15}$ 	\\ \hline
$x_3(1-x_1)$	&	$x_2x_5$  	&	$x_3x_{16}$ 	&	$x_6x_{11}$	&	$x_9x_16$   	\\ \hline
$x_5(1-x_4)$ 	&	$x_2x_6$   	&	$x_3x_{17}$  	&	$x_6x_{12}$	&	$x_9x_{17}$ 	\\ \hhline{|-|-|=|-|=|}
$x_6(1-x_4)$ 	&	$x_2x_7$ 	&	$x_4x_{10}$	&	$x_6x_{13}$	&	$x_{10}x_{12}$	\\ \hline
$x_7(1-x_4)$  	&	$x_2x_8$	&	$x_4x_{11}$	&	$x_6x_{14}$	&	$x_{10}x_{13}$  	\\ \hline
$x_9(1-x_8)$	&	$x_2x_9$	&	$x_4x_{12}$	&	$x_6x_{15}$	&	$x_{10}x_{14}$ 	\\ \hline
$x_{10}(1-x_8)$ 	&	$x_2x_{10}$	&	$x_4x_{13}$	&	$x_6x_{16}$ 	&	$x_{10}x_{15}$	\\ \hline
$x_{11}(1-x_8)$ 	&	$x_2x_{11}$	&	$x_4x_{14}$ 	&	$x_6x_{17}$ 	&	$x_{10}x_{16}$	\\ \hhline{|-|-|-|=|-|}
$x_{12}(1-x_8)$ 	&	$x_2x_{12}$  	&	$x_4x_{15}$ 	&	$x_7x_8$         	&	$x_{10}x_{17}$       	\\ \hhline{|-|-|-|-|=|}
$x_{14}(1-x_{13})$	&	$x_2x_{13}$	&	$x_4x_{16}$	&	$x_7x_9$           	&	$x_{11}x_{12}$    	\\ \hline
$x_{15}(1-x_{13})$	&	$x_2x_{14}$ 	&	$x_4x_{17}$      	&	$x_7x_{10}$     	&	$x_{11}x_{13}$     	\\ \hhline{|-|-|=|-|-|}
$x_{16}(1-x_{13})$	&	$x_2x_{15}$ 	&	$x_5x_7$ 	&	$x_7x_{11}$ 	&	$x_{11}x_{14}$ 	\\ \hline
$x_{17}(1-x_{13})$  	&	$x_2x_{16}$	&	$x_5x_8$	&	$x_7x_{12}$  	&	$x_{11}x_{15}$	\\ \hhline{|=|-|-|-|-|}
$x_1x_6$	&	$x_2x_{17}$      	&	$x_5x_9$	&	$x_7x_{13}$ 	&	$x_{11}x_{16}$ 	\\ \hhline{|-|=|-|-|-|}
$x_1x_7$	&	$x_3x_4$   	&	$x_5x_{10}$       	&	$x_7x_{14}$ 	&	$x_{11}x_{17}$ 	\\ \hhline{|-|-|-|-|=|}
$x_1x_8$ 	&	$x_3x_5$            	&	$x_5x_{11}$ 	&	$x_7x_{15}$ 	&	$x_{12}x_{14}$      	\\ \hline
$x_1x_9$ 	&	$x_3x_6$          	&	$x_5x_{12}$	&	$x_7x_{16}$	&	$x_{12}x_{15}$	\\ \hline
$x_1x_{10}$ 	&	$x_3x_7$	&	$x_5x_{13}$	&	$x_7x_{17}$	&	$x_{12}x_{16}$	\\ \hhline{|-|-|-|=|-|}
$x_1x_{11}$ 	&	$x_3x_8$	&	$x_5x_{14}$  	&	$x_8x_{15}$ 	&	$x_{12}x_{17}$        	\\ \hhline{|-|-|-|-|=|}t
$x_1x_{12}$	&	$x_3x_9$	&	$x_5x_{15}$	&	$x_8x_{16}$ 	&	$x_{14}x_{15}$	\\ \hline
$x_1x_{13}$ 	&	$x_3x_{10}$	&	$x_5x_{16}$ 	&	$x_8x_{17}$   	&	$x_{14}x_{16}$   	\\ \hhline{|-|-|-|=|-|}
$x_1x_{14}$	&	$x_3x_{11}$	&	$x_5x_{17}$ 	&	$x_9x_{11}$     	&	$x_{14}x_{17}$	\\ \hhline{|-|-|=|-|=|}
$x_1x_{15}$	&	$x_3x_{12}$ 	&	$x_6x_7$         	&	$x_9x_{12}$  	&	$x_{15}x_{16}$       	\\ \hline
$x_1x_{16}$	&	$x_3x_{13}$	&	$x_6x_8$   	&	$x_9x_{13}$  	&	$x_{15}x_{17}$ 	\\ \hhline{|-|-|-|-|=|}
$x_1x_{17}$ 	&	$x_3x_{14}$	&	$x_6x_9$	&	$x_9x_{14}$	&	$x_{16}x_{17}$	\\ \hhline{|=|-|-|-|-|}
\end{tabular}
\caption{Canonical form of the neural ideal for the 17-neuron code from Section \ref{sect:conclusion}.}
\end{center}
\end{table}

\begin{figure}[h]
\includegraphics[width=.5\textwidth]{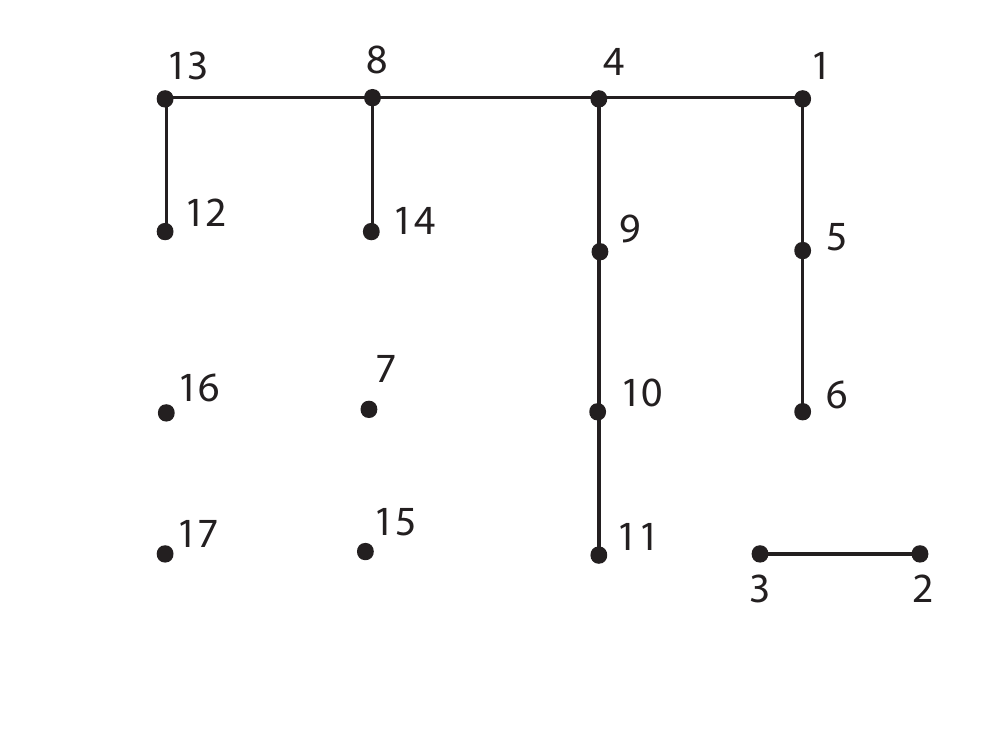}
\caption{The graph $G(C)$ of the 17-neuron code.}
\end{figure}

\newpage

\end{document}